\newtheorem{theor}{Theorem}
\newtheorem{corry}{Corollary}
\newtheorem{lem}{Lemma}
\newtheorem{clm}{Claim}
\newtheorem{defn}{Definition}
\newtheorem{exam}{Example}
\newtheorem{problem}{Problem}
\newcommand{\spara}[1]{\smallskip\noindent{\bf #1}}
\newcommand{\squishlist}{
 \begin{list}{$\bullet$}
  {  \setlength{\itemsep}{0pt}
     \setlength{\parsep}{3pt}
     \setlength{\topsep}{3pt}
     \setlength{\partopsep}{0pt}
     \setlength{\leftmargin}{2em}
     \setlength{\labelwidth}{1.5em}
     \setlength{\labelsep}{0.5em}
} }
\newcommand{\squishlisttight}{
 \begin{list}{$\bullet$}
  { \setlength{\itemsep}{0pt}
    \setlength{\parsep}{0pt}
    \setlength{\topsep}{0pt}
    \setlength{\partopsep}{0pt}
    \setlength{\leftmargin}{2em}
    \setlength{\labelwidth}{1.5em}
    \setlength{\labelsep}{0.5em}
} }
\newcommand{\squishdesc}{
 \begin{list}{}
  {  \setlength{\itemsep}{0pt}
     \setlength{\parsep}{3pt}
     \setlength{\topsep}{3pt}
     \setlength{\partopsep}{0pt}
     \setlength{\leftmargin}{1em}
     \setlength{\labelwidth}{1.5em}
     \setlength{\labelsep}{0.5em}
} }
\newcommand{\squishend}{
  \end{list}
}
\newcommand{\eat}[1]{}
\newcommand{\sharpP}{\ensuremath{\mathbf{\#P}}\xspace}
\newcounter{ccc}
\DeclareMathOperator*{\argmax}{arg\,max}
\newcommand{\bigO}{\mathcal{O}}
\pgfplotsset{width=7cm, compat=1.9, scaled y ticks=false, yticklabel style={/pgf/number format/fixed, /pgf/number format/precision=4}}
\begin{document}
\bstctlcite{IEEEexample:BSTcontrol}
\newcommand{\revise}[1]{#1}
\newcommand{\shepherd}[1]{#1}
\title{Most Probable Densest Subgraphs
\thanks{Xiangyu Ke is the corresponding author. Arijit Khan acknowledges support from the Novo Nordisk Foundation grant NNF22OC0072415. Cheng Long is supported by the Ministry of Education, Singapore, under its Academic Research Fund (Tier 1 Award (RG77/21)). Any opinions, findings, conclusions or recommendations expressed in this material are those of the author(s) and do not reflect the views of the funding agencies.}}

\author{\IEEEauthorblockN{Arkaprava Saha}
\IEEEauthorblockA{
\textit{NTU, Singapore}\\
saha0003@e.ntu.edu.sg}
\and
\IEEEauthorblockN{Xiangyu Ke}
\IEEEauthorblockA{
\textit{ZJU, China}\\
xiangyu.ke@zju.edu.cn}
\and
\IEEEauthorblockN{Arijit Khan}
\IEEEauthorblockA{
\textit{AAU, Denmark}\\
arijitk@cs.aau.dk}
\and
\IEEEauthorblockN{Cheng Long}
\IEEEauthorblockA{
\textit{NTU, Singapore}\\
c.long@ntu.edu.sg}
}

\maketitle

\begin{abstract}
Computing the densest subgraph is a primitive graph operation with critical applications in
detecting communities, events, and anomalies in biological, social, Web, and financial
networks. In this paper, we study the novel problem of Most Probable Densest Subgraph
({\sf MPDS}) discovery in uncertain graphs: Find the node set that is the {\em most likely} to
induce a densest subgraph in an uncertain graph. We further extend our problem by considering
various notions of density, e.g., clique and pattern densities, studying the top-$k$ {\sf MPDS}s,
and finding the node set with the largest containment probability
within densest subgraphs. \revise{We show that it is $\sharpP$-hard to
compute the probability of a node set inducing a densest subgraph.} We then
devise sampling-based efficient algorithms, with end-to-end accuracy guarantees, to
compute the {\sf MPDS}. Our thorough experimental results and real-world
case studies on brain and social networks validate the
effectiveness, efficiency, and usefulness of our solution.
\end{abstract}

\begin{IEEEkeywords}
uncertain graphs, densest subgraphs
\end{IEEEkeywords}

\section{Introduction}
\label{sec:intro}

The discovery of dense subgraphs has attracted extensive attention in the data management community \cite{G84, Charikar00, TsourakakisBGGT13, GT15, FYCLX19}.
They may correspond to communities \cite{DourisboureGP09}, filter bubbles and echo chambers \cite{asatani2021dense, L22} in social networks,
brain regions responding to stimuli \cite{legenstein_et_al:LIPIcs} or related to diseases \cite{wu2021extracting},
and commercial value motifs in financial domains \cite{DuJDLT09}.
They also have wide applications in graph compression and visualization \cite{BuehrerC08,ZhangP12,ZhaoT12},
indexing for reachability and distance queries \cite{CohenHKZ03,JinXRF09},
and social piggybacking \cite{GionisJLSW13}.
Densest subgraphs usually maximize some notion of density in a given graph, e.g., the {\em edge density} \cite{G84},
defined as the ratio of the number of induced edges to the number of nodes in a subgraph.
Although there are an exponential number of subgraphs, a densest subgraph can be found both exactly
and approximately in polynomial time \cite{G84, Charikar00}.
There also exist many other density metrics \cite{FaragoM19}, such as the edge ratio, edge surplus, discounted average degree, triangle density, clique density, pattern density,
etc. Their
algorithms are developed in \cite{T15, MPPTX15, YanagisawaH18, FYCLX19}.

Uncertainty is intrinsic in large graphs due to errors in measurements \cite{Aggarwal09},
edge imputation using inference and prediction models \cite{AdarR07,Liben-NowellK07},
and explicit manipulation including
privacy reasons \cite{BoldiBGT12}. An uncertain graph, where every edge is associated with a probability of existence,
is an expressive data model that has prompted a great deal of research
\cite{KYC18,KassianoGPT16,KhanC15}. Uncertain graphs are prevalent in many applications, such as biological
networks \cite{SevonEHKT06}, knowledge bases \cite{ZhangRCSWW17}, social networks \cite{ZouGL10}, influence
maximization \cite{KempeKT03}, road networks \cite{HP10}, and crowd sourcing \cite{YalavarthiKK17}.
Recently, researchers have extended several classic network problems to uncertain graphs,
e.g., nearest neighbors \cite{PBGK10},
shortest paths and centrality \cite{SahaBVKB21},
cliques \cite{MukherjeeXT17,ZouLGZ10,LiDWMQY19},
core and truss decomposition \cite{BonchiGKV14,HuangLL16},
clustering \cite{HanGXTHCH19},
and motif counting \cite{MaCLGF019}.
Surprisingly, except for maximum {\em expected edge density} \cite{Zou13, tsourakakis2020novel}, the study of densest subgraph discovery on uncertain graphs is still absent.
\begin{figure}
  \vspace{-1mm}
  \centering
  \includegraphics[scale=0.3]{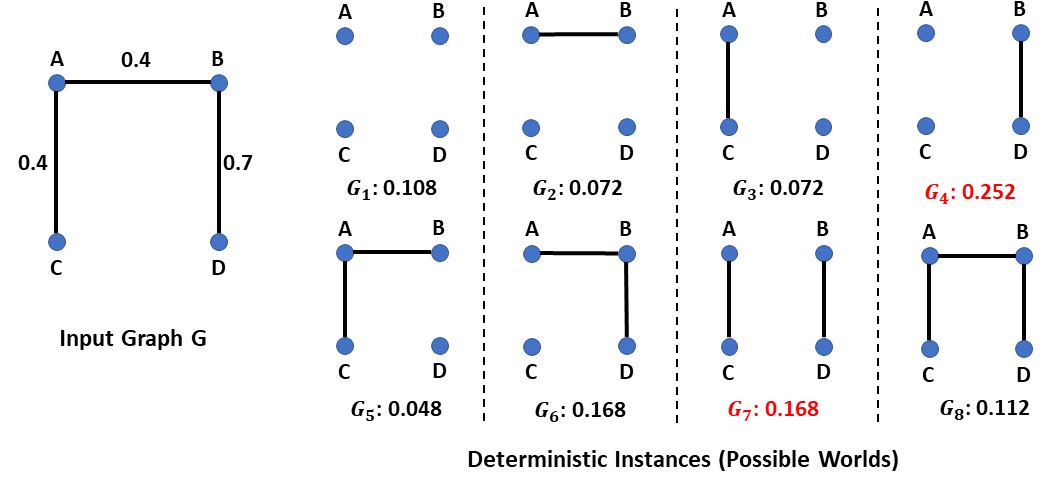}
    \vspace{-2mm}
  \captionof{figure}{\small Possible worlds of an uncertain graph and their probabilities}
  \label{fig:example_1}
    \vspace{-6mm}
\end{figure}
The expected edge density of an uncertain graph is defined as the expectation of the edge density value of a possible world (i.e., a deterministic graph)
of the uncertain graph, chosen at random \cite{Zou13}. However, a subgraph of the uncertain graph having the maximum expected edge density may induce densest subgraphs {\em only in a few} (even zero) possible worlds of that uncertain graph. Such a subgraph can be large with many low-probability
edges, or having nodes that are loosely connected (see Example~\ref{exam:expected}). This defeats the purpose of finding a densest subgraph.
Instead, many applications would require a densest subgraph with a high precision, such as being the densest with a high probability.
Specifically, given an uncertain graph $\mathcal{G}$, {\em our goal is to find the node set that is the most likely to induce a densest subgraph
in $\mathcal{G}$}, i.e., maximize the sum of the probabilities of those possible worlds of $\mathcal{G}$ in which this node set induces a densest subgraph. We refer to the uncertain subgraph induced by this node set as the {\em most probable densest subgraph} ({\sf MPDS}). To the best of our
knowledge, computing the {\sf MPDS} is a novel problem. We demonstrate real-world applications and case studies of our problem on uncertain brain (\S~\ref{sec:case_brain}) and
social (\S~\ref{sec:case_club}) networks, where our proposed {\sf MPDS} distinguishes healthy brains from those with autism and identifies meaningful communities in a social network.

\begin{table}[t]
	\scriptsize
	\centering
	\begin{center}
        \caption{\small Edge densities in possible worlds (PWs), {\em expected edge densities} (EEDs) and {\em densest subgraph probabilities} (DSPs) of node sets in the uncertain graph in Figure~\ref{fig:example_1}. The EED of a node set $U$ is the sum of the edge densities of the subgraphs induced by $U$ across all PWs, weighted by their probabilities. The DSP of $U$ is the sum of the probabilities of those PWs where $U$ induces a densest subgraph.}
		\begin{tabular} {c||c|c|c|c|c|c}
        {\bf PW:Pr.} & \{A,B\} & \{A,C\} & \{B,D\} & \{A,B,C\} & \{A,B,D\} & \{A,B,C,D\}
		\\ \hline \hline
		$G_1$:0.11 & 0 & 0 & 0 & 0 & 0 & 0 \\ \hline
		$G_2$:0.07 & {\bf 0.5} & 0 & 0 & 0.33 & 0.33 & 0.25 \\ \hline
		$G_3$:0.07 & 0 & {\bf 0.5} & 0 & 0.33 & 0 & 0.25 \\ \hline
		${\mathbf G_4}$:{\bf 0.25} & 0 & 0 & {\bf 0.5} & 0 & 0.33 & 0.25 \\ \hline
		$G_5$:0.05 & 0.5 & 0.5 & 0 & {\bf 0.67} & 0.33 & 0.5 \\ \hline
		$G_6$:0.17 & 0.5 & 0 & 0.5 & 0.33 & {\bf 0.67} & 0.5 \\ \hline
		${\mathbf G_7}$:{\bf 0.17} & 0 & {\bf 0.5} & {\bf 0.5} & 0.33 & 0.33 & {\bf 0.5} \\ \hline
		$G_8$:0.11 & 0.5 & 0.5 & 0.5 & 0.67 & 0.67 & {\bf 0.75} \\ \hline \hline
		{\bf EED} & 0.2 & 0.2 & 0.35 & 0.27 & 0.37 & {\bf 0.38} \\ \hline
		{\bf DSP} & 0.07 & 0.24  & {\bf 0.42} & 0.05  & 0.17 & 0.28
	\end{tabular}
		\vspace{-6mm}
		\label{tab:exp_density}
	\end{center}
\end{table}

\begin{exam}
Figure~\ref{fig:example_1} shows all possible worlds of an uncertain graph with their existence probabilities. It can be verified that, in each world, the connected component is also the densest subgraph. As Table~\ref{tab:exp_density} shows, the node set \{A,B,C,D\} has the maximum expected density (0.38), but it induces a densest subgraph {\em only} in possible worlds $G_7$ and $G_8$ with low existence probabilities (0.168 and 0.112). Thus, the probability of \{A,B,C,D\} inducing a densest subgraph is only 0.28. In contrast, the node set \{B,D\} has a lower expected density (0.35), but its probability of inducing a densest subgraph is much higher (0.42), since it induces a densest subgraph in possible worlds $G_4$ and $G_7$ with high existence probabilities.
\label{exam:expected}
\end{exam}

\spara{Challenges and our contributions.} We formulate and study the novel problem of {\em most probable densest subgraph} ({\sf MPDS}) discovery in uncertain graphs: Given an uncertain graph $\mathcal{G}$, find the node set that is the most likely to induce a densest subgraph in $\mathcal{G}$. Our contributions are summarized below.

\noindent$\bullet$ {\bf Novel problems:} To the best of our knowledge, the densest subgraph discovery in uncertain graphs has not been investigated before, other than expected edge density \cite{Zou13, tsourakakis2020novel}. \revise{Based on edge density, clique density, and pattern density \cite{FYCLX19}, we propose {\em densest subgraph probability} as a more sophisticated density metric.
We prove that computing the densest subgraph probability is $\sharpP$-hard.}
{\em We formulate and study the following novel problems of {\sf MPDS} discovery in uncertain graphs} (\S~\ref{sec:preliminaries}): {\sf MPDS}
based on edge density, clique density, and pattern density; for each of them, their top-$k$ variants and nucleus densest subgraph ({\sf NDS}) variants.
Real-world applications and case studies demonstrate the usefulness of our novel problems.

\noindent$\bullet$ {\bf Efficient approximate solution with end-to-end accuracy guarantees:}
\revise{In spite of the $\sharpP$-hardness of computing the densest subgraph probability, we design an efficient approximation algorithm for the top-$k$ densest subgraphs discovery, with an end-to-end accuracy guarantee.} Our solution for edge density-based {\sf MPDS} is built on independent sampling of possible worlds (e.g., via Monte-Carlo sampling) and, in each of them, efficient enumeration of all edge-densest subgraphs (via \cite{CQ20}). We provide time and space complexity analyses and theoretical accuracy guarantees of our method (\S \ref{sec:top_k_mpds_algorithm}).

\noindent$\bullet$ {\bf Extension to other density notions:} Besides edge density, our algorithm can be extended well to clique and pattern densities, while ensuring the same accuracy guarantee. We notice that, while there exist efficient algorithms to find {\em one} clique-densest and {\em one} pattern-densest subgraph in a deterministic graph \cite{FYCLX19}, the problems of enumerating {\em all} clique-densest and {\em all} pattern-densest subgraphs in a deterministic graph have not been studied earlier. However, such enumerations are required in our overall solution framework. Therefore, as additional technical contributions, we develop novel, {\em exact} algorithms for efficiently enumerating {\em all} clique-densest and {\em all} pattern-densest subgraphs in a deterministic graph (\S~\ref{sec:clique}, \ref{sec:pattern}).

\noindent$\bullet$ {\bf Practical nucleus densest subgraphs ({\sf NDS}):} In large graphs, we find that the densest subgraph probability of every possible node set may be quite small, due to the existence of many possible worlds, each having a smaller probability; and any two worlds might not have exactly the same densest subgraph. This defeats our purpose of identifying a node set that induces a {\em densest} subgraph with a {\em high probability}. In such cases, we propose to find those nodes which are most likely to form the ``nucleus'' of various densest subgraphs, i.e., whose containment probability within a densest subgraph is maximized. We develop an approximate solution and present theoretical analyses about its accuracy-efficiency trade-offs. {\em The novelty of our solution is that, by finding the maximum-sized densest subgraph in each sampled world, we reduce this problem to the \revise{closed} frequent itemset mining problem, for which efficient algorithms like \revise{\textsf{TFP} \cite{wang2005tfp}} exist} (\S~\ref{sec:nucleus}).

\noindent$\bullet$ {\bf Experiments and case studies:} Our rigorous experiments (\S~\ref{sec:exp}) show that our \textsf{MPDS} and \textsf{NDS} are different from existing notions of dense subgraphs in uncertain graphs (\S~\ref{sec:comp_others}). Also, our methods are very efficient even on large graphs (\S~\ref{sec:large_graph_results}) and return reasonably accurate results when compared to the exact methods (\S~\ref{sec:comp_exact}). Moreover, our case studies on brain (\S~\ref{sec:case_brain}) and social (\S~\ref{sec:case_club}) networks demonstrate useful real-world applications of the \textsf{MPDS}.


\section{Preliminaries}
\label{sec:preliminaries}
An uncertain graph $\mathcal{G}$ is a triple $(V,E,p)$, where $V$ is a set of $n$ nodes, $E\subseteq V\times V$ is a set of $m$ undirected unweighted edges, and the function $p: E \rightarrow (0,1]$ assigns a probability of existence to each edge. Following the bulk of the literature on uncertain graphs \cite{KYC18,PBGK10,SahaBVKB21,V79,B86,KBGN18}, we assume that the edges exist independent of each other: 
The uncertain graph $\mathcal{G}$ can be interpreted as a probability distribution over $2^{m}$ deterministic instances (possible worlds) $G=(V,E_G) \sqsubseteq \mathcal{G}$ obtained by independently sampling the edges. The probability of a possible world $G=(V,E_G)$ being observed is:
\begin{small}
\begin{align}
\Pr(G)=\prod_{e\in E_G}p(e)\prod_{e\in E \setminus E_G}(1-p(e))
\end{align}
\end{small}
%
In the rest of this section, we first revisit
several notions of {\em graph density} \cite{GT15,T15,FYCLX19}
in deterministic graphs (\S \ref{sec:density_notions}).
We next extend these notions to uncertain graphs based on the possible world semantics (\S \ref{sec:dsp}).
Then, our novel {\sf Most Probable Densest Subgraph (MPDS)} problem is formally introduced,
together with several practical variants (\S \ref{sec:problem}).
Finally, we discuss the hardness of our problem (\S \ref{sec:hardness}).
%
\subsection{Density Notions over Deterministic Graphs}
\label{sec:density_notions}
\subsubsection{Edge Density}
\label{sec:e_density}
The edge density \cite{G84} measures 
the average degree per node,
which can be used for community detection \cite{JY12, TsourakakisBGGT13} in social networks.
\begin{defn}
[Edge Density \cite{G84}]
The edge density $\rho_e$ of a deterministic graph $G = (V, E)$ is defined as:
\begin{small}
\begin{equation}
    \rho_e(G) = \frac{|E|}{|V|}
\label{eq:edge_density}
\end{equation}
\end{small}
\label{def:edge_density}
\end{defn}
\vspace{-6mm}
\subsubsection{Clique Density}
\label{sec:c_density}
An $h$-clique ($h \ge$ 2) is a complete graph of $h$ nodes. The clique density is formally defined below.
%
\begin{defn}
[$h$-Clique Density \cite{T15}]
Given an integer $h \geq 2$, the $h$-clique density $\rho_h$ of a deterministic graph $G = (V, E)$, with the number of $h$-cliques $\mu_h(G)$, is defined as:
\begin{small}
\vspace{-1mm}
\begin{equation}
    \rho_h(G) = \frac{\mu_h(G)}{|V|}
\label{eq:clique_density}
\end{equation}
\vspace{-4mm}
\end{small}
\label{def:clique_density}
\end{defn}
Notice that a 2-clique is an edge.
Thus, edge density is a special case of clique density when $h=2$.
Clique density benefits in higher-order community discovery and finding subgraphs which are large near-cliques \cite{YinBLG17,T15}.

\subsubsection{Pattern Density}
\label{sec:p_density}
Given an arbitrary pattern graph, the pattern density measures the average number of such patterns per node occurring in a subgraph.
%
\begin{defn}
[Pattern Density \cite{FYCLX19}]
Given a pattern graph $\psi$, the pattern density $\rho_\psi$ (w.r.t. $\psi$) of a deterministic graph $G = (V, E)$, with the number of $\psi$-instances $\mu_\psi(G)$, is defined as:
\begin{small}
\vspace{-1mm}
\begin{equation}
    \rho_\psi(G) = \frac{\mu_\psi(G)}{|V|}
    \label{eq:pattern_density}
\vspace{-4mm}
\end{equation}
\end{small}
\label{def:pattern_density}
\end{defn}
Clearly, clique density is a special case of pattern density when the input pattern is a clique. Pattern density
can be more expressive in real-world applications. For instance, in the LinkedIn social network, the ``employer'' nodes (e.g., companies)
cannot directly link to the ``education'' nodes (e.g., universities). Thus, ``employer'' nodes and ``education'' nodes never form a clique.
However, they can be connected via nodes representing ``employee''. A subgraph which is dense w.r.t. the
``employer-employee1-education-employee2-employer'' diamond pattern may identify a group of employees
with common work experiences and educational backgrounds.
\vspace{-4mm}
\subsection{Extending Density Notions to Uncertain Graphs}
\label{sec:dsp}
We define the probability of a node set inducing a densest subgraph in an uncertain graph using possible world semantics.
%
\begin{defn}
[Densest Subgraph Probability]
Given an uncertain graph $\mathcal{G} = (V, E, p)$ and a node set $U \subseteq V$, the densest subgraph probability of $U$,
denoted by $\tau(U)$, is the sum of the probabilities of all possible worlds where the subgraph induced by $U$ has the largest density.
Formally,
\begin{small}
\begin{equation}
\vspace{-1mm}
    \tau(U) = \sum_{G \sqsubseteq \mathcal{G}} \Pr(G) \times \mathbbm{1} \left[ \rho(G[U]) = \max_{W \subseteq V} \rho(G[W]) \right]
\label{eq:densest_subgraph_probability}
\vspace{-3mm}
\end{equation}
\end{small}
\label{def:densest_subgraph_probability}
\end{defn}
The above equation verifies, in each possible world of the uncertain graph $\mathcal{G}$, whether the node set $U$ induces a subgraph with the maximum density. $G[W]= (W, E_G[W])$ denotes the subgraph of $G$ induced by a node set $W\subseteq V$, where $E_G[W] = \{(u, v) \in E_G: u \in W, v \in W \}$. The indicator function $\mathbbm{1} \left[\cdot\right]$ returns 1 if the inner condition is true (i.e., the subgraph induced by $U$ has the largest density in $G$), and 0 otherwise. Note that the node set whose induced subgraph has the maximum edge density in $G$ may not be unique. The density metric $\rho$ can follow any of the density notions (\S~\ref{sec:density_notions}) based on the real application demand. In the following, without loss of generality, {\em densest subgraph probability} is coupled with edge density $\rho_e$ by default. For $h$-clique density $\rho_h$ and pattern density $\rho_\psi$, we refer to {\em $h$-clique densest subgraph probability} $\tau_h$ and {\em pattern densest subgraph probability} $\tau_\psi$ respectively.
\begin{figure}
  \centering
  \includegraphics[scale=0.27]{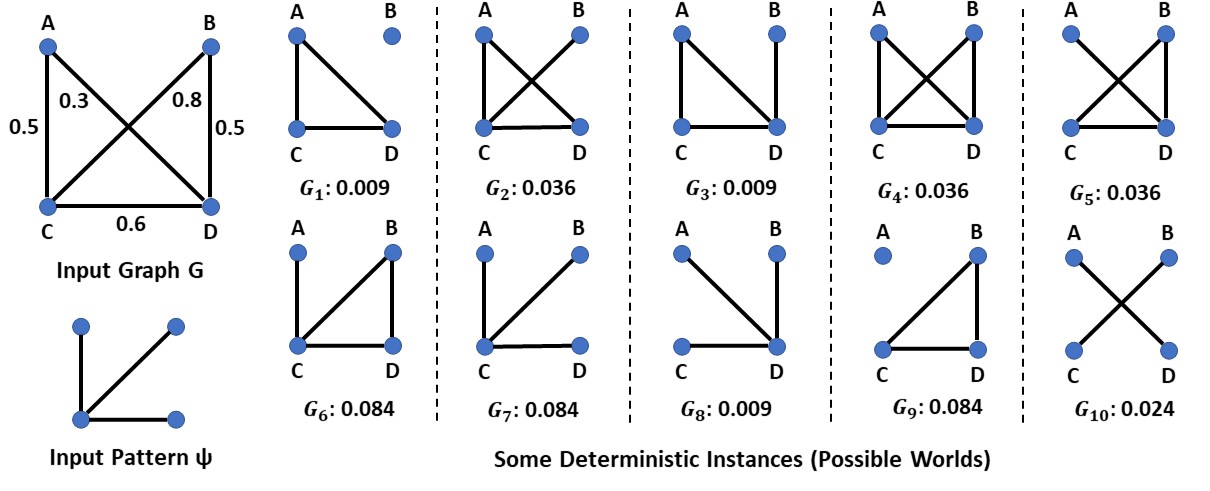}
    \vspace{-3mm}
  \captionof{figure}{\small 3-Clique and pattern densities in an uncertain graph}
  \label{fig:example_3_pattern}
    \vspace{-5mm}
\end{figure}
\begin{exam}
Figure~\ref{fig:example_3_pattern} shows an input uncertain graph and some of its possible worlds.
The subgraphs induced by the node set $\{A,C,D\}$ contain a 3-clique (i.e., triangle) only in possible worlds $G_1$, $G_2$, $G_3$, and $G_4$.
In fact, $\{A,C,D\}$ induces
3-clique densest subgraphs in possible worlds $G_1$, $G_2$, and $G_3$, thus the
3-clique densest subgraph probability of $\{A,C,D\}$ is 0.009 + 0.036 + 0.009 = 0.054.

For the pattern $\psi$, we notice that the subgraphs induced by node set $\{A,B,C,D\}$ contain $\psi$ only in possible worlds $G_2$, $G_3$, $G_4$, $G_5$, $G_6$, $G_7$, and $G_8$.
Moreover, $\{A,B,C,D\}$ induces the $\psi$-densest subgraphs in all these six possible worlds, thus the $\psi$-densest subgraph probability of $\{A,B,C,D\}$ is 0.294.
\label{exam:clique}
\end{exam}
\vspace{-6mm}
\subsection{Problem Formulations}
\label{sec:problem}
We study the novel problem of finding the node set which is the most likely to induce a densest subgraph in an uncertain graph, formally defined as follows.
\begin{problem}
[Most Probable Densest Subgraph (\textsf{MPDS})]
Given an uncertain graph $\mathcal{G} = (V, E, p)$, find the node set $U^* \subseteq V$ that has the highest densest subgraph probability. 
\begin{small}
\begin{equation}
    U^* = \argmax_{U \subseteq V} \tau(U)
\label{eq:mpds}
\vspace{-5mm}
\end{equation}
\end{small}
\label{prob:mpds}
\end{problem}
The {\sf $h$-Clique-MPDS} and {\sf Pattern-MPDS} problems can be defined analogously.
In the following, we provide two other variants of our {\sf MPDS} problem.
First, the user may be interested in exploring more possible choices besides the best node set.
Thus, we provide the top-$k$ variant below.
\begin{problem}
[Top-$k$ Most Probable Densest Subgraphs (\textsf{Top-$k$ MPDSs})]
Given an uncertain graph $\mathcal{G} = (V, E, p)$ and a positive integer $k$, find the top-$k$ distinct node sets $U_1^*, U_2^*, \ldots, U_k^*$
(where each $ U_i^* \subseteq V$, $1 \leq i \leq k$) having the highest densest subgraph probabilities. Formally,
\begin{small}
\begin{align}
   & \tau(U_i^*) \ge \tau(U_{i+1}^*) \quad \forall i \in [1, k) \nonumber \\
   & \tau(U_i^*) \ge \tau(U) \; \forall i \in [1, k] \; \text{\&} \; \forall U \subseteq V \setminus \{U_1^*, \ldots, U_k^*\}
   \label{eq:top_k_mpds}
\end{align}
\end{small}
\label{prob:top_k_mpds}
\end{problem}
\vspace{-6mm}
Second, in large graphs, we observe that the densest subgraph probability of every node set can be quite small, e.g., below $3.91 \times 10^{-5}$ in \textsf{Homo Sapiens}, \textsf{Biomine}, and \textsf{Twitter} (\S~\ref{sec:exp}). In such cases, reporting {\sf MPDS}s contradicts our goal of identifying node sets that can induce densest subgraphs {\em with high probabilities}.
Instead, we find those sets with the highest containment probabilities within a densest subgraph. Such node sets form the ``nuclei'' of various densest subgraphs across different possible worlds of the uncertain graph.
\begin{defn}
[Densest Subgraph Containment Probability]
\label{def:pr}
The densest subgraph containment probability $\gamma(U)$ of $U\subseteq V$ is the sum of the probabilities of all possible worlds $G$ of the uncertain graph $\mathcal{G}=(V,E,p)$ such that $U$ is contained in a densest subgraph of $G$. Formally,
\begin{small}
\begin{equation*}
    \gamma(U) = \sum_{G \sqsubseteq \mathcal{G}} \Pr(G) \times \mathbbm{1} \left[ \exists \, U' \supseteq U \, : \, \rho(G[U']) = \max_{W \subseteq V} \rho(G[W]) \right]
\end{equation*}
\end{small}
\end{defn}
\begin{exam}
Considering the input uncertain graph in Figure~\ref{fig:example_1}, the node set $\{B,D\}$ can induce a densest subgraph with probability 0.42 (in possible worlds $G_4$ and $G_7$). We notice that $\{B,D\}$ is also contained in densest subgraphs of other possible worlds ($G_6$ and $G_8$),
even though $\{B,D\}$ alone does not induce densest subgraphs in these worlds. The overall densest subgraph containment probability of $\{B,D\}$ is 0.7 (due to $G_4$, $G_6$, $G_7$, and $G_8$), which implies that this node set is a critical component in forming densest subgraphs.
\label{exam:nucleus}
\end{exam}

\vspace{-6mm}
\revise{
We aim to find the top-$k$ node sets having the highest densest subgraph containment probabilities. However, there are two caveats. First, a very small node set (as small as a singleton) can also have a high containment probability; however, such sets do not represent meaningful graph communities. Second, if a node set $U$ and one of its supersets $U'$ have equal containment probabilities $\gamma(\cdot)$, then it makes more sense to report $U'$ only (in order to avoid redundancy in the result). To mitigate these, we respectively impose the restrictions that all returned node sets must have a minimum specified size and must be closed w.r.t. $\gamma(\cdot)$. A node set is closed w.r.t. $\gamma(\cdot)$ if none of its supersets has the same value of $\gamma(\cdot)$. We now define our \textsf{NDS} problem.
\begin{problem}
[Top-$k$ Nucleus Densest Subgraphs (\textsf{NDS}s)]
\label{prob:cont_min}
Given an uncertain graph $\mathcal{G} = (V, E, p)$ and positive integers $k$ and $l_m$, let $\mathcal{V}_c^{\geq l_m}$ denote the set of all node sets of size at least $l_m$ that are closed w.r.t. $\gamma(\cdot)$. Find the top-$k$ closed node sets $U_1^*, U_2^*, \ldots, U_k^*$ (where $U_i^* \in \mathcal{V}_c^{\geq l_m}$, $i \in [1, k]$) having the highest densest subgraph containment probabilities. Formally,
\begin{small}
\begin{align}
    & \gamma(U_i^*) \geq \gamma(U_{i+1}^*) \quad \forall i \in [1, k) \nonumber \\
    & \gamma(U_i^*) \geq \gamma(U) \quad \forall i \in [1, k] \; \text{\&} \; \forall U \in \mathcal{V}_c^{\geq l_m} \setminus \left\{ U_1^*, \ldots, U_k^* \right\}
\end{align}
\end{small}
\end{problem}
}
Notice that the {\sf Top-$k$ MPDS} and the {\sf NDS} problems can be analogously extended to their clique and pattern versions.
\subsection{Hardness}
\label{sec:hardness}
%
%
\begin{theor}
\label{th:hard}
Computing the densest subgraph probability of a node set $U$ in an uncertain graph $\mathcal{G} = (V, E, p)$ is $\sharpP$-hard.
\end{theor}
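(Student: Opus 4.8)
The plan is to prove $\sharpP$-hardness by a polynomial-time Turing reduction from the classical $\sharpP$-complete problem of counting \emph{all} matchings of a simple graph. The crux is one structural fact about edge density: a deterministic graph $G$ has no subgraph of edge density exceeding $\tfrac12$ if and only if the edge set of $G$ is a matching (no two edges share an endpoint). Indeed, every subgraph of a matching has at most half as many edges as vertices; conversely, any vertex of degree at least two, together with two of its neighbours, induces a subgraph on three vertices with at least two edges, of density at least $\tfrac23$. Consequently ``$G$ has matching edge set'' is exactly the event ``$\max_{W}\rho_e(G[W])=\tfrac12$'' whenever $G$ has at least one edge --- precisely the sort of event that the densest-subgraph indicator of Definition~\ref{def:densest_subgraph_probability} can detect.

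Concretely, given an input graph $H_0=(V_0,E_0)$, I would build $\mathcal{G}=(V,E,p)$ as the vertex-disjoint union of (i) a copy of $H_0$ with $p(e)=\tfrac12$ for every edge and (ii) one extra edge $(u,v)$ on two fresh vertices with $p((u,v))=1$, and take the query set to be $U=\{u,v\}$. Since the two parts share no vertices, in any possible world $G\sqsubseteq\mathcal{G}$ the density of an induced subgraph meeting both parts is a convex combination of the densities within the parts, so $\max_{W\subseteq V}\rho_e(G[W])=\max\bigl\{\tfrac12,\ \rho^*\bigr\}$, where $\rho^*:=\max_{W\subseteq V_0}\rho_e(G[W])$ is the largest subgraph density inside the $H_0$-part. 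As $(u,v)$ is present in every world, $\rho_e(G[U])=\tfrac12$ always, so the indicator in~\eqref{eq:densest_subgraph_probability} equals $1$ exactly when $\rho^*\le\tfrac12$, i.e.\ by the fact above exactly when the sampled edge set of $H_0$ is a matching. Summing over worlds yields $\tau(U)=\Pr\bigl[\,G[V_0]\text{ is a matching}\,\bigr]=M(H_0)/2^{|E_0|}$, where $M(H_0)$ is the number of matchings of $H_0$. A single oracle call for $\tau$ followed by multiplication by $2^{|E_0|}$ therefore recovers $M(H_0)$ exactly; since the construction is polynomial-time and $M(\cdot)$ is $\sharpP$-hard, computing $\tau$ is $\sharpP$-hard.

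The routine parts should be the convex-combination identity for disjoint unions and the matching characterisation of bounded density. The two places needing care are ties and degeneracies: the indicator fires whenever $\rho_e(G[U])$ \emph{equals} the maximum density, which is exactly why the always-present gadget edge must be chosen with density equal to the threshold $\tfrac12$, and one must also dispose of trivial worlds (empty $H_0$-part, isolated vertices, and the convention that $\rho_e$ of the empty graph is not taken to attain the maximum). The genuine obstacle --- the step requiring an idea rather than a calculation --- is selecting a gadget and threshold so that the ``good'' possible worlds are counted by a problem already known to be $\sharpP$-hard; matchings at threshold $\tfrac12$ is one clean instantiation, and one expects analogous reductions (replacing the gadget edge by a deterministic clique and adjusting the threshold) to carry the hardness over to the $h$-clique and pattern variants $\tau_h$ and $\tau_\psi$.
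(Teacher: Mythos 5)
Your proposal is correct and is essentially the paper's own reduction: both add a fresh probability-$1$ edge on two new vertices, set all original edge probabilities to $\tfrac12$, query $U=\{v_1,v_2\}$, and observe that the indicator fires exactly when the sampled original edges form a matching, giving $\tau(U)=2^{-|E|}\cdot M(G)$. Your explicit justification of the density-$\tfrac12$ threshold (any vertex of degree two yields a three-vertex subgraph of density at least $\tfrac23$) fills in a step the paper leaves as ``it can be shown,'' but the construction and counting argument are the same.
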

\begin{proof}
We prove by a reduction from the $\sharpP$-hard problem of finding the number of matchings in a graph \cite{V79}. A matching in a deterministic graph $G = (V, E)$ is an edge
set $M \subseteq E$ without any common nodes.

Consider a deterministic graph $G = (V, E)$. This graph is transformed, by adding two new nodes $v_1$ and $v_2$ along with an edge between them, into an uncertain graph $\mathcal{G} = (V \cup \{v_1, v_2\}, E \cup \{(v_1, v_2)\}, p)$, where the probability of each edge is $0.5$, except the new edge $(v_1, v_2)$ which has probability 1. Clearly, this reduction takes $\bigO(|E|)$ time, which is polynomial in the size of $G$. It can be shown that:
    $\bullet$ Any possible world $G' \sqsubseteq \mathcal{G}$ with non-zero probability has $\Pr(G') = (0.5)^{|E|}$.
    $\bullet$ There is a bijection between the set of subsets of $E$ and the set of possible worlds of $\mathcal{G}$ with non-zero probability.
    $\bullet$ The node set $\{v_1, v_2\}$ induces a densest subgraph in a possible world iff every node has degree at most 1 in that world, i.e., the edges in the world excluding $(v_1, v_2)$ form a matching in $G$.
Thus,
\begin{small}
\begin{align*}
    &\tau(\{v_1, v_2\}) = \sum_{G' \sqsubseteq \mathcal{G}} \Pr(G') \times \mathbbm{1} \left[ \{v_1, v_2\} = \argmax_{W \subseteq V} \rho(G'[W]) \right] \\
    &= \sum_{G' \sqsubseteq \mathcal{G}} \Pr(G') \times \mathbbm{1} \left[ \text{each node has degree at most 1 in }G' \right] \\
    &= (0.5)^{|E|} \sum_{G' \sqsubseteq \mathcal{G} : \Pr(G') \neq 0}\!\!\!\!\!\! \mathbbm{1} \left[ \text{each node has degree at most 1 in }G' \right] \\
    &= (0.5)^{|E|} \sum_{M \subseteq E} \mathbbm{1} \left[ \text{M is a matching in $G$} \right]
\end{align*}
\end{small}
The sum in the last line above is the number of matchings in $G$. Thus, a solution to our problem on $\mathcal{G}$ provides a solution to the matching counting problem on $G$.
\end{proof}

Since the computation of $\tau(U)$, for a given $U$, is $\sharpP$-hard, the computations of its generalizations $\tau_h(U)$ and $\tau_\psi(U)$ are also $\sharpP$-hard.
\revise{Thus, finding the node sets with the top-$k$ densest subgraph probabilities, as well as computing the {\sf NDS}, are also very difficult.}
{\em Given such computational challenges, we design approximate algorithms, with end-to-end accuracy guarantees, to find the most probable densest subgraphs in an uncertain graph, based on various graph density notions}. 
\section{Approximate Solutions for Densest Subgraphs}
\label{sec:mpds_algo}
In this section, we develop approximation algorithms for detecting the top-$k$ \textsf{MPDS}s, along with end-to-end theoretical accuracy guarantees.
Our {\bf technical contributions} are as follows: {\bf (1)} We design {\em novel approximation methods to compute the top-$k$ \textsf{MPDS}s in an uncertain graph for all density notions: edge (\S~\ref{sec:top_k_mpds_algorithm}), clique (\S~\ref{sec:clique}), and pattern (\S~\ref{sec:pattern})}.
{\bf (2)} As building blocks of the algorithms for clique and pattern densities, we also design {\em novel algorithms to discover all clique and pattern densest subgraphs in a deterministic graph}\footnote{\revise{\scriptsize Due to the additional nodes for cliques/patterns in flow network construction, the definitions and theorems do not trivially follow  \cite{CQ20}, more details can be found in the remark, \S~\ref{sec:clique}}}\footnote{\revise{\scriptsize Our empirical study (\S~\ref{sec:allvs1}) validates that considering {\em all} densest subgraphs can significantly outperform (e.g., up to 20$\times$ in {\sf LastFM}) the method that considers only one randomly chosen densest subgraph.}}(\S~\ref{sec:clique} and \S~\ref{sec:pattern}). 
{\bf (3)}  Additionally, we use these methods to design {\em approximation algorithms, with end-to-end theoretical quality guarantees,
for computing the corresponding \textsf{NDS} (\S~\ref{sec:nucleus})}.

\vspace{-1mm}
\subsection{\textsf{Top-$k$ MPDS}: Approximate Algorithm}
\label{sec:top_k_mpds_algorithm}

\begin{algorithm}[t]
	\renewcommand{\algorithmicrequire}{\textbf{Input:}}
	\renewcommand{\algorithmicensure}{\textbf{Output:}}
	\scriptsize
	\caption{\small \textsf{Top-$k$ MPDS} estimation}
	\label{alg:exact}
	\begin{algorithmic}[1]
		\REQUIRE Uncertain graph $\mathcal{G} = (V,E,p)$, positive integer $k$, and number of samples $\theta$
		\ENSURE (Approximate) \textsf{Top-$k$ MPDS}
		\FORALL{$U \subseteq V$}
		\STATE $\widehat{\tau}(U) \gets 0$
		\ENDFOR
		\FOR{$i = 1$ \TO $\theta$}
		\STATE Sample a possible world $G \sqsubseteq \mathcal{G}$
		\STATE $S \gets $ All densest subgraphs in $G$ via \cite{CQ20} \label{line:all_densest}
		\FORALL{$U \in S$}
		\STATE $\widehat{\tau}(U) \gets \widehat{\tau}(U) + \frac{1}{\theta}$
		\ENDFOR
		\ENDFOR
		\RETURN Top-$k$ $U$'s having the highest $\widehat{\tau}(U)$ 
	\end{algorithmic}
\end{algorithm}

The proposed solution (Algorithm \ref{alg:exact}) runs $\theta$ independent
iterations as follows: Sample a possible world $G \sqsubseteq \mathcal{G}$ and find {\em all} the node sets
inducing the densest subgraphs in $G$ (Line \ref{line:all_densest}).
$\widehat{\tau}(U)$ denotes the estimated densest subgraph probability, which is computed as the average frequency that a node set $U$
induces a densest subgraph across $\theta$ rounds. Finally, we return the top-$k$ node sets having the highest $\widehat{\tau}(\cdot)$.

\vspace{-2mm}
\begin{lem}
\label{lem:unbiased}
$\widehat{\tau}(U)$ is an unbiased estimator for $\tau(U)$. Formally,
    $\mathbbm{E}[\widehat{\tau}(U)] = \tau(U)$.
\end{lem}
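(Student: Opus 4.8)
The plan is to show that $\widehat{\tau}(U)$ is a sample mean of $\theta$ i.i.d.\ indicator random variables whose common expectation is exactly $\tau(U)$, and then invoke linearity of expectation. First I would set up notation: for the $i$-th iteration of Algorithm~\ref{alg:exact}, let $G_i \sqsubseteq \mathcal{G}$ be the sampled possible world and define the Bernoulli random variable $X_i^U = \mathbbm{1}\left[ \rho(G_i[U]) = \max_{W \subseteq V} \rho(G_i[W]) \right]$, i.e., the indicator that $U$ induces a densest subgraph in $G_i$. By construction of the algorithm, $U \in S$ in iteration $i$ precisely when $X_i^U = 1$, so the update rule gives $\widehat{\tau}(U) = \frac{1}{\theta} \sum_{i=1}^{\theta} X_i^U$.

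Next I would compute $\mathbbm{E}[X_i^U]$ for a single iteration. Since the possible world $G_i$ is sampled according to the distribution $\Pr(\cdot)$ over possible worlds of $\mathcal{G}$ (Line~5, e.g.\ by independently sampling each edge $e$ with probability $p(e)$), we have
\begin{small}
\begin{equation*}
\mathbbm{E}[X_i^U] = \sum_{G \sqsubseteq \mathcal{G}} \Pr(G) \times \mathbbm{1}\left[ \rho(G[U]) = \max_{W \subseteq V} \rho(G[W]) \right] = \tau(U),
\end{equation*}
\end{small}
where the last equality is exactly Definition~\ref{def:densest_subgraph_probability}. Then, applying linearity of expectation to the sample mean,
\begin{small}
\begin{equation*}
\mathbbm{E}[\widehat{\tau}(U)] = \mathbbm{E}\!\left[ \frac{1}{\theta} \sum_{i=1}^{\theta} X_i^U \right] = \frac{1}{\theta} \sum_{i=1}^{\theta} \mathbbm{E}[X_i^U] = \frac{1}{\theta} \cdot \theta \cdot \tau(U) = \tau(U),
\end{equation*}
\end{small}
which is the claim. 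Note that independence of the $X_i^U$ across iterations is not even needed for unbiasedness (only linearity of expectation); it would matter for a concentration bound, but not here.

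The only subtle point — and the closest thing to an obstacle — is justifying that the algorithm's counter increments correctly capture $X_i^U$, namely that the enumeration procedure of \cite{CQ20} invoked in Line~\ref{line:all_densest} returns \emph{exactly} the collection of node sets inducing densest subgraphs in $G_i$ (no omissions, no spurious sets). Granting the correctness of that subroutine, the identity $\widehat{\tau}(U) = \frac{1}{\theta}\sum_i X_i^U$ holds deterministically for every $U$, and the rest is the routine expectation computation above. I would therefore state the proof compactly, flagging the reliance on the correctness of the densest-subgraph enumeration as the one external ingredient.
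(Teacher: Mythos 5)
Your proof is correct and follows essentially the same route as the paper's: both define per-iteration indicator variables $X_i(U)$, observe $\widehat{\tau}(U)=\frac{1}{\theta}\sum_i X_i(U)$ with $\mathbbm{E}[X_i(U)]=\tau(U)$, and conclude by linearity of expectation. Your extra remarks (that independence is not needed for unbiasedness, and that the identity relies on the enumeration subroutine of \cite{CQ20} being exact) are sound but do not change the argument.
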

\begin{proof}
Let $X_i(U)$ be a binary random variable denoting whether $U$ induces a densest subgraph in the $i^{th}$ possible world; thus $\widehat{\tau}(U) = \frac{1}{\theta} \sum_{i=1}^\theta X_i(U)$. Clearly, $\mathbbm{E}[X_i(U)] = \Pr \left( X_i[U] = 1 \right) = \tau(U)$, and hence $\mathbbm{E}[\widehat{\tau}(U)] = \tau(U)$.
\end{proof}
\vspace{-2mm}

The unbiasedness ensures that the estimated $\widehat{\tau}(U)$ goes closer to the true value $\tau(U)$ as the sample size $\theta$ increases.

The technique in \cite{CQ20} for computing all densest subgraphs in a deterministic graph (Line \ref{line:all_densest}) involves reducing the graph to its $\lceil \Tilde{\rho} \rceil$-core \cite{BZ03}, where $\Tilde{\rho}$ is a lower bound on the
maximum edge density $\rho_e^*$ of any subgraph. This is followed by computing $\rho_e^*$ using the
state-of-the-art Goldberg's algorithm \cite{G84}. During each iteration of its binary search, the Goldberg's algorithm tries to find a subgraph with density larger than a guessed value $\alpha$ by computing the minimum cut in a flow network parameterized by $\alpha$.
Once $\rho_e^*$ is found, all densest subgraphs are enumerated by
traversing the strongly connected components (SCCs) in the residual graph (under a maximum flow)
of the flow network with $\alpha = \rho_e^*$; the details can be found in \cite{CQ20} and in the example below.

\vspace{-2mm}
\begin{exam}
We shall compute all densest subgraphs in a possible world $G$ (Figure~\ref{fig:example_5_edge}(b)) of an uncertain graph $\mathcal{G}$ ((Figure~\ref{fig:example_5_edge}(a)). A flow network $G_{\alpha}$ (Figure~\ref{fig:example_5_edge}(c)) is constructed as follows. (1) Add a source node $s$ and a sink node $t$. (2) If an edge $(u,v)$ exists in $G$, add an edge from $u$ to $v$ and one from $v$ to $u$ in $G_{\alpha}$, both with capacity 1. (3) Add an edge from $s$ to each node $v$ in $G$ with capacity equal to the degree of $v$ in $G$. (4) Add an edge from each node in $G$ to $t$ with capacity $2\alpha$. Goldberg's algorithm \cite{G84} conducts a binary search with $[0,m]$ as the initial range of $\alpha$. In each iteration, it guesses an $\alpha$ and computes the maximum flow (minimum cut) in the flow network.
Once terminated, 
the optimal density is assigned to be $\rho^*=\alpha$.
In this example, we get $\rho^*=1$ and a densest subgraph $\{A,B,C,D\}$ (which corresponds to the minimum cut).
To find the other densest subgraphs, we create the residual graph ((Figure~\ref{fig:example_5_edge}(d)) by removing all the edges with zero residual capacity.
The densest subgraphs are enumerated by exploring the SCCs in this residual graph. In this case, we find one additional densest subgraph $\{B,C,D\}$.
\label{example:edge_density_flow}
\vspace{-1mm}
\end{exam}

\vspace{-6mm}
\spara{Space complexity.} The majority of the memory is consumed by the flow network in each iteration of
the binary search, and hence the overall space complexity is $\bigO(m + n)$.

\begin{figure}
  \centering
  \vspace{-2mm}
  \includegraphics[scale=0.55]{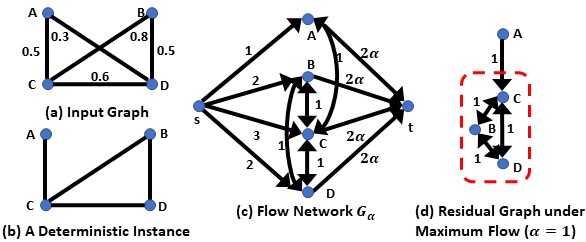}
    \vspace{-2mm}
  \captionof{figure}{\small Finding all densest subgraphs in a possible world
  }
  \label{fig:example_5_edge}
    \vspace{-6mm}
\end{figure}

\vspace{-1mm}
\spara{Time complexity.} Recall that we sample $\theta$ possible worlds. The computation of the $\lceil \Tilde{\rho} \rceil$-core of each possible world takes $\bigO(m)$ time \cite{BZ03}. Let $n_c$ and $m_c$ denote the number of nodes and edges respectively in the $\lceil \Tilde{\rho} \rceil$-core of a sampled possible world $G \sqsubseteq \mathcal{G}$. As shown in \cite{CQ20},
the computation of $\rho_e^*$ takes $\bigO \left( n_c m_c \log \left( \frac{n_c^2}{m_c} \right) \right)$ time,
while the enumeration of the densest subgraphs takes $\bigO(L)$ time per subgraph, where $L$ is the
number of nodes in that densest subgraph (note that $L \leq n$). Denoting respectively by $n_c^*$, $m_c^*$, $d^*$,
and $L^*$ the maximum number of nodes, edges, densest subgraphs, and nodes in a densest subgraph
in any possible world, the overall time complexity of our algorithm is
$\bigO \left( \theta \left( m + n_c^* m_c^* \log \left( \frac{n_c^{*2}}{m_c^*} \right) + d^*L^* \right) \right)$.
Practically, $n_c^* \ll n$, $m_c^* \ll m$ and $L^* \ll n$, as validated in our experiments (\S~\ref{sec:exp}).
For instance, $\frac{n_c^*}{n} = 2.45 \times 10^{-4}$, $\frac{L^*}{n} = 1.8 \times 10^{-4}$,
$\frac{m_c^*}{m} = 5.43 \times 10^{-3}$, and $d^* = 1$
in our large \textsf{Twitter} dataset, making Algorithm~\ref{alg:exact} efficient even with large-scale graphs.



\spara{Accuracy guarantee.}
We theoretically analyze the sample size $\theta$ to return the true top-$k$ node sets with a high probability.
First, the true top-$k$ node sets are proved likely to be included in the candidate node sets\footnote{\scriptsize{A node set $U$ is called a candidate node set if its estimated densest subgraph probability $\widehat{\tau}(U)$ after
$\theta$ rounds is non-zero, i.e., $U$ induces a densest subgraph in at least one of $\theta$ possible world samples of $\mathcal{G}$.}}
after $\theta$ rounds (Theorem~\ref{th:cv}). Second, we ensure that those sets indeed have the $k$ largest {\em estimated} densest subgraph probabilities (Theorem~\ref{th:topk}).
\begin{theor}
Let $V_1, \ldots, V_{k}$ denote the true top-$k$ node sets having the highest densest subgraph probabilities, and let $CV$ denote the set of candidate node sets after $\theta$ rounds. Then,
\vspace{-1mm}
\begin{small}
\begin{equation}
    \Pr \left( \left\{V_1, \ldots, V_k \right\} \subseteq CV \right) \geq 1 - \sum_{i=1}^k \left( 1 - \tau(V_i) \right)^\theta
    \label{eq:cv}
    \vspace{-4mm}
\end{equation}
\end{small}
\label{th:cv}
\end{theor}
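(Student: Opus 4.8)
The plan is to bound the failure probability directly, using the fact that each of the $\theta$ sampled possible worlds is drawn independently according to the distribution $\Pr(\cdot)$ on possible worlds of $\mathcal{G}$. Fix one of the true top-$k$ node sets $V_i$. By Definition~\ref{def:densest_subgraph_probability}, in a single sampled world the probability that $V_i$ induces a densest subgraph is exactly $\tau(V_i)$; hence the probability that $V_i$ fails to induce a densest subgraph in that world is $1 - \tau(V_i)$. Since the $\theta$ samples are independent, the probability that $V_i$ induces a densest subgraph in \emph{none} of the $\theta$ rounds is $(1 - \tau(V_i))^\theta$. In that and only that event, $\widehat{\tau}(V_i) = 0$ after $\theta$ rounds, i.e., $V_i \notin CV$ (by the footnote definition of a candidate node set). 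So $\Pr(V_i \notin CV) = (1 - \tau(V_i))^\theta$.

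Next I would assemble these per-set bounds via a union bound. The event $\{V_1, \ldots, V_k\} \not\subseteq CV$ is exactly $\bigcup_{i=1}^k \{V_i \notin CV\}$, so
\begin{small}
\begin{equation*}
\Pr\left(\{V_1, \ldots, V_k\} \not\subseteq CV\right) \le \sum_{i=1}^k \Pr(V_i \notin CV) = \sum_{i=1}^k (1 - \tau(V_i))^\theta .
\end{equation*}
\end{small}
Taking complements yields the claimed inequality~\eqref{eq:cv}. Note the reasoning only requires $\tau(V_i) > 0$ for the bound to be meaningful (if some $\tau(V_i) = 0$ the corresponding term is $1$ and the statement is vacuous for that index, which is consistent with the bound as written).

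There is no genuinely hard step here: the argument is a one-line independence computation per set followed by a union bound, and it does not use any property of the top-$k$ sets beyond their membership in the list $V_1, \dots, V_k$ (indeed the same statement holds for any fixed collection of $k$ node sets). The only point that needs a little care is the equivalence ``$V_i \in CV$ iff $\widehat{\tau}(V_i) > 0$ iff $V_i$ induces a densest subgraph in at least one sampled world,'' which is immediate from how Algorithm~\ref{alg:exact} updates $\widehat{\tau}$ (it adds $1/\theta$ precisely for the node sets in $S$, the densest subgraphs of the sampled world) together with the footnote definition of a candidate node set. If one wanted a cleaner bound one could observe $(1-\tau(V_i))^\theta \le e^{-\theta \tau(V_i)}$ and hence $\Pr(\{V_1,\dots,V_k\}\subseteq CV) \ge 1 - k\,e^{-\theta\,\tau_{\min}}$ where $\tau_{\min} = \min_i \tau(V_i)$, which is the form actually useful for choosing $\theta$; but the stated inequality is the exact, unsimplified version.
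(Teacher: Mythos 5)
Your proposal is correct and follows essentially the same route as the paper's proof: compute $\Pr(V_i \notin CV) = (1-\tau(V_i))^\theta$ from the independence of the $\theta$ samples, then apply the union bound and take complements. The extra remarks (the explicit equivalence between candidacy and $\widehat{\tau}(V_i)>0$, and the $e^{-\theta\tau_{\min}}$ simplification) are fine but not needed.
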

\begin{proof} \renewcommand\qedsymbol{}
Since we compute all densest subgraphs in each of $\theta$ independently sampled possible worlds, $\Pr \left( V_i \notin CV \right) = \left( 1 - \tau(V_i) \right)^\theta \, \forall i \in \{1, \ldots, k\}$. From the union bound,
\begin{footnotesize}
\begin{align*}
    &\Pr \left( \left\{V_1, \ldots, V_k \right\} \subseteq CV \right)
    = 1 - \Pr \left( \exists{i \in \{1, \ldots, k\}} \, : \, V_i \notin CV \right) \\
    & \geq 1 - \sum_{i=1}^k \Pr \left( V_i \notin CV \right) = 1 - \sum_{i=1}^k \left( 1 - \tau(V_i) \right)^\theta
\end{align*}
\end{footnotesize}
\end{proof}
\vspace{-4mm}
From Theorem \ref{th:cv}, if the densest subgraph probabilities of the true top-$k$ node sets and the sample size $\theta$ are reasonably large, they are all highly likely to be included in the candidate node set, which is necessary for all of them to be returned.
\begin{theor}
Let $V_1, \ldots, V_{k+1}$ denote the true top-$(k+1)$ node sets having the highest densest subgraph probabilities, and let $CV$ denote the set of candidate node sets after $\theta$ rounds. Define $mid = \frac{1}{2} \left[ \tau\left(V_k\right) + \tau\left(V_{k+1}\right) \right]$ and
\begin{small}
\begin{equation}
    d_U =
    \begin{cases}
    \tau(U) - mid & \text{if} \, \, U \in \{V_1, \ldots, V_k\} \\
    mid - \tau(U) & \text{otherwise}
    \end{cases}
\end{equation}
\end{small}
Then, the probability that $V_1, \ldots, V_k$ are returned by Algorithm~\ref{alg:exact} is at least
\begin{small}
\begin{equation}
    \small \left[ 1 - \sum_{i=1}^k \left( 1 - \tau(V_i) \right)^\theta \right] \left[ 1 - \sum_{U \in CV} \exp \left( -2d_U^2\theta \right) \right] \label{eq:prf}
\end{equation}
\end{small}
\label{th:topk}
\end{theor}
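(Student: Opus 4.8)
The plan is to split the event ``Algorithm~\ref{alg:exact} returns exactly $V_1,\dots,V_k$'' into two pieces that I can bound separately and then multiply. The first piece is the event $E_1$ that all of the true top-$k$ sets land in the candidate set $CV$; by Theorem~\ref{th:cv}, $\Pr(E_1)\ge 1-\sum_{i=1}^k\bigl(1-\tau(V_i)\bigr)^\theta$, which is exactly the first bracket of \eqref{eq:prf}. The second piece, conditioned on $E_1$, is that the empirical values $\widehat\tau(\cdot)$ rank the $V_i$'s above every other candidate. The heart of the argument is the observation that $mid$ acts as a separating threshold: if $\widehat\tau(V_i)\ge mid$ for every $i\in[1,k]$ and $\widehat\tau(U)<mid$ for every $U\in CV\setminus\{V_1,\dots,V_k\}$, then -- since every $U\notin CV$ has $\widehat\tau(U)=0<mid$ -- the $k$ sets with the largest $\widehat\tau(\cdot)$ are precisely $V_1,\dots,V_k$, which is the output of the algorithm.

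To bound the probability that this separation fails, I would use that, as in the proof of Lemma~\ref{lem:unbiased}, $\widehat\tau(U)=\tfrac1\theta\sum_{i=1}^\theta X_i(U)$ is the average of $\theta$ i.i.d.\ Bernoulli$(\tau(U))$ variables, so one-sided Hoeffding gives $\Pr\bigl(\widehat\tau(U)-\tau(U)\le -t\bigr)\le \exp(-2\theta t^2)$ and $\Pr\bigl(\widehat\tau(U)-\tau(U)\ge t\bigr)\le \exp(-2\theta t^2)$ for all $t>0$. The number $d_U$ is exactly the slack from $\tau(U)$ to the threshold: for $U\in\{V_1,\dots,V_k\}$ we have $\tau(U)\ge\tau(V_k)\ge mid$, hence $mid=\tau(U)-d_U$ and $\Pr(\widehat\tau(U)<mid)\le\exp(-2\theta d_U^2)$; for $U\notin\{V_1,\dots,V_k\}$ we have $\tau(U)\le\tau(V_{k+1})\le mid$, hence $mid=\tau(U)+d_U$ and $\Pr(\widehat\tau(U)\ge mid)\le\exp(-2\theta d_U^2)$. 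A union bound over the candidate sets $U\in CV$ then shows that, given $E_1$, the separation holds with probability at least $1-\sum_{U\in CV}\exp(-2\theta d_U^2)$, the second bracket of \eqref{eq:prf}. Multiplying the two estimates finishes the proof.

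I expect the main obstacle to be the bookkeeping around the random set $CV$: both the conditioning event $E_1$ and the union bound over $U\in CV$ refer to $CV$, which is itself determined by the $\theta$ samples, so the conditioning is not on an event independent of the $X_i(\cdot)$'s. The way I would make this rigorous is to note that the Hoeffding estimate for a fixed node set holds unconditionally, perform the union bound over all node sets that can appear as a densest subgraph in some possible world (discarding the many that have $\widehat\tau\equiv 0<mid$ deterministically), and only then recombine with $\Pr(E_1)$ -- using that ``$\widehat\tau(V_i)\ge mid$ for all $i$'' already forces $V_i\in CV$ once $mid>0$. A minor point worth flagging up front is that the bound is informative only under a gap $\tau(V_k)>\tau(V_{k+1})$; otherwise $d_{V_k}=d_{V_{k+1}}=\tfrac12\bigl(\tau(V_k)-\tau(V_{k+1})\bigr)=0$ and the second factor degenerates to a trivial bound.
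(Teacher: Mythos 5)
Your proposal follows essentially the same route as the paper's proof: decompose into the event $\{V_1,\ldots,V_k\}\subseteq CV$ (bounded via Theorem~\ref{th:cv}) and the separation event at the threshold $mid$, then apply the union bound and Hoeffding's inequality with the slacks $d_U$ and multiply the two factors. Your additional remarks about the randomness of $CV$ and the degenerate case $\tau(V_k)=\tau(V_{k+1})$ are careful observations that the paper's own proof glosses over, but they do not change the argument.
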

\vspace{-6mm}
\begin{proof}
Let $CV'=CV \setminus \{V_1, \ldots, V_k\}$. We have:
\begin{footnotesize}
\begin{align}
    &\Pr(V_1, \ldots V_k \text{ are returned}) \geq \Pr(\{V_1, \ldots, V_k\} \subseteq CV) \times \nonumber \\
    &\quad \Pr \left( \left(\bigwedge_{U \in \{V_1, \ldots, V_k\}} \!\!\!\!\!\! \widehat{\tau}(U) > mid  \right)\wedge \left(\bigwedge_{U \in CV'} \!\!\!\!\!\! \widehat{\tau}(U) < mid \right) \right)\label{eq:pr1}
\end{align}
\end{footnotesize}
Now, using the union bound and Hoeffding's inequality,
\begin{footnotesize}
\begin{align}
    &\Pr \left( \left(\bigwedge_{U \in \{V_1, \ldots, V_k\}} \widehat{\tau}(U) > mid \right)\wedge \left(\bigwedge_{U \in CV'} \widehat{\tau}(U) < mid \right) \right) \nonumber \\
    &= 1 - \Pr \left( \left(\bigvee_{U \in \{V_1, \ldots, V_k\}} \!\!\! \widehat{\tau}(U) \leq mid \right) \vee \left(\bigvee_{U \in CV'}  \widehat{\tau}(U) \geq mid \right)\right) \nonumber \\
    &\geq 1 - \sum_{U \in \{V_1, \ldots, V_k\}}  \Pr \left( \widehat{\tau}(U) \leq mid \right) - \sum_{U \in CV'} \!\! \Pr \left( \widehat{\tau}(U) \geq mid \right) \nonumber \\
    &= 1 - \!\!\!\! \sum_{U \in \{V_1, \ldots, V_k\}} \!\!\!\!\!\!\!\!\!\!\!\!\! \Pr \left( \widehat{\tau}(U) - \tau(U) \leq -d_U \right) - \!\! \sum_{U \in CV'} \!\!\!\!\! \Pr \left( \widehat{\tau}(U) - \tau(U) \geq d_U \right) \nonumber \\
    &\geq 1 - \sum_{U \in CV} \exp \left( -2d_U^2\theta \right) \label{eq:pr2}
\end{align}
\end{footnotesize}
Finally, plugging \eqref{eq:cv} and \eqref{eq:pr2} into \eqref{eq:pr1}, we obtain \eqref{eq:prf}.
\end{proof}
%

From Theorem \ref{th:topk}, if the densest subgraph probabilities of the true top-$k$ node sets are reasonably large
in contrast to the others
and if the sample size $\theta$ is sufficiently large, the true top-$k$ sets are returned by Algorithm \ref{alg:exact} with a high probability.

\vspace{-1mm}
\spara{Remarks.} \textbf{(1)} Notice that the algorithmic framework and accuracy guarantees can be easily adapted to solve the top-$k$ {\sf Clique-MPDS} (resp. {\sf Pattern-MPDS}) problems. However, we need to develop an efficient algorithm for detecting {\em all} clique (resp. pattern)-densest subgraphs in each sampled possible world (Line \ref{line:all_densest} of Algorithm \ref{alg:exact}), which is our {\em novel technical contribution} in \S~\ref{sec:clique} (resp. \S~\ref{sec:pattern}).
\textbf{(2)} 
The analyses in this section are based on the assumption that we use {\em Monte Carlo} to sample possible worlds. There also exist other sampling techniques such as {\em  Lazy Propagation} \cite{LFZT17} and {\em Recursive Stratified Sampling} \cite{RSS16}.
We empirically show that, for our problem, these three sampling strategies result in {\em similar sample sizes} $\theta$ and have {\em comparable running times}, while {\em Monte Carlo} consumes {\em much less memory} (\S~\ref{sec:large_graph_results}).

\subsection{\textsf{$h$-Clique-MPDS}: Approximate Algorithm}
\label{sec:clique}
Inspired by \cite{CQ20}, we develop a novel, exact, and efficient solution to discover {\em all} clique-densest subgraphs in a deterministic graph (Algorithm \ref{alg:clique_densest}). {\em This is a novel problem, and
no existing work has
studied it. Therefore, Algorithm \ref{alg:clique_densest} is one of our novel technical contributions}.
In the following, we first revisit the concepts of clique degree (Definition \ref{def:clique_deg}) and clique-based core (Definition \ref{def:core}) in deterministic graphs. Then, we illustrate the technical details and the intuitions of our algorithm, together with a running example. Finally, we provide theoretical analyses about its efficiency and correctness.

\begin{algorithm}[t]
	\renewcommand{\algorithmicrequire}{\textbf{Input:}}
	\renewcommand{\algorithmicensure}{\textbf{Output:}}
	\scriptsize
	\caption{\small Find all clique-densest subgraphs} \label{alg:clique_densest}
	\begin{algorithmic}[1]
		\REQUIRE Deterministic graph $G = (V,E)$, positive integer $h$
		\ENSURE All $h$-clique densest subgraphs in $G$
		\STATE $\Tilde{\rho} \gets $ Density returned by the peeling method \cite{T15, FYCLX19} \label{line:lb}
		\STATE $G_c \gets (\lceil \Tilde{\rho} \rceil, h)$-core of $G$ \cite{FYCLX19} \label{line:core}
		\STATE $\Lambda \gets $ All $(h-1)$-cliques contained in $h$-cliques in $G_c$ \cite{DBS18} \label{line:all_instance}
		\STATE $\rho_h^* \gets \max_{S \subseteq V} \rho_h(S)$ \cite{SDCS20} \label{line:val}
		\STATE $\mathcal{H} \gets $ Algorithm \ref{alg:flow} ($G_c$, $\Lambda$, $\rho_h^*$) \cite{MPPTX15, FYCLX19} \label{line:begin_enum_all}
		\STATE $f^* \gets $ Maximum flow in $\mathcal{H}$ \label{line:max_flow}
		\STATE $\mathcal{C} \gets $ SCCs of the residual graph $\mathcal{H}_{f^*}$, excluding those of $s$ and $t$ \label{line:scc}
		\RETURN Algorithm \ref{alg:enum} ($\emptyset$, $\mathcal{C}$, $V$) \label{line:end_enum_all}
	\end{algorithmic}
\end{algorithm}

\vspace{-1mm}
\begin{defn}
[$h$-Clique Degree \cite{FYCLX19}]
\label{def:clique_deg}
The $h$-clique degree ($h \geq 2$) of a node $v$ in a deterministic graph $G$, denoted by $deg_G(v, h)$, is the number of $h$-cliques in $G$ containing $v$.
\end{defn}

\vspace{-2mm}
\begin{defn}
[$(k, h)$-Core \cite{FYCLX19}]
\label{def:core}
Given a deterministic graph $G$ and two integers $h \geq 2$ and $k \geq 0$, the $(k, h)$-core of $G$, denoted by $\mathcal{R}_k$, is the largest subgraph of $G$ such that, for every node $v$ in $\mathcal{R}_k$, $deg_{\mathcal{R}_k} (v, h) \geq k$.
\end{defn}
\vspace{-2mm}

Armed with these definitions, we proceed to the details of Algorithm \ref{alg:clique_densest}, which consists of two general steps: {\bf (1)} The technique in \cite{SDCS20} is applied to compute the maximum density of any subgraph of $G$. {\bf (2)} A flow network $\mathcal{H}$ is constructed following \cite{MPPTX15, CQ20}. The SCCs of the residual graph under a maximum flow in $\mathcal{H}$ indicate all densest subgraphs of $G$.

In Line \ref{line:lb}, it runs the peeling method of \cite{T15}, which iteratively removes the node with the smallest $h$-clique degree and returns the maximum density among all the resultant subgraphs, denoted by $\Tilde{\rho}$. Then, in Line \ref{line:core}, it replaces $G$ with its $(\lceil \Tilde{\rho} \rceil, h)$-core, i.e., the subgraph induced by those nodes which have $h$-clique degree at least $\Tilde{\rho}$. After that, Line \ref{line:all_instance} computes the set $\Lambda$ of all $(h-1)$-cliques contained in $h$-cliques in $G$, which are enumerated using the method in \cite{DBS18}. Line \ref{line:val} computes $\rho_h^*$, the maximum $h$-clique density of any subgraph, by the method in \cite{SDCS20}, which iteratively computes a (predicted) clique-densest subgraph via optimizing a convex program, till the computed subgraph is deemed to be indeed clique-densest.
After that, the clique density of the computed subgraph is returned.

Next, Line \ref{line:begin_enum_all} constructs a flow network $\mathcal{H}$ following \cite{MPPTX15, FYCLX19}, which contains one node for each $(h-1)$-clique in $\Lambda$ and one for each node in $V$, in addition to a source node $s$ and a sink node $t$ (see Appendix \ref{sec:clique_proof} for the pseudocode). Once $\mathcal{H}$ is constructed, Algorithm \ref{alg:clique_densest} computes a maximum flow $f^*$ in $\mathcal{H}$ (Line \ref{line:max_flow}) and then identifies the strongly connected components (SCCs) of the residual graph $\mathcal{H}_{f^*}$ of $\mathcal{H}$ under $f^*$ (Line \ref{line:scc}) after removing the edges with zero residual capacity.

\begin{algorithm}[t]
	\renewcommand{\algorithmicrequire}{\textbf{Input:}}
	\renewcommand{\algorithmicensure}{\textbf{Output:}}
	\scriptsize
	\caption{\small Enumerate all clique densest subgraphs} \label{alg:enum}
	\begin{algorithmic}[1]
		\REQUIRE Component sets $\mathcal{C}_1$ and $\mathcal{C}_2$, node set $V$
		\ENSURE All clique-densest subgraphs
		\STATE $R \gets \emptyset$
		\IF{$\mathcal{C}_1 \neq \emptyset$}
		\STATE $R \gets R \cup \left( \bigcup_{C \in \mathcal{C}_1 \cup des(\mathcal{C}_1)} C \cap V \right)$
		\ENDIF
		\FORALL{$C \in \mathcal{C}_2$}
		\IF{$C \cap V \neq \emptyset$}
		\STATE $\mathcal{C}_2 \gets \mathcal{C}_2 \setminus \{C\}$ \label{line:rem}
		\STATE $S \gets $ Algorithm \ref{alg:enum} ($\mathcal{C}_1 \cup \{C\}$, $\mathcal{C}_2 \setminus (des(C) \cup anc(C))$, $V$) \label{line:rec}
		\STATE $R \gets R \cup S$
		\ENDIF
		\ENDFOR
		\RETURN $R$
	\end{algorithmic}
\end{algorithm}

Finally, in Line \ref{line:end_enum_all},
Algorithm \ref{alg:enum} enumerates one densest subgraph of $G$ for every value of $C \cup des(C)$, where $C$ is an SCC of $\mathcal{H}_{f^*}$, and the set $des(C)$ (resp. $anc(C)$) denotes the set of SCCs having a directed path from (resp. to) $C$ in the SCC graph of $\mathcal{H}_{f^*}$. The detailed theoretical analyses for Algorithms \ref{alg:clique_densest} and \ref{alg:enum} are given in Appendix \ref{sec:clique_proof}.

\begin{figure}
  \centering
  \includegraphics[scale=0.32]{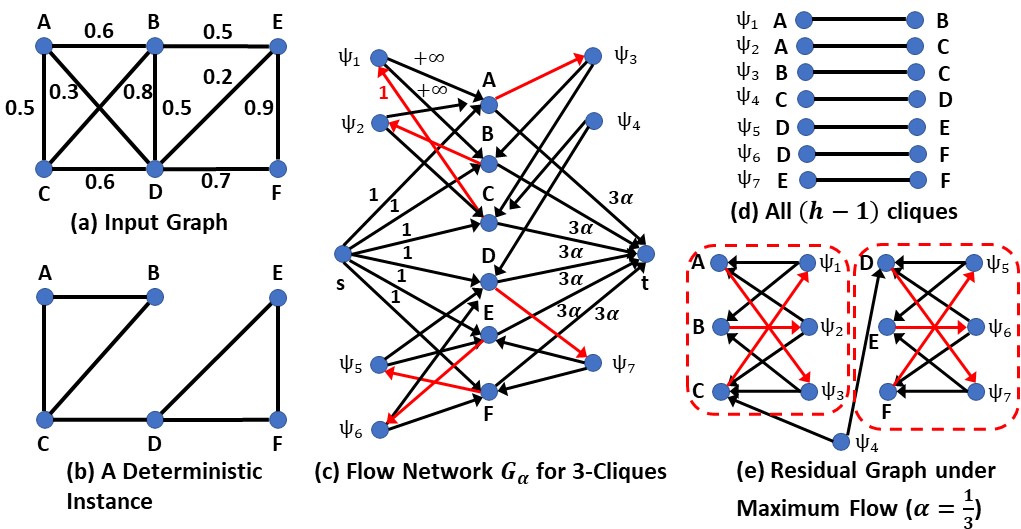}
   \vspace{-1mm}
  \captionof{figure}{\small Finding all $3$-clique densest subgraphs in a possible world
  }
  \label{fig:example_6_clique}
    \vspace{-5mm}
\end{figure}

\vspace{-2mm}
\begin{exam}
We shall compute all $h$-clique densest subgraphs ($h = 3$) in a possible world $G$ (Figure~\ref{fig:example_6_clique}(b)) of an uncertain graph $\mathcal{G}$ ((Figure~\ref{fig:example_6_clique}(a)). The maximum 3-clique density $\rho_h^*$ of any subgraph of $G$ can be easily computed as $\frac{1}{3}$. A flow network $G_{\alpha}$ is constructed as in Figure~\ref{fig:example_6_clique}(c). Instead of directly adding edges between nodes as for edge density (Example~\ref{example:edge_density_flow}), a new set of nodes representing the $(h-1)$-cliques is added. A $(h-1)$-clique node $\psi_i$ has a directed edge to each node contained in this $(h-1)$-clique, with infinite capacity. If a node $v$ forms an $h$-clique with the $(h-1)$-clique $\psi_i$, a directed edge from $v$ to $\psi_i$ is added with capacity 1. For simplicity, we only show the capacities of the edges entering and leaving $\psi_1$ in Figure~\ref{fig:example_6_clique}(c). In this example, the $(h-1)$-cliques are all edges in the possible world, as shown in Figure~\ref{fig:example_6_clique}(d). After plugging in $\rho_h^* = \frac{1}{3}$ and computing the maximum flow, we identify a 3-clique densest subgraph $\{A,B,C,D,E,F\}$, and obtain a residual graph as shown in Figure~\ref{fig:example_6_clique}(e). The remaining 3-clique densest subgraphs are $\{A,B,C\}$ and $\{D,E,F\}$. Each of them corresponds to an SCC of the residual graph.
\label{exam:clique_densest}
\end{exam}

\vspace{-1mm}
\spara{Remark.} Our Algorithm \ref{alg:clique_densest} for computing all clique-densest subgraphs has similarities
to the method of computing all edge-densest subgraphs in \cite{CQ20}.
\revise{{\em However, there are also major differences from \cite{CQ20}:} 
As demonstrated in Example \ref{exam:clique_densest}, our flow network $\mathcal{H}$ has one additional node for each $(h-1)$-clique in $\Lambda$ (in addition to one for each node in $V$ as in \cite{CQ20}), and the edges are from nodes in $V$ to nodes in $\Lambda$ (unlike \cite{CQ20}, where the edges only exist between nodes in $V$). Thus, some non-trivial additions\footnote{\revise{\scriptsize In Appendix \ref{sec:clique_proof}, Definitions~\ref{def:ic}-\ref{def:dcc} and the proofs of Lemmas~\ref{lem:ic_dcc}-\ref{lem:dcc_densest} are different from their counterparts in \cite{CQ20}. Moreover, Lemma~\ref{lem:tc} is newly derived and serves as a critical reason for not considering SCCs of the sink node.}} need to be incorporated into the definitions and proofs in \cite{CQ20} to prove the correctness of our Algorithms \ref{alg:clique_densest} and \ref{alg:enum}. {\em This forms one of our novel technical contributions}.}

\vspace{-1mm}
\spara{Space complexity.} Let $G_c$ be the $(\lceil \Tilde{\rho} \rceil, h)$-core in Lines \ref{line:lb}-\ref{line:core} of Algorithm \ref{alg:clique_densest}, with $n_c$ and $m_c$ the corresponding node and edge counts respectively. As each $h$-clique of $G_c$ contains $h$ distinct $(h-1)$-cliques, $|\Lambda| = \bigO \left( h \mu_h(G_c) \right)$, where $\mu_h(G_c)$ is the number of $h$-cliques in $G_c$. Thus, the number of nodes in $\mathcal{H}$ is $\bigO \left( n_c + h \mu_h(G_c) \right)$ and the number of edges is $\bigO \left( n_c + \sum_{v \in V} deg_{G_c}(v, h) + (h - 1) h \mu_h(G_c) \right) = \bigO \left( n_c + h^2 \mu_h(G_c) \right)$.
Since these consume the most memory, the space complexity of Algorithm \ref{alg:clique_densest} is $\bigO \left( n_c + h^2 \mu_h(G_c) \right)$.

When we adapt Algorithm \ref{alg:exact} for \textsf{$h$-Clique-MPDS}, in addition to the memory required for storing the uncertain graph,
the majority of the memory is consumed by Line \ref{line:all_densest}, which invokes Algorithm \ref{alg:clique_densest}. Denoting by $\mu_h^*$ the maximum number of $h$-cliques in the $(\lceil \Tilde{\rho} \rceil, h)$-core of any possible world of $\mathcal{G}$, the overall space complexity of our method is $\bigO \left( m + n + h^2 \mu_h^* \right)$.

\vspace{-1mm}
\spara{Time complexity.} For Lines \ref{line:lb}-\ref{line:all_instance} in Algorithm \ref{alg:clique_densest}, the major step is enumerating all $h$-cliques in $G$, which takes $\bigO \left(hm \left( \frac{1}{2} \cdot \mu_h(G) \right)^{h-2} \right)$ time, where $\mu_h(G)$ is the number of $h$-cliques in $G$ \cite{DBS18}. 
%
%
If $T$ denotes the number of iterations involved in computing $\rho_h^*$ using \cite{SDCS20}, the running time of Line \ref{line:val} consists of optimizing the convex program and computing the maximum flow in each iteration \cite{SDCS20}, and requires $\bigO \left( 2^T h \mu_h(G_c) + T \left( n_c h \mu_h(G_c) + n_c^3 \right) \right)$ time, where $n_c$ is as defined above.
Lines \ref{line:begin_enum_all}-\ref{line:max_flow} can be done in $\bigO \left( \left( n_c h \mu_h(G_c) + n_c^3 \right) \right)$ time \cite{FYCLX19}.
%
%
Line \ref{line:scc} takes $\bigO \left( n_c + h^2 \mu_h(G_c) \right)$ time, as it involves finding the SCCs of $\mathcal{H}_{f^*}$, whose node and edge counts are stated above.
%
%
In Line \ref{line:end_enum_all}, each densest subgraph is enumerated exactly once (as proved in Appendix \ref{sec:clique_proof}), and each subgraph enumeration takes time linear in the corresponding number of nodes.

Denoting respectively by $n_c^*$, $d^*$, $L^*$, and $T^*$, the maximum number (in the $(\lceil \Tilde{\rho} \rceil, h)$-core of any possible world) of nodes, densest subgraphs, nodes in a densest subgraph
and iterations of \cite{SDCS20} to compute $\rho_h^*$, and by $\mu_h^*$ the maximum number of $h$-cliques in any possible world, the overall time complexity of our method is \small{$\bigO \left( \theta \left( h m \left( \frac{1}{2} \cdot \mu_h^* \right)^{h-2} + 2^{T^*} h \mu_h^* + T^* \left( n_c^* h \mu_h^* + n_c^{*3} \right) + d^* L^* \right) \right)$}. \normalsize Practically, $n_c^* \ll n$ and $T^*$ is very small, as validated in our experiments (\S~\ref{sec:exp}). For instance, $T^* = 11$, $\frac{n_c^*}{n} = 4.731 \times 10^{-4}$, $\frac{L^*}{n} = 2.049 \times 10^{-5}$,  $\frac{\mu_h^*}{m} = 0.181$, and $d^* = 1$ for our large \textsf{Twitter} dataset, making our algorithm quite efficient with large-scale graphs.

\vspace{-1mm}
\spara{Accuracy guarantee.} We show (briefly in the following and in details in Appendix \ref{sec:clique_proof}) that Algorithm \ref{alg:clique_densest} correctly computes all clique-densest subgraphs in a deterministic graph. Thus, our overall accuracy guarantees for finding the top-$k$
\textsf{$h$-Clique-MPDS}s in an uncertain graph remain the same as in \S \ref{sec:top_k_mpds_algorithm}.

\vspace{-1mm}
\begin{theor}
Algorithm \ref{alg:clique_densest} enumerates each $h$-clique densest subgraph of a deterministic graph $G$ exactly once.
\end{theor}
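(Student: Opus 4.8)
The plan is to split the statement into two parts. The \emph{structural} part: the family of $h$-clique densest subgraphs of $G$ is in bijection with the family of downward-closed unions of the non-trivial SCCs of the residual graph $\mathcal{H}_{f^*}$ (the SCCs other than those of $s$ and $t$), the bijection sending such a union $\mathcal{S}$ to $\left(\bigcup_{C\in\mathcal{S}}C\right)\cap V$. The \emph{combinatorial} part: Algorithm~\ref{alg:enum}, invoked as in Line~\ref{line:end_enum_all}, outputs exactly one representative for each such downward-closed union and never repeats one. Combining the two gives the claim.

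For the structural part I would first establish a Goldberg-style min-cut characterization adapted to the clique flow network. Since the $(\lceil\tilde\rho\rceil,h)$-core $G_c$ retains every $h$-clique densest subgraph of $G$ (Line~\ref{line:core}), it suffices to work inside $G_c$; there, with the parameter set to $\rho_h^*$ (computed in Line~\ref{line:val}), a vertex set $S\subseteq V$ induces an $h$-clique densest subgraph iff $\{s\}\cup S\cup\Lambda(S)$ is the source side of some minimum $s$--$t$ cut of $\mathcal{H}$, where $\Lambda(S)\subseteq\Lambda$ collects those $(h-1)$-clique nodes all of whose $h-1$ vertices lie in $S$. Verifying this requires tracking the flow routed through the clique nodes, which is precisely the role of Definitions~\ref{def:ic}--\ref{def:dcc} and Lemmas~\ref{lem:ic_dcc}--\ref{lem:dcc_densest} in Appendix~\ref{sec:clique_proof}. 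I would then invoke the classical correspondence between minimum $s$--$t$ cuts and vertex sets closed under reachability in $\mathcal{H}_{f^*}$, that is, unions of SCCs that are downward-closed in the SCC DAG. Two facts close this part: (i) every minimum cut places each clique node on the same side as all of its vertices (forced by the infinite-capacity arcs from $\Lambda$-nodes into $V$-nodes), so $S\mapsto\{s\}\cup S\cup\Lambda(S)$ is a bijection between densest subgraphs and source sides of minimum cuts, which makes the restriction-to-$V$ map injective on closed SCC down-sets; and (ii) the newly derived Lemma~\ref{lem:tc} shows that no SCC reachable from $t$ lies on the source side of any minimum cut, justifying the discarding of the SCCs of $s$ and $t$ in Line~\ref{line:scc}.

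For the combinatorial part I would argue by induction on $|\mathcal{C}_2|$ that a call to Algorithm~\ref{alg:enum} with committed set $\mathcal{C}_1$ and candidate set $\mathcal{C}_2$ returns exactly the collection of vertex sets $\left(\bigcup_{C\in\mathcal{C}_1'\cup des(\mathcal{C}_1')}C\right)\cap V$, one for each way of extending $\mathcal{C}_1$ by an antichain of $V$-touching SCCs drawn from $\mathcal{C}_2$, and that each appears only once. The essential bookkeeping is that Line~\ref{line:rem} removes the current SCC $C$ from $\mathcal{C}_2$ before the loop moves on to its siblings, so a given antichain is produced along the unique branch that selects its members in the fixed iteration order, while removing $des(C)\cup anc(C)$ in the recursive call in Line~\ref{line:rec} keeps $\mathcal{C}_1'$ an antichain and makes its downward closure grow strictly, guaranteeing the reported vertex sets are pairwise distinct. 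Since every downward-closed union of non-trivial SCCs is generated by a unique antichain (its set of minimal SCCs), the top-level call enumerates each such union, hence each $h$-clique densest subgraph, exactly once.

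The step I expect to be the main obstacle is the structural part, and within it the adaptation of the min-cut characterization to the two-layered clique network: unlike the edge-density construction of \cite{CQ20}, here $V$-nodes feed $\Lambda$-nodes which feed back into $V$-nodes through infinite-capacity arcs, so the identity relating the minimum cut value to $\rho_h^*$ and, crucially, the assertion that $t$-side SCCs never intrude on a densest subgraph do not transfer verbatim — the latter is exactly what Lemma~\ref{lem:tc} supplies. The combinatorial induction, by contrast, should be routine once the ``remove-then-recurse'' invariant is stated precisely.
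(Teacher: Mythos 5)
Your proposal follows essentially the same route as the paper: the min-cut characterization of densest subgraphs in the two-layer clique flow network, the reduction to reachability-closed sets of non-trivial SCCs of the residual graph (with Lemma~\ref{lem:tc} justifying the exclusion of the sink's SCC), and the remove-then-recurse uniqueness argument for Algorithm~\ref{alg:enum} are exactly the content of the paper's Lemmas~\ref{lem:cut}--\ref{lem:dcc_densest} and Corollary~\ref{cor:ic_densest}. The one imprecision is that plain downward-closed unions of non-trivial SCCs do \emph{not} biject with densest subgraphs — a $\Lambda$-only SCC with no incoming edge from the rest of the set can be added or removed without changing either the induced vertex set or the cut value, so your ``restriction-to-$V$ is injective'' claim fails as stated — which is precisely why the paper's Definition~\ref{def:dcc} additionally requires every $V$-disjoint component in the set to have an incoming edge from within the set.
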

\vspace{-4mm}
\begin{proof}
[Proof Sketch]
Lines \ref{line:lb}-\ref{line:core} of Algorithm \ref{alg:clique_densest} are justified by the fact that the densest subgraph is in the $(\lceil \rho_h^* \rceil, h)$-core, and hence in the $(\lceil \Tilde{\rho} \rceil, h)$-core \cite{FYCLX19}.
The correctness of Line \ref{line:val} follows from \cite{SDCS20}.
The main idea behind the remaining lines is that 
all $h$-clique densest subgraphs of $G$ are hidden in the SCCs of the residual graph $\mathcal{H}_{f^*}$ of $\mathcal{H}$ under a maximum flow $f^*$.
Any $h$-clique densest subgraph of $G$ constitutes a minimum $s$-$t$ cut in $\mathcal{H}$.
Thus, this subgraph has no outgoing edge in $\mathcal{H}_{f^*}$.
\end{proof}

\vspace{-2mm}
\subsection{\textsf{Pattern-MPDS}: Approximate Algorithm}
\label{sec:pattern}

Algorithm \ref{alg:pattern_densest} for finding {\em all} pattern-densest subgraphs (w.r.t. a given pattern $\psi = (V_\psi, E_\psi)$) is inspired by \cite{CQ20}.
As earlier, {\em finding all pattern-densest subgraphs in a deterministic graph is a novel problem, and Algorithm \ref{alg:pattern_densest}
is a novel contribution.}

Algorithm \ref{alg:pattern_densest} is similar to Algorithm \ref{alg:clique_densest} (\S~\ref{sec:clique}); in fact, the $(k, h)$-core (Definition \ref{def:core}) can be easily extended to the $(k, \psi)$-core (Line \ref{line:core_pat}), and $\rho_\psi^*$ can be computed in Line \ref{line:val_pat} by extending \cite{SDCS20}. \revise{In Line \ref{line:all_instance}, $\Lambda$ refers to the set of all $\psi$-instances in $G$ instead of cliques, which can be enumerated using the method in \cite{QZC17}.
Moreover, the flow network $\mathcal{H}$ (constructed as in \cite{FYCLX19}; the pseudocode is included in Appendix \ref{sec:pattern_proof}) contains one node for each group of $\psi$-instances with a common node set, instead of one for each instance, in order to reduce the memory footprint and running time.} This forms the main difference between our method and the one in \cite{CQ20}, resulting in some non-trivial additions which constitute one of our novel contributions (see Appendix \ref{sec:pattern_proof}).

\begin{algorithm}[t]
	\renewcommand{\algorithmicrequire}{\textbf{Input:}}
	\renewcommand{\algorithmicensure}{\textbf{Output:}}
	\scriptsize
	\caption{\small Find all pattern-densest subgraphs} \label{alg:pattern_densest}
	\begin{algorithmic}[1]
		\REQUIRE Deterministic graph $G = (V,E)$, pattern $\psi = (V_\psi, E_\psi)$
		\ENSURE All pattern-densest subgraphs w.r.t. $\psi$ in $G$
		\STATE $\Tilde{\rho} \gets $ Density returned by the peeling method \cite{FYCLX19} \label{line:lb_pat}
		\STATE $G_c \gets (\lceil \Tilde{\rho} \rceil, \psi)$-core of $G$ \cite{FYCLX19} \label{line:core_pat}
		\STATE $\Lambda \gets $ All $\psi$-instances in $G_c$ \cite{QZC17} \label{line:all_instance_pat}
		\STATE $\rho_\psi^* \gets \max_{S \subseteq V} \rho_\psi(S)$ \cite{SDCS20} \label{line:val_pat}
		\STATE $\mathcal{H} \gets $ Algorithm \ref{alg:flow_pat} ($G_c$, $\psi$, $\Lambda$, $\rho_h^*$) \cite{FYCLX19} \label{line:begin_enum_all_pat}
		\STATE $f^* \gets $ Maximum flow in $\mathcal{H}$ \label{line:max_flow_pat}
		\STATE $\mathcal{C} \gets $ SCCs of the residual graph $\mathcal{H}_{f^*}$, excluding those of $s$ and $t$ \label{line:scc_pat}
		\RETURN Algorithm \ref{alg:enum} ($\emptyset$, $\mathcal{C}$, $V$) \label{line:end_enum_all_pat}
	\end{algorithmic}
\end{algorithm}

\vspace{-1mm}
\spara{Accuracy guarantee.} By a similar analysis as in Appendix \ref{sec:clique_proof}, it can be shown that Algorithm \ref{alg:pattern_densest} correctly computes {\em} pattern-densest subgraphs w.r.t. $\psi$ in a deterministic graph. \revise{The main difference is in the derivation of the capacity of a minimum cut in $\mathcal{H}$, which is quite different from the one in \S~\ref{sec:clique} as mentioned above; this derivation is shown in Appendix \ref{sec:pattern_proof}.} Finally, since Algorithm \ref{alg:pattern_densest} correctly computes all pattern-densest subgraphs w.r.t. $\psi$ in a deterministic graph, our overall accuracy guarantees for finding the top-$k$
\textsf{Pattern-MPDS} in an uncertain graph remain the same as in \S \ref{sec:top_k_mpds_algorithm}.

\vspace{-1mm}
\spara{Space complexity.} Let $G_c$ be the $(\lceil \Tilde{\rho} \rceil, \psi)$-core in Lines \ref{line:lb}-\ref{line:core} of Algorithm \ref{alg:pattern_densest}, with $n_c$ and $m_c$ the corresponding node and edge counts respectively. Clearly $|\Lambda| = \bigO \left( \mu_\psi(G_c) \right)$, where $\mu_\psi(G_c)$ is the number of $\psi$-instances in $G_c$. Thus, the number of nodes in $\mathcal{H}$ is $\bigO \left( n_c + \mu_\psi(G_c) \right)$ and the number of edges is $\bigO \left( n_c + \left| V_\psi \right| \mu_\psi(G_c) \right)$. Also, computing $\rho_\psi^*$ (Line \ref{line:val_pat}) requires $\bigO(n_c + |\Lambda|) = \bigO(n_c + \mu_\psi(G_c))$ space \cite{FYCLX19}. Since these constitute most of the memory consumed, the total space complexity of Algorithm \ref{alg:pattern_densest} is $\bigO \left( n_c + \left| V_\psi \right| \mu_\psi(G_c) \right)$.

When we adapt Algorithm \ref{alg:exact} for \textsf{Pattern-MPDS}, in addition to the memory required for storing the uncertain graph, the majority of the memory is consumed by Line \ref{line:all_densest}, which invokes Algorithm \ref{alg:pattern_densest}. Denoting by $\mu_\psi^*$ the maximum number of $\psi$-instances in the $(\lceil \Tilde{\rho} \rceil, h)$-core of any possible world of $\mathcal{G}$, the overall space complexity is $\bigO \left( m + n + \left| V_\psi \right| \mu_\psi^* \right)$.

\vspace{-1mm}
\spara{Time complexity.} Assume that the enumeration of all $\psi$-instances in a possible world takes $\bigO(t_{\psi})$ time. By a similar analysis as in \S~\ref{sec:clique}, denoting respectively by $n_c^*$, $d^*$, $L^*$, $\mu_\psi^*$, and $T^*$, the maximum number (in the $(\lceil \Tilde{\rho} \rceil, \psi)$-core of any possible world) of nodes, densest subgraphs, nodes in a densest subgraph,
$\psi$-instances, and iterations of \cite{SDCS20} to compute $\rho_\psi^*$, the overall time complexity of our method is $\bigO \left( \theta \left( t_{\psi} + 2^{T^*} \left| V_\psi \right| \mu_\psi^* + T^* \left( n_c^* \mu_\psi^* + n_c^{*3} \right) + d^* L^* \right) \right)$.

\vspace{-1mm}
\spara{Remark.} For larger graphs and bigger patterns $\psi$, we find that the enumeration of all $\psi$-instances (which is necessary to compute the densest subgraphs in a possible world) can be expensive. In such cases, we resort to a heuristic method in which we enumerate some reasonably dense subgraphs (instead of all densest ones) using \cite{FYCLX19}.
Specifically, we use a method different from Algorithm \ref{alg:pattern_densest}, which runs core decomposition w.r.t. $\psi$. If $k_{max}$ denotes the maximum core number, then
the $(k_{max}, \psi)$-core is a reasonably dense subgraph. In particular,
the $(k_{max}, \psi)$-core's density is at least $\frac{1}{\left| V_{\psi} \right|}$ times the maximum density of any subgraph \cite{FYCLX19}. Based on this, we return the $(k_{max}, \psi)$-core and all intermediate subgraphs (obtained during core decomposition) having greater densities.
Experimental results (\S~\ref{sec:large_graph_results}) show that this heuristic method yields good-quality solutions with higher efficiency.
 
\section{Approximate Solution for \\ Nucleus Densest Subgraphs}
\label{sec:nucleus}

In this section, we {\em convert the {\sf NDS} (nucleus densest subgraphs) problem into the widely-studied closed frequent itemset mining problem} and develop an {\em approximate method} (Algorithm \ref{alg:exact_nucleus}) to find the top-$k$ {\sf NDS} for all three notions of density:
edge, clique, and pattern density.

\begin{algorithm}[t]
	\renewcommand{\algorithmicrequire}{\textbf{Input:}}
	\renewcommand{\algorithmicensure}{\textbf{Output:}}
	\scriptsize
	\caption{\small Estimate all \textsf{NDS} in an uncertain graph} \label{alg:exact_nucleus}
	\begin{algorithmic}[1]
		\REQUIRE Uncertain graph $\mathcal{G} = (V,E,p)$, positive integers $k$ and $l_m$, no. of samples $\theta$
		\ENSURE (Approximate) \textsf{NDS}
		\STATE $CV \gets \emptyset$
		\FOR{$i = 1$ \TO $\theta$}
		\STATE Sample a possible world $G \sqsubseteq \mathcal{G}$
		\STATE $S \gets$ Maximum-sized densest subgraph in $G$ \label{line:all_densest_nucleus}
		\STATE $CV \gets CV \cup \{S\}$
		\ENDFOR
		\RETURN \revise{\textsf{TFP}($CV$, $k$, $l_m$) \cite{wang2005tfp}}
	\end{algorithmic}
\end{algorithm}

\revise{The algorithm first runs $\theta$ independent rounds: Sample a possible world $G \sqsubseteq \mathcal{G}$ and insert, into the set of candidate node sets $CV$, the maximum-sized densest subgraph of $G$\footnote{\scriptsize By a trivial generalization of \cite{BBCGS15} to all density notions, the union of the node sets of all densest subgraphs of a deterministic graph $G$ induces the maximum-sized (w.r.t. node count) densest subgraph in $G$. Thus, a node set is contained in a densest subgraph of $G$ if and only if it is contained in the maximum-sized densest subgraph of $G$.}. For a node set $U \subseteq V$, let $\widehat{\gamma}(U)$ denote the estimated densest subgraph containment probability of $U$, which is computed as the fraction of node sets in $CV$ which contain $U$. Then a \revise{closed} frequent itemset mining algorithm (e.g., \textsf{TFP} \cite{wang2005tfp}) is applied to compute the top-$k$ closed node sets in $CV$ of size at least $l_m$ having the largest values of $\widehat{\gamma}(\cdot)$. A node set is said to be closed w.r.t. $\widehat{\gamma}(\cdot)$ if it has no superset with the same value of $\widehat{\gamma}(\cdot)$ (i.e., is contained in the same number of node sets in $CV$). 
Here, $l_m$ is a user input decided based on the minimum desired size of a returned subgraph.}



The maximum-sized densest subgraph in a deterministic graph (Line \ref{line:all_densest_nucleus}) can be computed using parts of the methods in \S~\ref{sec:mpds_algo}. For edge density, we terminate after computing the SCCs of the residual graph under a maximum flow \cite{CQ20}. For clique and pattern densities, we terminate after computing the value of the maximum density of a subgraph, since we also get the maximum-sized densest subgraph in this process \cite{SDCS20}.


\vspace{-1mm}
\spara{Space complexity.} The major memory cost is due to the flow network in each iteration of Algorithm \ref{alg:exact_nucleus}. Therefore, the overall space complexity is the same as in \S~\ref{sec:mpds_algo}.

\vspace{-1mm}
\spara{Time complexity.} The time complexity is similar to the ones in \S~\ref{sec:mpds_algo},
plus that for computing the closed frequent node sets by \textsf{TFP}, which is reasonable in our experiments (\S~\ref{sec:parameter}).
Note that, for \textsf{Pattern-NDS} on our larger graphs, we use the heuristic method of \S~\ref{sec:pattern} in Line \ref{line:all_densest_nucleus} of Algorithm \ref{alg:exact_nucleus}.

\vspace{-1mm}
\spara{\revise{Accuracy guarantee.}} 
\revise{We theoretically analyze the sample size $\theta$ to return the true top-$k$ node sets with a high probability.
First, we prove that the true top-$k$ node sets are likely to be closed w.r.t. $\widehat{\gamma}(\cdot)$ after $\theta$ rounds (Theorem~\ref{th:cv_nucleus}), which is necessary (but not sufficient) for them to be finally returned. Second, we ensure that those sets indeed have the $k$ largest {\em estimated} densest subgraph containment probabilities (Theorem~\ref{th:topk_nucleus}), which guarantees that they are finally returned.
\begin{theor}
Given an uncertain graph $\mathcal{G}$ and positive integers $k$ and $l_m$, let $V_1, \ldots, V_{k}$ denote the true top-$k$ closed node sets w.r.t. $\gamma(\cdot)$ of size at least $l_m$ having the highest densest subgraph containment probabilities. For each $i \in [1, k]$, let $\mathbf{G}(V_i)$ denote the set of all possible worlds of $\mathcal{G}$ whose densest subgraphs contain $V_i$. Define $\mathbbm{G} = \bigcup_{i=1}^k \mathbf{G}(V_i)$. Then
\begin{footnotesize}
\begin{equation}
    \Pr \left( V_1, \ldots, V_k \text{ are closed w.r.t. } \widehat{\gamma} (\cdot) \right) \geq 1 - \sum_{G \in \mathbbm{G}} \left( 1 - \Pr(G) \right)^\theta
    \label{eq:cv_nucleus}
\end{equation}
\end{footnotesize}
\label{th:cv_nucleus}
\end{theor}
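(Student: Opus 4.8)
The plan is to prove \eqref{eq:cv_nucleus} by isolating one ``sufficient'' event --- namely that \emph{every} possible world in $\mathbbm{G}$ is drawn at least once among the $\theta$ samples --- showing that this event already forces each $V_i$ to be closed w.r.t.\ $\widehat{\gamma}(\cdot)$, and then lower-bounding its probability by a union bound, which produces exactly the right-hand side. First I would fix notation: for a deterministic graph $G$, let $D(G)$ be the node set of its maximum-sized densest subgraph (the union of the node sets of all densest subgraphs of $G$, cf.\ the generalization of \cite{BBCGS15} recalled earlier). By that lemma, a node set $U$ lies in some densest subgraph of $G$ iff $U \subseteq D(G)$, so $\mathbf{G}(V_i) = \{G \sqsubseteq \mathcal{G} : V_i \subseteq D(G)\}$, $\gamma(U) = \sum_{G:\, U \subseteq D(G)} \Pr(G)$, and, writing $G_1, \dots, G_\theta$ for the sampled worlds (with repetition), $\widehat{\gamma}(U) = \frac{1}{\theta}\sum_{j=1}^\theta \mathbbm{1}[U \subseteq D(G_j)]$.

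The first substantive step is a ``witness'' reformulation of closedness w.r.t.\ $\gamma$. For any $U' \supsetneq V_i$ we have $\{G : U' \subseteq D(G)\} \subseteq \{G : V_i \subseteq D(G)\}$, hence $\gamma(U') \le \gamma(V_i)$; since every possible world of nonzero probability contributes a strictly positive summand, $\gamma(U') = \gamma(V_i)$ would force these two world-sets to agree on all positive-probability worlds. As $V_i$ is closed w.r.t.\ $\gamma$, this cannot happen, and taking $U' = V_i \cup \{v\}$ for an arbitrary $v \notin V_i$ yields the conclusion I need: for every $v \notin V_i$ there exists a positive-probability world $G^{(i,v)} \in \mathbf{G}(V_i)$ with $v \notin D(G^{(i,v)})$.

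Next I would condition on the event $\mathcal{A}$ that every $G \in \mathbbm{G}$ occurs among $G_1, \dots, G_\theta$. Fix $i$ and suppose, for contradiction, that some $U' \supsetneq V_i$ has $\widehat{\gamma}(U') = \widehat{\gamma}(V_i)$; since $\mathbbm{1}[U' \subseteq D(G_j)] \le \mathbbm{1}[V_i \subseteq D(G_j)]$ termwise, equality of the sums forces every sampled round $j$ with $V_i \subseteq D(G_j)$ to also satisfy $U' \subseteq D(G_j)$. Choosing $v \in U' \setminus V_i$ and its witness $G^{(i,v)} \in \mathbf{G}(V_i) \subseteq \mathbbm{G}$, event $\mathcal{A}$ guarantees $G^{(i,v)}$ is sampled; it satisfies $V_i \subseteq D(G^{(i,v)})$, hence must satisfy $U' \subseteq D(G^{(i,v)})$, contradicting $v \notin D(G^{(i,v)})$. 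Thus on $\mathcal{A}$ every $V_i$ is closed w.r.t.\ $\widehat{\gamma}$. Finally, a fixed world $G$ is missed by $\theta$ independent draws with probability $(1-\Pr(G))^\theta$, so $\Pr(\mathcal{A}) \ge 1 - \sum_{G \in \mathbbm{G}} (1-\Pr(G))^\theta$ by the union bound, and \eqref{eq:cv_nucleus} follows from $\Pr(V_1, \dots, V_k \text{ are closed w.r.t.\ } \widehat{\gamma}(\cdot)) \ge \Pr(\mathcal{A})$.

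I expect the witness reformulation in the second paragraph to be the only delicate point: it is where the footnote lemma is genuinely used (to replace ``contained in a densest subgraph'' by ``contained in $D(G)$,'' so that closedness becomes a statement about a fixed family of node sets) and where strict positivity of possible-world probabilities is needed to upgrade ``$\gamma(U') = \gamma(V_i)$'' to ``the same collection of worlds.'' Everything after that is monotonicity of containment together with a routine union bound.
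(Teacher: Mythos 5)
Your proof is correct and follows essentially the same route as the paper's: both reduce the event to ``every world in $\mathbbm{G}$ is sampled at least once,'' argue that closedness of $V_i$ w.r.t.\ $\gamma(\cdot)$ then yields, for each strict superset, a sampled witness world whose densest subgraph contains $V_i$ but not the superset, and finish with the same union bound over $\mathbbm{G}$. Your write-up merely makes explicit (via $D(G)$ and the per-node witness worlds) what the paper's proof states in one sentence.
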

\begin{proof}
For each $i \in [1, k]$, if all possible worlds in $\mathbf{G}(V_i)$ are sampled at least once, then for each set $V' \supset V_i$, there is at least one node set (densest subgraph) in $CV$ containing $V_i$ but not $V'$, since $V_i$ is closed w.r.t. $\gamma(\cdot)$. In that case, $V_i$ is also closed w.r.t. $\widehat{\gamma}(\cdot)$. Thus, using this and the union bound,
\begin{small}
\begin{align*}
    &\Pr \left( V_1, \ldots, V_k \text{ are closed w.r.t. } \widehat{\gamma} (\cdot) \right) \geq \Pr \left( \bigwedge_{G \in \mathbbm{G}} G \text{ is sampled} \right) \\
    &= 1 - \Pr \left( \bigvee_{G \in \mathbbm{G}} G \text{ is never sampled} \right) \geq 1 - \sum_{G \in \mathbbm{G}} \left( 1 - \Pr(G) \right)^\theta
\end{align*}
\end{small}
\end{proof}
From Theorem \ref{th:cv_nucleus}, if the existence probabilities of the possible worlds whose densest subgraphs contain the true top-$k$ node sets 
and the sample size $\theta$ are reasonably large, they are all highly likely to be closed w.r.t. $\widehat{\gamma}(\cdot)$, which is necessary for all of them to be returned.
\begin{theor}
Let $V_1, \ldots, V_{k+1}$ denote the true top-$(k+1)$ closed node sets of size at least $l_m$ having the highest densest subgraph containment probabilities, and let $CV$ denote the set of candidate node sets after $\theta$ rounds. Define $mid = \frac{1}{2} \left[ \gamma\left(V_k\right) + \gamma\left(V_{k+1}\right) \right]$ and
\begin{small}
\begin{equation}
    d_U =
    \begin{cases}
    \gamma(U) - mid & \text{if} \, \, U \in \{V_1, \ldots, V_k\} \\
    mid - \gamma(U) & \text{otherwise}
    \end{cases}
\end{equation}
\end{small}
For each $i \in [1, k]$, let $\mathbf{G}(V_i)$ denote the set of all possible worlds of $\mathcal{G}$ whose densest subgraphs contain $V_i$. Define $\mathbbm{G} = \bigcup_{i=1}^k \mathbf{G}(V_i)$ and $\mathcal{CV}$ as the set of all closed node sets w.r.t. $\widehat{\gamma}(\cdot)$ of size at least $l_m$.
Then, the probability that $V_1, \ldots, V_k$ are returned by Algorithm~\ref{alg:exact_nucleus} is at least
\begin{small}
\begin{equation}
    \small \left[ 1 - \sum_{G \in \mathbbm{G}} \left( 1 - \Pr(G) \right)^\theta \right] \left[ 1 - \sum_{U \in \mathcal{CV}} \exp \left( -2d_U^2\theta \right) \right] \label{eq:prf_nucleus}
\end{equation}
\end{small}
\label{th:topk_nucleus}
\end{theor}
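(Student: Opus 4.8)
The plan is to follow the same two-stage template used for Theorem~\ref{th:topk}, with $\widehat{\tau}(\cdot)$ replaced by $\widehat{\gamma}(\cdot)$, the candidate family $CV$ replaced by $\mathcal{CV}$ (the node sets of size at least $l_m$ that are closed w.r.t.\ $\widehat{\gamma}(\cdot)$), and Theorem~\ref{th:cv_nucleus} invoked in place of Theorem~\ref{th:cv}. First I would record that $\widehat{\gamma}(U) = \frac{1}{\theta}\sum_{i=1}^{\theta} Y_i(U)$, where $Y_i(U) \in \{0,1\}$ indicates whether $U$ is contained in the maximum-sized densest subgraph of the $i$-th sampled possible world; by the generalization of \cite{BBCGS15} noted in \S~\ref{sec:nucleus}, being contained in the maximum-sized densest subgraph of a world $G$ is equivalent to being contained in \emph{some} densest subgraph of $G$, so $\mathbbm{E}[Y_i(U)] = \gamma(U)$, and hence $\widehat{\gamma}(U)$ is an unbiased average of $\theta$ i.i.d.\ Bernoulli variables --- exactly the regime in which Hoeffding's inequality applies. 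I would then observe that $V_1,\dots,V_k$ are returned whenever (i) each $V_i$ is closed w.r.t.\ $\widehat{\gamma}(\cdot)$, so that $V_i \in \mathcal{CV}$, and (ii) $\widehat{\gamma}(V_i) > mid$ for every $i \in [1,k]$ while $\widehat{\gamma}(U) < mid$ for every other $U \in \mathcal{CV}$; hence the target probability is at least the product of the probabilities of these two events, bounded separately exactly as in Theorem~\ref{th:topk}.

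The first factor is supplied directly by Theorem~\ref{th:cv_nucleus}: $\Pr(V_1,\dots,V_k \text{ closed w.r.t.\ } \widehat{\gamma}) \geq 1 - \sum_{G \in \mathbbm{G}} (1-\Pr(G))^{\theta}$. For the second factor I would complement event (ii), apply the union bound over the node sets involved (the $V_i$ and the members of $\mathcal{CV}$), and bound each term: for $i \in [1,k]$, $\Pr(\widehat{\gamma}(V_i) \leq mid) = \Pr(\widehat{\gamma}(V_i) - \gamma(V_i) \leq -d_{V_i})$; for $U \in \mathcal{CV} \setminus \{V_1,\dots,V_k\}$, $\Pr(\widehat{\gamma}(U) \geq mid) = \Pr(\widehat{\gamma}(U) - \gamma(U) \geq d_{U})$, with $d_U$ as defined in the statement. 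Because $\gamma(V_k) \geq \gamma(V_{k+1})$ and every closed set of size at least $l_m$ outside the top $k$ has $\gamma(\cdot) \leq \gamma(V_{k+1})$, each $d_U$ is nonnegative, so Hoeffding's inequality bounds every term by $\exp(-2 d_U^2 \theta)$; summing gives $1 - \sum_{U \in \mathcal{CV}} \exp(-2 d_U^2 \theta)$ for the second factor, and multiplying the two bounds produces \eqref{eq:prf_nucleus}.

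The step I expect to be the real obstacle, and the genuine point of departure from Theorem~\ref{th:topk}, is that $\mathcal{CV}$ is a \emph{data-dependent} family defined through closedness w.r.t.\ the \emph{estimated} quantity $\widehat{\gamma}(\cdot)$ rather than the true $\gamma(\cdot)$. I must argue that every $U \in \mathcal{CV}$ lying outside the top $k$ nonetheless satisfies $\gamma(U) < mid$, so that $d_U > 0$ and the Hoeffding bound is non-vacuous; the natural route is to lift such a $U$ to a $\gamma$-closed superset of size at least $l_m$ (walking up chains of equal $\gamma$-value in the superset lattice, which must terminate at a $\gamma$-closed set) and to check that this superset cannot be any of $V_1,\dots,V_{k}$. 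A related subtlety is that the union bound ranges over a random index set and that ``$\{V_1,\dots,V_k\}\subseteq\mathcal{CV}$'' holds only on the good event, so the multiplicative combination of the two stage-events should be handled via conditioning on event (i) and checking that the per-set deviation bounds survive the conditioning; as in Theorem~\ref{th:topk}, this looseness is harmless, since the bound is only used to argue that a sufficiently large $\theta$ drives the failure probability to zero, not to extract a tight sample-complexity constant.
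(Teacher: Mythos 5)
Your proposal follows essentially the same route as the paper's proof: lower-bound the probability that $V_1,\dots,V_k$ are returned by the product of the closedness event (bounded via Theorem~\ref{th:cv_nucleus}) and the separation-at-$mid$ event (bounded via the union bound and Hoeffding's inequality, using the unbiasedness $\mathbbm{E}[\widehat{\gamma}(U)]=\gamma(U)$ established as in Lemma~\ref{lem:unbiased}), exactly mirroring Theorem~\ref{th:topk}. The subtleties you flag at the end --- that $\mathcal{CV}$ is a random family defined through $\widehat{\gamma}$-closedness and that $d_U \geq 0$ must be verified for its non-top-$k$ members before Hoeffding applies --- are genuine but are passed over silently in the paper's own proof, so your treatment is, if anything, more careful than the source.
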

\begin{proof}
Let $\mathcal{CV}' = \mathcal{CV} \setminus \{V_1, \ldots, V_k\}$. We have:
\begin{footnotesize}
\begin{align}
    &\Pr(V_1, \ldots V_k \text{ are returned}) \geq \Pr(V_1, \ldots, V_k \text{ are closed w.r.t. } \widehat{\gamma}(\cdot)) \times \nonumber \\
    &\quad \Pr \left( \left(\bigwedge_{U \in \{V_1, \ldots, V_k\}} \!\!\!\!\!\! \widehat{\gamma}(U) > mid  \right)\wedge \left(\bigwedge_{U \in \mathcal{CV}'} \!\!\!\!\!\! \widehat{\gamma}(U) < mid \right) \right) \label{eq:pr1_nucleus}
\end{align}
\end{footnotesize}
Similar to Lemma \ref{lem:unbiased}, it can be proved that $\mathbbm{E}[\widehat{\gamma}(U)] = \gamma(U)$ $\forall U \subseteq V$. Using the union bound and Hoeffding's inequality,
\begin{footnotesize}
\begin{align}
    &\Pr \left( \left(\bigwedge_{U \in \{V_1, \ldots, V_k\}} \widehat{\gamma}(U) > mid \right)\wedge \left(\bigwedge_{U \in \mathcal{CV}'} \widehat{\gamma}(U) < mid \right) \right) \nonumber \\
    &= 1 - \Pr \left( \left(\bigvee_{U \in \{V_1, \ldots, V_k\}} \!\!\! \widehat{\gamma}(U) \leq mid \right) \vee \left(\bigvee_{U \in \mathcal{CV}'}  \widehat{\gamma}(U) \geq mid \right)\right) \nonumber \\
    &\geq 1 - \sum_{U \in \{V_1, \ldots, V_k\}}  \Pr \left( \widehat{\gamma}(U) \leq mid \right) - \sum_{U \in \mathcal{CV}'} \!\! \Pr \left( \widehat{\gamma}(U) \geq mid \right) \nonumber \\
    &= 1 - \!\!\!\! \sum_{U \in \{V_1, \ldots, V_k\}} \!\!\!\!\!\!\!\!\!\!\!\!\! \Pr \left( \widehat{\gamma}(U) - \gamma(U) \leq -d_U \right) - \!\! \sum_{U \in \mathcal{CV}'} \!\!\!\!\! \Pr \left( \widehat{\gamma}(U) - \gamma(U) \geq d_U \right) \nonumber \\
    &\geq 1 - \sum_{U \in \mathcal{CV}} \exp \left( -2d_U^2\theta \right) \label{eq:pr2_nucleus}
\end{align}
\end{footnotesize}
Finally, plugging \eqref{eq:cv_nucleus} and \eqref{eq:pr2_nucleus} into \eqref{eq:pr1_nucleus}, we obtain \eqref{eq:prf_nucleus}.
\end{proof}
From Theorem \ref{th:topk_nucleus}, if the densest subgraph containment probabilities of the true top-$k$ node sets are reasonably larger than the other closed node sets w.r.t. $\widehat{\gamma}(\cdot)$ of size at least $l_m$,
and if the sample size $\theta$ is sufficiently large, the true top-$k$ sets are returned by Algorithm \ref{alg:exact_nucleus} with a high probability.
}

\vspace{-1mm}
\section{Related Work}
\label{sec:related}

\revise{In this section, we first revisit the densest subgraph discovery problem in deterministic graphs. Then, we discuss the existing attempt that extends the edge density to uncertain graphs by considering the maximum expected density. Finally, we state several close notions for cohesive and dense substructures in uncertain graphs, including core, truss, maximal cliques, clustering, and highly reliable subgraphs.}

\vspace{-1mm}
\spara{\revise{Densest subgraph in a deterministic graph.}} \revise{Given an undirected, unweighted, deterministic graph, the original densest subgraph (DS) problem \cite{G84} finds a subgraph with the highest edge-density, exact solutions to which are based on min-cuts in flow-networks \cite{G84, FYCLX19} and linear programming \cite{G84, Charikar00}. Since the computation of maximum flow has a high time complexity, researchers proposed approximate algorithms with theoretical guarantees, e.g., \cite{Charikar00, FYCLX19, BKV12, SW20}. Variants of the edge-density-based DS problem were also studied, such as triangle-density, clique-density, pattern-density, and edge-surplus based DS \cite{FYCLX19, MPPTX15, SDCS20, T15, TsourakakisBGGT13}, densest $k$-connected subgraph \cite{BGMT22} and size-bounded DS \cite{AC09}, top-$k$ overlapping DS \cite{GGT16}, top-$k$ DS maintenance on dynamic graphs \cite{ NGMG17, BKV12, SW20}, locally DS \cite{QLCZ15}, robust DS \cite{miyauchi2018robust}, density-friendly graph decomposition \cite{DCS17,T19}, DS on directed \cite{ MFCLZL20}, bipartite \cite{ HSBSSF16}, multilayer graphs \cite{GBBL20} etc.  For surveys and tutorials, we refer to \cite{FangLM22, GT15}. Enumerating all the densest subgraphs based on edge-density in a deterministic graph has been recently studied in \cite{CQ20}, which we use as a subroutine to find the \textsf{MPDS} of an uncertain graph in our problem. We notice that the problems of enumerating all clique-DS and pattern-DS in a deterministic graph were not studied in the literature. Thus, as additional technical contributions, we develop novel, exact algorithms for efficiently enumerating all clique-DS and all pattern-DS in a deterministic graph, and use them as subroutines to respectively find the \textsf{$h$-Clique-MPDS} and the \textsf{Pattern-MPDS} of an uncertain graph in our problem.}

\vspace{-1mm}
\spara{Expected edge densest subgraph.}
The expected edge density of a node set $U$ in an uncertain graph is the expectation of the edge density of the subgraph induced by $U$ across all possible worlds.
Zou \cite{Zou13} designed a polynomial-time algorithm to find the
subgraph with the maximum expected edge density in an uncertain graph using maximum flow techniques. In a graph where each edge weight distribution follows a given mean (reward) and variance (risk), Tsourakakis et al. \cite{tsourakakis2020novel} find a node set whose induced subgraph has high average reward (i.e., expected density) and low average risk.
As shown in Example~\ref{exam:expected}
and our experiments (\S \ref{sec:comp_others} and \S \ref{sec:case_club}), the expected edge densest subgraph is different from the {\sf MPDS}:
a subgraph of an uncertain graph having the maximum expected edge density may induce densest subgraphs
{\em only in a few} possible worlds of the graph. Such a subgraph can be large with many
low-probability edges or loosely connected nodes.

\vspace{-1mm}
\spara{Core and truss decompositions in an uncertain graph.}
As cohesive and dense substructures finding, core and truss decompositions are popular. The $k$-core (resp. $k$-truss) of a graph is a maximal subgraph
in which every node is connected to at least $k$ other nodes (resp. each edge participates in at least $(k-2)$ triangles) within that subgraph. These notions have been
extended to uncertain graphs, returning those subgraphs satisfying the above conditions with probability at least a threshold \cite{BonchiGKV14, EsfahaniW0T019, SMPV17, Dai21, HuangLL16, Sun0XB21, ZouZ17}. The innermost cores and trusses
have been used in applications such as task-driven team formation due to their higher densities \cite{HuangLL16,BonchiGKV14}. However,
they are different from the {\sf MPDS}
(\S \ref{sec:comp_others} and \S \ref{sec:case_club}).
Unlike the {\sf MPDS}, they do not find the node set most likely to induce a densest subgraph in the uncertain graph.

\vspace{-1mm}
\spara{Top-$k$ maximal cliques in an uncertain graph.} A clique is a set of nodes with each pair connected by an edge.
\cite{MukherjeeXT17, ZouLGZ10, LiDWMQY19, DLLCW22} study enumeration of maximal cliques in uncertain graphs. Densest subgraphs are not necessarily cliques.

\vspace{-1mm}
\spara{Node clustering in an uncertain graph.} $k$-means and $k$-medians approaches have been extended to node clustering in uncertain
graphs \cite{LJAS12,HanGXTHCH19,CeccarelloFPPV17}: Partition the nodes into $k$ clusters such that some notion of path-based
connection probability (e.g., reliability) between each node and its cluster's center
(or, across each pair of nodes within every cluster) is maximized.
Notice that clustering methods optimize the path-based connection probability between nodes inside a cluster, and do not optimize
the density of a subgraph.

\vspace{-1mm}
\spara{Highly reliable subgraphs in an uncertain graph.} Jin et al. \cite{JinLA11} have developed a mining-based approach to
discover highly reliable subgraphs in an uncertain graph. A highly reliable subgraph denotes a node set that has
a high probability of remaining connected via some paths in possible worlds of the uncertain graph. This method deals with
path-based connectivity between nodes, and is hence not suitable to identify densest subgraphs, such as {\sf MPDS} in our work.

\begin{table}[t]
	\scriptsize
	\centering
	\begin{center}
		\caption{\small Characteristics of our datasets}
		\vspace{-2mm}
		\begin{tabular}{c|c|c|c|c}
			\textbf{Name}   & \textbf{$n$} & \textbf{$m$} & \textbf{Type} & \textbf{Edge Prob: Mean,} \\
			\textbf{}   & \textbf{} & \textbf{} & \textbf{} & \textbf{ St. Dev., Quart.} \\ \hline \hline
			\textbf{\sf Karate} & 34 & 78 & Social & 0.25, 0.09,\\
			\textbf{\sf Club}& & & & \{0.18, 0.26, 0.33\} \\
			\hline
			\textbf{\sf Intel} & 54 & 969 & Device & 0.33, 0.19, \\
			\textbf{\sf Lab}& & & & \{0.16, 0.27, 0.44\}\\
			\hline
			\textbf{\sf LastFM} & 6\,899 & 23\,696 & Social & 0.33, 0.19, \\
			& & & & \{0.16, 0.27, 0.44\}\\\hline
			\textbf{\sf Homo} & 18\,384 & 995\,916 & Bio & 0.32, 0.21,\\
			\textbf{\sf Sapiens} &  &  & & \{0.18, 0.24, 0.34\}\\
			\hline
			\textbf{\sf Biomine} & 1\,045\,414 & 6\,742\,939 & Bio & 0.27, 0.21,\\
			& & & &  \{0.12, 0.22, 0.36\}\\\hline
			\textbf{\sf Twitter} & 6\,294\,565 & 11\,063\,034 & Social &0.14, 0.10, \\
			& & & & \{0.10, 0.10, 0.19\} \\\hline
			\revise{\textbf{\sf Friendster}} & \revise{65\,608\,366} & \revise{1\,806\,067\,135} & \revise{Social} & \revise{0.005, 0.013,} \\
			& & & & \revise{\{0.001, 0.003, 0.005\}} \\
		\end{tabular}
	\vspace{-2mm}
		\label{tab:dataset}
	\end{center}
\end{table}

\begin{figure}
  \vspace{-2mm}
  \centering
  \includegraphics[scale=0.45]{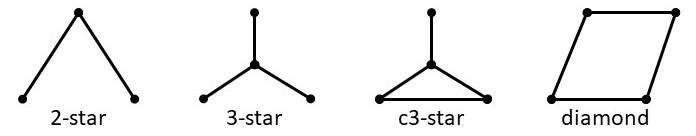}
   \vspace{-3mm}
  \captionof{figure}{\small Patterns used in our experiments}
  \label{fig:patterns}
    \vspace{-3mm}
\end{figure}

\vspace{-1mm}
\section{Experimental Results}
\label{sec:exp}
We run experiments to demonstrate the effectiveness and efficiency of our methods. Our C++ code \cite{code} is executed on one core 
of a 512GB,
2.4GHz Xeon server running Ubuntu.
\vspace{-2mm}
\subsection{Experimental Setup}
\label{sec:setup}
\vspace{-2mm}
\spara{Datasets.} We conduct experimental evaluations on six real-world graphs (Table \ref{tab:dataset}).
{\bf (1) \sf Karate Club} \cite{zachary1977information} is a social network of a university karate club. Nodes are club members and edges denote member interactions.
{\bf (2) \sf Intel Lab} \cite{Intel} is a collection of sensor communication data with 54 sensors deployed in the Intel Berkeley Research Lab between February 28 and April 5, 2004.
{\bf (3) \sf LastFM} \cite{lastfm} is a musical social network where users listen to music and share with friends. An edge between two users exists if they communicate at least once.
{\bf (4) \sf Homo Sapiens} \cite{szklarczyk2021string} is a protein-protein interaction network, where nodes are proteins and edges are interactions between proteins.
{\bf (5) \sf Biomine} \cite{BioMine} is constructed by integrating cross-references from several biological databases. Nodes represent biological concepts such as genes, proteins etc., and edges denote real-world phenomena between two nodes, e.g., a gene ``code'' for a protein. 
{\bf (6) \sf Twitter} \cite{snapnets} is a widely used social network where nodes are users and edges are retweets.
\revise{{\bf (7) \sf Friendster} \cite{Yang2012DefiningAE} is a social network where nodes are users and edges denote friendships.}

\vspace{-1mm}
\spara{Edge probability models.} We adopt various widely used models for generating the edge probabilities in our evaluation.
In {\bf (1)} {\sf Karate Club}, {\bf (6)} {\sf Twitter} \revise{and {\bf (7)} {\sf Friendster}}, we assign the probability of an edge as $1-e^{-\frac{t}{\mu}}$, which is an
exponential cdf of mean $\mu$ to the number $t$
of communications between the two users. We set $\mu=20$ \cite{KKHR20}.
In {\bf (2)} {\sf Intel Lab}, a (\revise{real}) edge probability denotes the fraction of messages from the sender that successfully reached the receiver \cite{Intel, SahaBVKB21}.
In {\bf (3)} {\sf LastFM}, the probability of any edge is the reciprocal of the larger of the out-degrees of its source and target nodes \cite{KKHR20}.
In {\bf (4)} {\sf Homo Sapiens}, an edge probability represents the confidence on the existence of the corresponding interaction, based on \revise{real} biological experiments \cite{szklarczyk2021string}.
In {\bf (5)} {\sf Biomine} \cite{BioMine}, an edge probability quantifies the existence of a phenomenon between the two endpoints, which was a combination of three criteria: relevance, informativeness, and confidence on the existence of a specific relationship \cite{PBGK10, BonchiGKV14, HuangLL16}.

\vspace{-1mm}
\spara{Methods compared.} We compare our \textsf{MPDS} and \textsf{NDS} algorithms with those for computing the expected densest subgraph \cite{Zou13}, $(k, \eta)$-core \cite{BonchiGKV14}, and $(k, \gamma)$-truss \cite{HuangLL16} (\S~\ref{sec:comp_others}). As discussed in \S~\ref{sec:problem}, we show the results for \textsf{MPDS} on the three smaller datasets and \textsf{NDS} on the three larger ones. Additionally, on some small synthetic graphs, we compare our \textsf{MPDS} approximation algorithms with the corresponding exact ones (\S~\ref{sec:comp_exact}). For sampling possible worlds, we compare our employed {\em Monte Carlo (MC)} method with {\em  Lazy Propagation} \cite{LFZT17} and {\em Recursive Stratified Sampling} \cite{RSS16} (\S~\ref{sec:parameter}).


\vspace{-1mm}
\spara{Parameters.} {\bf $\bullet$ $h$ for Clique-\textsf{MPDS/NDS}}: We vary $h \in \{3, 4, 5\}$ \cite{MPPTX15}. Notice that $h = 2$ denotes an edge.
{\bf $\bullet$ $\psi$ for Pattern-\textsf{MPDS/NDS}}: We vary $\psi \in$ \{2-star, 3-star, c3-star, diamond\} \cite{FYCLX19}, as shown in Figure \ref{fig:patterns}.
{\bf $\bullet$ Top-$k$ \textsf{MPDS}s}: We vary $k \in \{1, 5, 10\}$, with default value $1$.
\revise{{\bf $\bullet$ Top-$k$ \textsf{NDS}s}: We vary $k \in \{1, 5, 10, 50, 100\}$. 
{\bf $\bullet$ Minimum size threshold $l_m$}: We vary $l_m \in [1, 750]$. Beyond that range, no \textsf{NDS} is returned for any of our datasets.}
{\bf $\bullet$ Number of sampled possible worlds $\theta$}: We vary $\theta \in \{2^0\times10, 2^1\times10, \ldots, 2^8\times10\}$. The default value is chosen as in \S~\ref{sec:parameter}. 

\begin{table}[t]
    \vspace{-2mm}
	\footnotesize
	\centering
	\begin{center}
		\caption{\small Densest subgraph containment probabilities of the \textsf{NDS}, the expected densest subgraph (\textsf{EDS}), the innermost $\eta$-core and $\gamma$-truss ($\eta = \gamma = 0.1$); Expected densities of the \textsf{NDS} and \textsf{EDS}}
		\vspace{-1mm}	
		\begin{tabular}{c||c|c|c|c|c|c}
    	\multirow{2}{*}{\textbf{Dataset}} & \multicolumn{4}{|c}{\textbf{Containment Probability}} & \multicolumn{2}{|c}{\textbf{Expected Density}} \\ \cline{2-7}
			& \textbf{\textsf{NDS}} & \textbf{\textsf{EDS}} & \textbf{Core} & \textbf{Truss} & \textbf{\textsf{NDS}} & \textbf{\textsf{EDS}} \\ \hline \hline
			\textbf{\sf Homo Sapiens} & {\bf 1} & 0.05 & {\bf 1} & {\bf 1} & 54 & {\bf 54.62} \\ \hline
			\textbf{\sf Biomine} & {\bf 1} & 0.01 & 0.99 & 0 & 46.45 & {\bf 48.02} \\ \hline
			\textbf{\sf Twitter} & {\bf 1} & 0 & 0.95 & 0 & 37.65 & {\bf 38.64} 
		\end{tabular}
		\vspace{-4mm}
	\label{tab:existing_nds}
	\end{center}
\end{table}

\begin{table}[t]
	\footnotesize
	\centering
	\begin{center}
		\caption{\small Densest subgraph probabilities of the \textsf{MPDS}, the expected densest subgraph (\textsf{EDS}), the innermost $\eta$-core and $\gamma$-truss ($\eta = \gamma = 0.1$); Expected densities of the \textsf{MPDS} and \textsf{EDS}}
		\vspace{-1mm}
		\begin{tabular}{c||c|c|c|c|c|c}
        \multirow{2}{*}{\textbf{Dataset}} & \multicolumn{4}{|c}{\textbf{Densest Subgraph Probability}} & \multicolumn{2}{|c}{\textbf{Expected Density}} \\ \cline{2-7}
			& \textbf{\textsf{MPDS}} & \textbf{\textsf{EDS}} & \textbf{Core} & \textbf{Truss} & \textbf{\textsf{MPDS}} & \textbf{\textsf{EDS}} \\ \hline \hline
			\textbf{\sf Karate Club} & {\bf 0.012} & 0 & 0 & 0 & 0.703 & {\bf 0.75} \\ \hline
			\textbf{\sf Intel Lab} & {\bf 0.078} & 0.01 & 0.01 & 0 & 3.246 & {\bf 3.25} \\ \hline
			\textbf{\sf LastFM} & {\bf 0.075} & 0 & 0.04 & 0.02 & 0.667 & {\bf 0.86} 
		\end{tabular}
		\vspace{-5mm}
		\label{tab:existing_mpds}
	\end{center}
\end{table}

\begin{figure*}
    \vspace{-2mm}
    \centering
    \subcaptionbox{\small{\textsf{MPDS}} \label{fig:club_mpds}}[0.235\textwidth] {\includegraphics[scale=0.29]{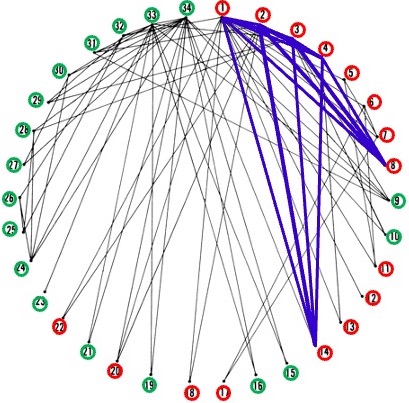}\vspace{-1mm}}
    \subcaptionbox{\small{Expected Densest Subgraph} \label{fig:club_eds}}[0.235\textwidth] {\includegraphics[scale=0.29]{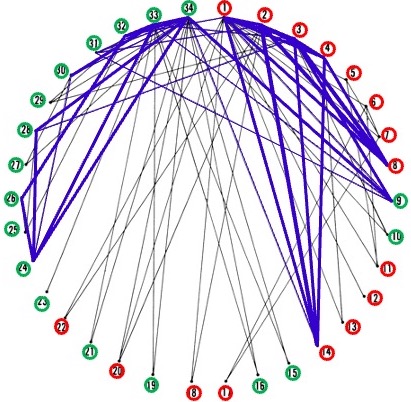}\vspace{-1mm}}
    \subcaptionbox{\small{Innermost core} \label{fig:club_core}}[0.235\textwidth] {\includegraphics[scale=0.29]{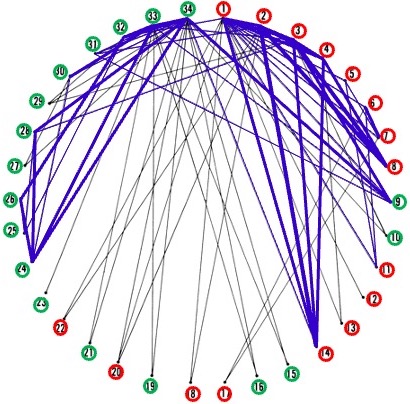}\vspace{-1mm}}
    \subcaptionbox{\small{Innermost truss} \label{fig:club_truss}}[0.235\textwidth] {\includegraphics[scale=0.29]{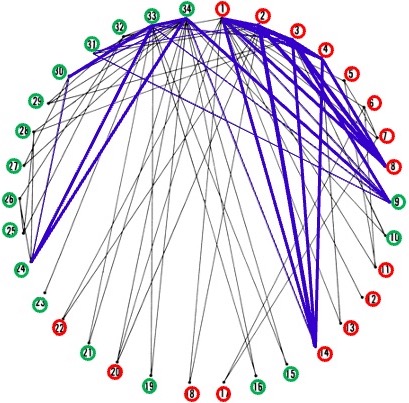}\vspace{-1mm}}
    \vspace{-2mm}
    \caption{\small Case studies to compare with existing dense subgraphs in uncertain graphs on the \textsf{Karate Club} dataset. The relevant subgraphs are in blue, while \revise{the colour of a node denotes its ground-truth community \cite{zachary1977information}. The thickness of each edge is proportional to its probability}.}
    \vspace{-5mm}
    \label{fig:case_club}
\end{figure*}

\vspace{-2mm}
\subsection{Comparison with Expected Density, Core, and Truss Decompositions in Uncertain Graphs}
\label{sec:comp_others}

For each of our larger datasets, we compare our \textsf{NDS} with some existing or close notions of densest subgraphs in uncertain graphs: expected densest subgraph \cite{Zou13}, innermost core \cite{BonchiGKV14}, and innermost truss \cite{HuangLL16}. Specifically, we compare the (approximate) densest subgraph containment probabilities of the following: $\bullet$ 
The (maximal) node set returned by Algorithm \ref{alg:exact_nucleus} with the highest frequency of being contained in the densest subgraphs of the generated possible worlds; $\bullet$ the expected densest subgraph (\textsf{EDS}); $\bullet$ for a given $\eta$, the innermost $\eta$-core, i.e., the $(k, \eta)$-core with the largest value of $k$; $\bullet$ for a given $\gamma$, the innermost $\gamma$-truss (analogous to cores). As shown in Table \ref{tab:existing_nds}, the containment probability of the $\eta$-core is comparable to (yet not greater than) that of the \textsf{NDS} for all datasets, in contrast to the \textsf{EDS} and the $\gamma$-truss. This makes sense for the following reason. The innermost $\eta$-core is likely to be an innermost core (and hence a reasonably dense subgraph \cite{FYCLX19}) of a possible world of the input graph. 
However, the same cannot be said about the other subgraphs. Table \ref{tab:existing_nds} demonstrates that our solution produces the most optimal node set with the highest densest subgraph containment probability compared to the other approaches.

In addition, for each of our smaller datasets, we compare the densest subgraph probability of the \textsf{MPDS} with those of the \textsf{EDS}, innermost $\eta$-core and innermost $\gamma$-truss. As shown in Table \ref{tab:existing_mpds}, the \textsf{MPDS} outperforms the other subgraphs.

\begin{table}[t!]
	\footnotesize
	\centering
	\vspace{-2mm}
	\begin{center}
        \caption{\shepherd{\small Probabilistic density of our proposed subgraphs (\textsf{MPDS} for the two smaller datasets and \textsf{NDS} for the two larger ones) and of existing dense subgraphs in uncertain graphs}}
		\vspace{-1.5mm}
		\shepherd{
		\begin{tabular}{c||c|c|c|c}
		\multirow{2}{*}	{\textbf{Dataset}} & \multicolumn{4}{|c}{\textbf{Probabilistic Density}}\\ \cline{2-5}
      & \textbf{MPDS/NDS} & \textbf{EDS} & \textbf{Core} & \textbf{Truss}  \\ \hline \hline
			\textsf{Karate Club} & ${\bf 0.281}$ & $0.095$ & $0.073$ & $0.134$ \\ \hline
			\textsf{LastFM} & ${\bf 0.333}$ & $0.007$ & $0.008$ & $0.013$ \\ \hline
			\textsf{Biomine} & ${\bf 0.546}$ & $0.191$ & $0.212$ & $0.538$ \\ \hline
			\textsf{Twitter} & ${\bf 0.789}$ & $0.042$ & $0.121$ & $0.781$ \\ 
		\end{tabular}}
		\vspace{-6mm}
		\label{tab:pd}
	\end{center}
\end{table}

\begin{table}[t!]
	\footnotesize
	\centering
	\begin{center}
        \caption{\shepherd{\small Probabilistic clustering coefficient of our proposed subgraphs (\textsf{MPDS} for the two smaller datasets and \textsf{NDS} for the two larger ones) and of existing dense subgraphs in uncertain graphs}}
		\vspace{-1.5mm}
		\shepherd{
		\begin{tabular}{c||c|c|c|c}
			\multirow{2}{*}{\textbf{Dataset}} & \multicolumn{4}{|c}{\textbf{Probabilistic Clustering Coefficient}}\\ \cline{2-5}
            &\textbf{MPDS/NDS} & \textbf{EDS} & \textbf{Core} & \textbf{Truss}  \\ \hline \hline
			\textsf{Karate Club} & ${\bf 0.284}$ & $0.150$ & $0.094$ & $0.158$ \\ \hline
			\textsf{LastFM} & ${\bf 0.333}$ & $0.002$ & $0.022$ & $0.257$ \\ \hline
			\textsf{Biomine} & ${\bf 0.546}$ & $0.203$ & $0.217$ & $0.539$ \\ \hline
			\textsf{Twitter} & ${\bf 0.775}$ & $0.142$ & $0.253$ & $0.768$ \\ 
		\end{tabular}}
		\vspace{-7mm}
		\label{tab:pcc}
	\end{center}
\end{table}

Since the \textsf{EDS} performs very poorly for all datasets, for fairness, we compare the expected densities of the \textsf{EDS} and our \textsf{NDS}/\textsf{MPDS}. Tables \ref{tab:existing_nds} and \ref{tab:existing_mpds} show that our solutions produce subgraphs with expected densities comparable to the optimal values, thereby showing that our returned subgraphs are good even with respect to expected density.

\shepherd{
We also consider two {\em external} evaluation metrics:  
$\bullet$ {\em Probabilistic Density} $PD(U)$ \cite{HuangLL16} for capturing the cohesiveness of a probabilistic subgraph $U$, which is defined as the weighted sum of existing edges divided by the maximum number of possible edges this subgraph can have (Equation~\ref{eq:pd}); and $\bullet$ {\em Probabilistic Clustering Coefficient} $PCC(U)$ \cite{JJ11} for measuring how well the nodes in a probabilistic subgraph $U$ cluster together, which is computed as three times the weighted sum of all possible triangles divided by the weighted sum of all neighboring edge pairs. The weights are existence probabilities of edges/triangles/neighboring edge pairs, assuming independence among edges (Equation~\ref{eq:pcc}). $V_U$, $E_U$, and $\Delta_U$ denote the set of nodes, edges, and triangles in $U$, respectively.
\begin{small}
\begin{align}
	PD(U)&=\frac{2\sum_{e\in E_U} p(e)}{|V_U|(|V_U|-1)}
\label{eq:pd} \\
    PCC(U)&=\frac{3\sum_{\bigtriangleup_{uvw}\in \Delta_U} p(u,v)p(u,w)p(v,w)}{\sum_{(u,v),(u,w)\in E_U,v\ne w}p(u,v)p(u,w)} 
\label{eq:pcc}    
\end{align}
\end{small}
Tables~\ref{tab:pd}-\ref{tab:pcc} demonstrate that our \textsf{NDS/MPDS} significantly outperforms other dense subgraph notions based on both probabilistic density and probabilistic clustering coefficient, implying that the \textsf{NDS/MPDS} is much more cohesive, i.e., most of the possible edges induced by the \textsf{NDS/MPDS} node set tend to exist, and the nodes in the \textsf{NDS/MPDS} cluster together. Only the innermost truss achieves slightly lower results on the two larger datasets. 
}

\subsection{\revise{Comparison between  \textsf{MPDS} and the Densest Subgraph in the Deterministic Version of Input Graph}}

\begin{table}[t!]
	\footnotesize
	\centering
	\begin{center}
		\caption{\revise{\small Densest subgraph probabilities of the \textsf{MPDS} and the deterministic densest subgraph (\textsf{DDS})}}
		\vspace{-1mm}
		\revise{
		\begin{tabular}{c||c|c|c}
			\textbf{Subgraph} & \textsf{Karate Club} & \textsf{Intel Lab} & \textsf{LastFM}  \\ \hline \hline
			\textsf{MPDS} & $\bf 0.012$ & $\bf 0.078$ & $\bf 0.075$ \\ \hline
			\textsf{DDS} & $\approx 0$ & $0.044$ & $\approx 0$ 
		\end{tabular}}
		\vspace{-4mm}
		\label{tab:mpdsvsdet}
	\end{center}
\end{table}

\begin{figure}[t!]
	\centering
	\subcaptionbox{\revise{\small{3 of the top-$20$ \textsf{MPDS}s}} \label{fig:case_topk}} [0.235\textwidth]{\includegraphics[scale=0.29]{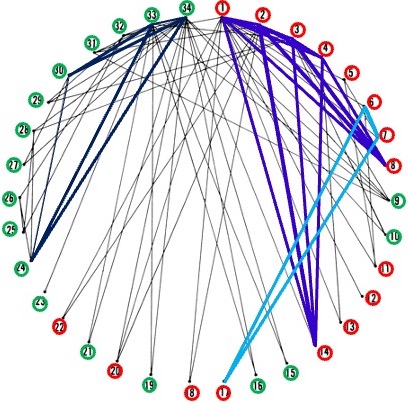}\vspace{-2mm}}
	\subcaptionbox{\revise{\small{Det. densest subgraph}} \label{fig:case_det}} [0.235\textwidth]{\includegraphics[scale=0.29]{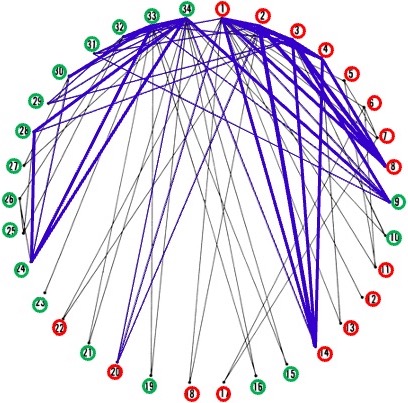}\vspace{-2mm}}
	\vspace{-1mm}
	\caption{\revise{\small 3 \textsf{MPDS}s (different colors) and the densest subgraph in the deterministic version of the \textsf{Karate Club} dataset. The color of a node denotes the community to which it belongs \cite{zachary1977information}. We only show $3$ of the top-$20$ \textsf{MPDS}s for better visualization; the others have node sets that are subsets of those shown. The thickness of each edge is proportional to its probability.}}
	\label{fig:case_topkvsdet}
	\vspace{-4mm}
\end{figure}

\revise{ 
As shown in Table \ref{tab:mpdsvsdet}, the (estimated) densest subgraph probability of the \textsf{MPDS} is much higher than that of the densest subgraph in the deterministic version of an uncertain graph (denoted by \textsf{DDS}). The reason is that the rich information encoded in the edge probabilities is ignored by the \textsf{DDS}. In practice, nodes may be densely connected via low-probability edges, e.g., due to noise.
Our proposed \textsf{MPDS} can capture and filter out this, leading to more useful results in uncertain graphs compared to the \textsf{DDS}.}

\subsection{\revise{Considering {\em All}  vs. One Densest Subgraph(s) in Each Sampled World}} 
\label{sec:allvs1}

\begin{table}[t!]
	\centering
	\begin{center}
		\caption{\revise{\small Distribution (mean, standard deviation, quartiles) of the number of densest subgraphs (edge, 3-clique, diamond) across all sampling rounds in our \textsf{MPDS} algorithm}}
 		\vspace{-1mm}
        \revise{
	    \footnotesize
		\begin{tabular}{c||c|c}
		    \textbf{Notion} & \textbf{\sf Karate Club} & \textbf{\sf LastFM} \\ \hline \hline
		    Edge & 1.12, 0.54, \{1, 1, 1\} & 2613.24, 22825.66, \{15, 127, 1023\} \\ \hline
		    3-Clique & 1.35, 0.91, \{1, 1, 1\} & 1880.74, 22134, \{31, 127, 511\} \\ \hline
		    Diamond & 1.18, 0.71, \{1, 1, 1\} & 3.52, 9.6, \{1, 1, 3\}
		\end{tabular}}
 		\vspace{-4mm}
		\label{tab:dist}
	\end{center}
\end{table}

\begin{table}[t!]
	\footnotesize
	\centering
	\begin{center}
		\caption{\revise{\small Average densest subgraph probabilities of the top-$10$ \textsf{MPDS}s returned by computing all vs only one densest subgraph in each sampled world}}
		\vspace{-1mm}
		\revise{
		\begin{tabular}{c||c|c|c|c|c|c}
    	\multirow{2}{*}{\textbf{Dataset}} & \multicolumn{2}{|c|}{\textbf{Edge}} & \multicolumn{2}{|c|}{\textbf{3-Clique}} & \multicolumn{2}{|c}{\textbf{Diamond}} \\ \cline{2-7}
			& \textbf{All} & \textbf{One} & \textbf{All} & \textbf{One} & \textbf{All} & \textbf{One} \\ \hline \hline
			\textsf{Karate Club} & $\bf 0.006$ & $0.005$ & $\bf 0.019$ & $0.018$ & $\bf 0.011$ & $0.01$ \\ \hline
			\textsf{LastFM} & $\bf 0.054$ & $0.004$ & $\bf 0.08$ & $0.004$ & $\bf 0.009$ & $0.007$ 
		\end{tabular}}
 		\vspace{-5mm}
		\label{tab:allvs1}
	\end{center}
\end{table}

\begin{table}[t!]
	\footnotesize
	\centering
	\begin{center}
		\caption{\shepherd{\small Purity (\S~\ref{sec:case_club}) of the node sets in the top-$k$ \textsf{MPDS}s and in the existing notions of dense subgraphs over \textsf{Karate Club} uncertain graph. There are only two cores and two trusses in this graph; thus the entries for $k > 2$ are empty for those subgraphs.}}
		\vspace{-1.5mm}
		\shepherd{
		\begin{tabular}{c||c|c|c|c}
			\multirow{2}{*}{\textbf{Top-$k$}} & \multicolumn{4}{|c}{\textbf{Purity}} \\ \cline{2-5} 
             & \textbf{MPDS} & \textbf{EDS} & \textbf{Core} & \textbf{Truss}  \\ \hline \hline
			\textsf{$1$} & ${\bf 1}$ & $0.6$ & $0.5$ & $0.538$ \\ \hline
			\textsf{$2$} & ${\bf 1}$ & $0.6$ & $0.515$ & $0.536$ \\ \hline
			\textsf{$5$} & ${\bf 1}$ & $0.749$ & - & - \\ \hline
			\textsf{$10$} & ${\bf 1}$ & $0.699$ & - & - \\ 
		\end{tabular}}
		\vspace{-7mm}
		\label{tab:purity}
	\end{center}
\end{table}

\revise{As shown in Table~\ref{tab:dist}, the number of densest subgraphs in a deterministic sample can be very large in practice (e.g., in {\sf LastFM}). 
Thus, if we compute only one densest subgraph (instead of all such subgraphs) in each world, the frequency of a particular subgraph in the candidate set (and hence its estimated densest subgraph probability) will reduce. Table \ref{tab:allvs1} presents that the average (estimated) densest subgraph probability of the top-$10$ results reduces if we compute only one densest subgraph. This gap can be up to 20$\times$ when the number of densest subgraphs is huge ({\sf LastFM} as per Table~\ref{tab:dist}).
Therefore, computing all densest subgraphs in each sampled world is critical in finding {\sf MPDS}s.}

\vspace{-1mm}
\subsection{Case Studies: \textsf{Karate Club} Network}
\label{sec:case_club}

\vspace{-1mm}
Densest subgraphs in social networks can correspond to communities \cite{DourisboureGP09}, filter bubbles and echo chambers \cite{asatani2021dense, L22}.

\spara{\revise{Comparison of the \textsf{MPDS} with the deterministic densest subgraph.}}
\revise{Figure \ref{fig:case_topkvsdet} shows that the densest subgraph given by the deterministic version of the \textsf{Karate Club} dataset has a much larger size with many low-probability edges and contains nodes from both ground-truth communities \cite{zachary1977information}. On the contrary, each of the \textsf{MPDS}s contains nodes from only one ground-truth community and has edges with higher probabilities. This demonstrates that \textsf{MPDS} is more powerful in detecting communities in uncertain graphs than simply considering the densest subgraph in the deterministic version.}

\begin{figure}[t!]
	\centering
	\subcaptionbox{\small{Typically developed} \label{fig:case_brain_1}} [0.235\textwidth]{\includegraphics[scale=0.225]{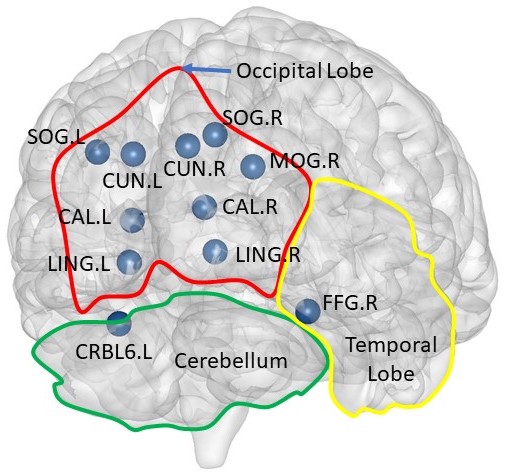}\vspace{-2mm}}
	\subcaptionbox{\small{ASD-affected} \label{fig:case_brain_2}} [0.235\textwidth]{\includegraphics[scale=0.225]{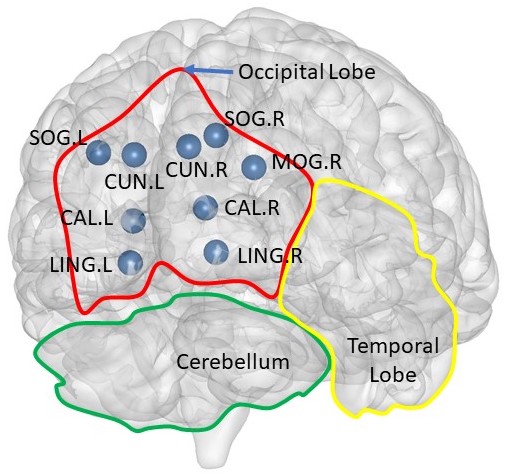}\vspace{-2mm}}
	\vspace{-1mm}
	\caption{\small Node sets of 3-clique \textsf{MPDS}s in brain networks. The colored boundaries denote the cerebellum, occipital, and temporal lobes.}
	\label{fig:case_brain}
	\vspace{-5mm}
\end{figure}

\begin{figure}[t!]
	\centering
	\subcaptionbox{\small{Typically developed} \label{fig:case_graph_1}} [0.235\textwidth]{\includegraphics[scale=0.07]{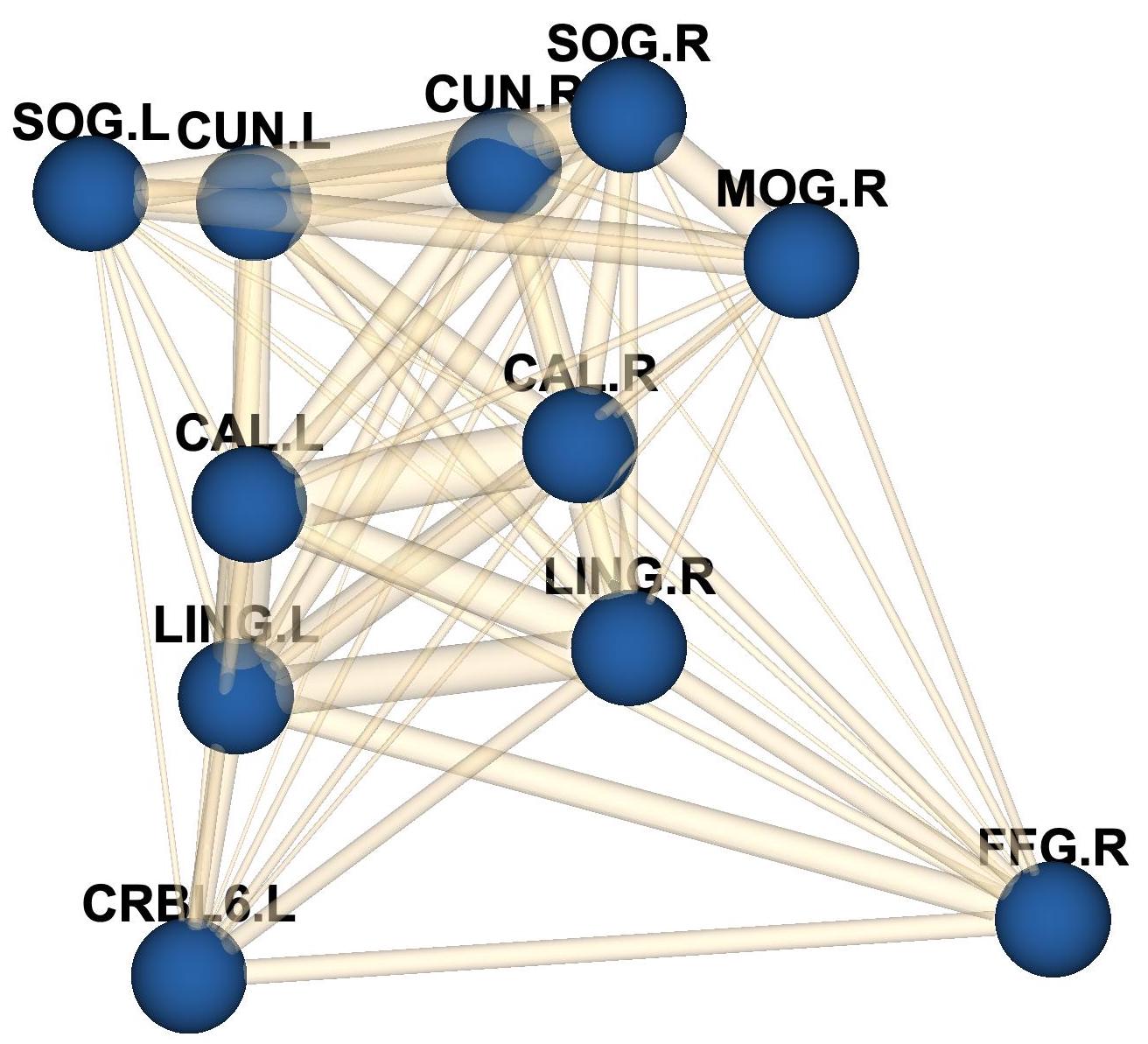}\vspace{-2mm}}
	\subcaptionbox{\small{ASD-affected} \label{fig:case_graph_2}} [0.235\textwidth]{\includegraphics[scale=0.07]{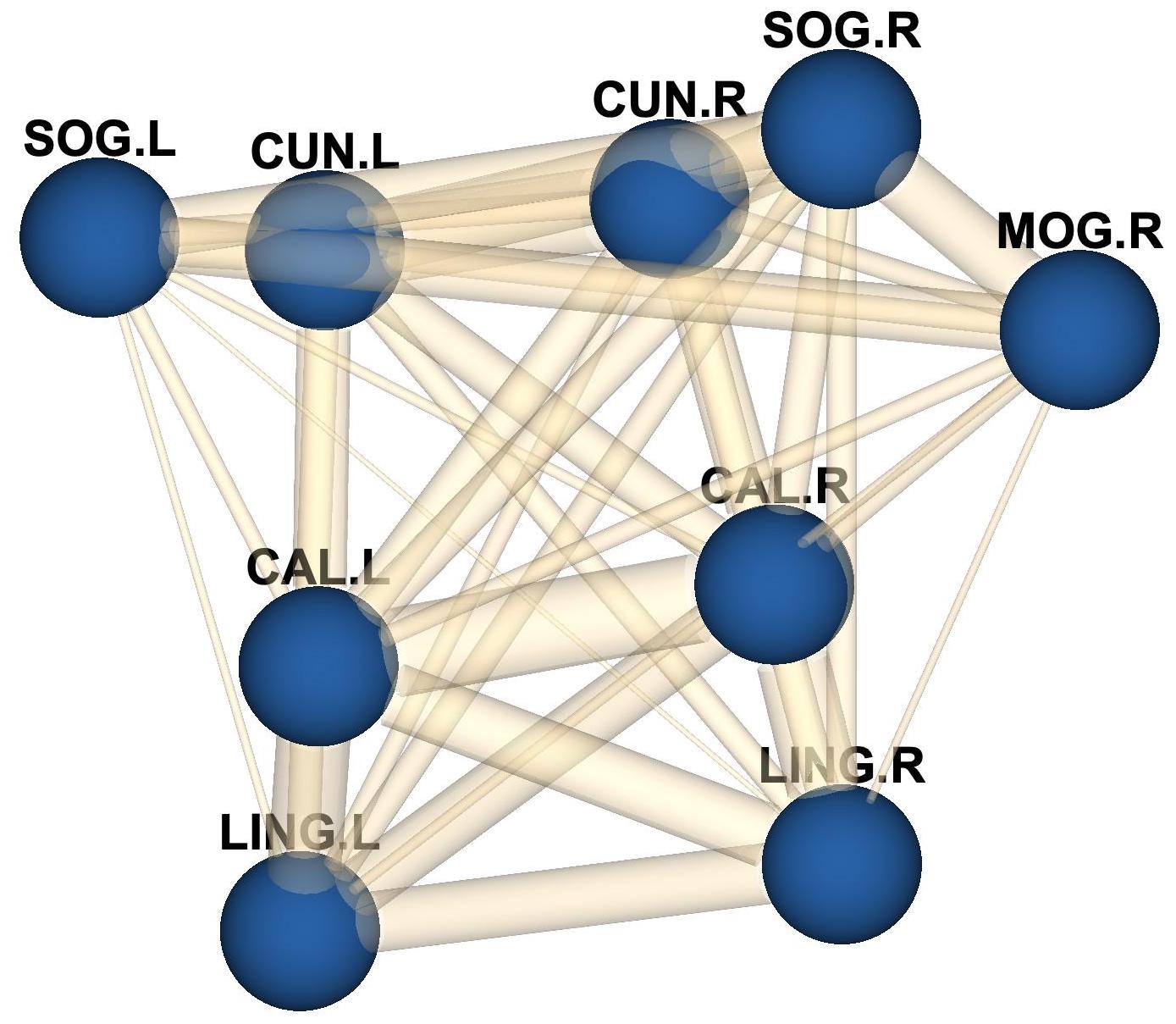}\vspace{-2mm}}
	\vspace{-1mm}
	\caption{\small 3-clique \textsf{MPDS}s in brain networks. The thickness of each edge is proportional to its probability.}
	\label{fig:case_graph}
	\vspace{-5mm}
\end{figure}

\spara{\revise{Comparison with other notions of densest subgraphs in uncertain graphs.}}
\shepherd{Table \ref{tab:pcc} already shows that the \textsf{MPDS} of the \textsf{Karate Club} dataset has a much higher probabilistic clustering coefficient than the other existing dense subgraph notions.} We show that the \textsf{MPDS} also represents a more meaningful and concise community (of club members) than the other subgraphs. All four subgraphs are shown in Figure \ref{fig:case_club}. Notice that the \textsf{MPDS} only contains nodes \revise{from one single ground-truth community \cite{zachary1977information} and has edges with higher probabilities}, in contrast to all the other subgraphs which contain nodes \revise{from both communities and have many low-probability edges}. 
\shepherd{Moreover, in Table~\ref{tab:purity}, we report the average purity (i.e., highest fraction of nodes from the same ground-truth community \cite{zachary1977information} in a node set) of the top-$k$ (up to $k=10$) subgraphs returned by each notion, and observe that MPDSs always achieve 100\% purity. Thus, users can retrieve the top-$k$ MPDSs to identify high-quality communities.}
This case study highlights the importance of computing \textsf{MPDS}s despite other notions of dense subgraphs in uncertain graphs.

\vspace{-1mm}
\subsection{Case Studies: Brain Networks}
\label{sec:case_brain}
\vspace{-1mm}

\begin{figure}[t!]
	\centering
	\subcaptionbox{\revise{\small{Typically developed}} \label{fig:case_brain_1_eds}} [0.235\textwidth]{\includegraphics[scale=0.375]{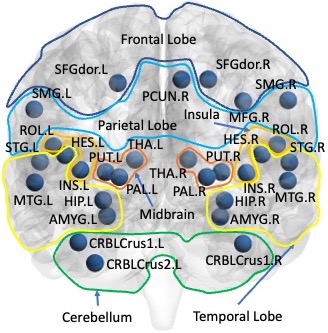}\vspace{-2mm}}
	\subcaptionbox{\revise{\small{ASD-affected}} \label{fig:case_brain_2_eds}} [0.235\textwidth]{\includegraphics[scale=0.375]{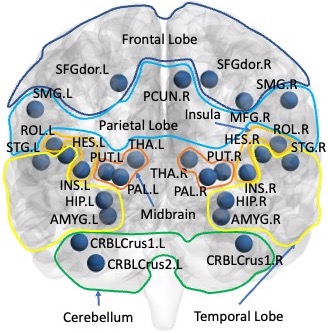}\vspace{-2mm}}
	\vspace{-1mm}
	\caption{\revise{\small Node sets of 3-clique \textsf{EDS}s in brain networks. The colored boundaries denote various brain regions as shown.}}
	\label{fig:case_brain_eds}
	\vspace{-5mm}
\end{figure}

\begin{figure}[t!]
	\centering
	\subcaptionbox{\revise{\small{Typically developed}} \label{fig:case_graph_1_eds}} [0.235\textwidth]{\includegraphics[scale=0.25]{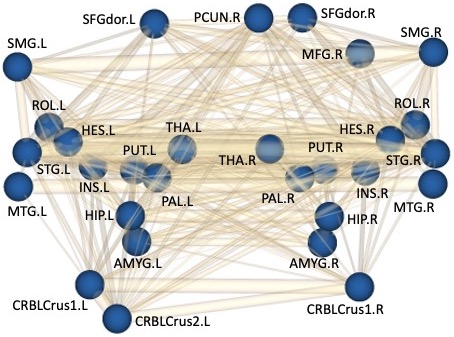}\vspace{-2mm}}
	\subcaptionbox{\revise{\small{ASD-affected}} \label{fig:case_graph_2_eds}} [0.235\textwidth]{\includegraphics[scale=0.25]{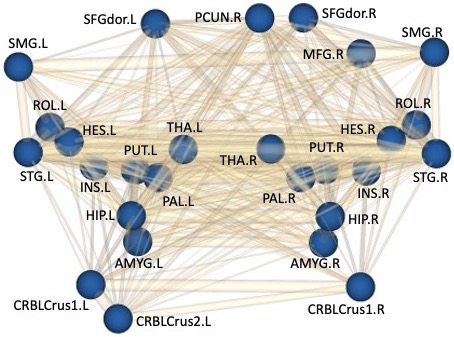}\vspace{-2mm}}
	\vspace{-1mm}
	\caption{\revise{\small 3-clique \textsf{EDS}s in brain networks. The thickness of each edge is proportional to its probability.}}
	\label{fig:case_graph_eds}
	\vspace{-5mm}
\end{figure}

A brain network can be defined as an uncertain graph where nodes are brain regions of interest (ROIs), an edge indicates co-activation between two ROIs, and an edge probability indicates the strength of the co-activation signal. 
Dense subgraphs in brain networks can represent brain regions responding together to stimuli \cite{legenstein_et_al:LIPIcs} or related to diseases \cite{wu2021extracting}.

The dataset we use \cite{conf.fninf.2013} contains data of 52 \textit{Typically Developed} (TD) children and 49 children suffering from \textit{Autism Spectrum Disorder} (ASD). 
Each subject is represented as a graph over 116 nodes (ROIs). $\mathcal{G}_{ASD}$ and $\mathcal{G}_{TD}$ are uncertain graphs, defined over the same set of nodes as the original ones, while the probability of each edge is the average of those of the same edge across all graphs in the ASD and TD groups.

Using BrainNet Viewer \cite{xia2013brainnet}, we show the 3-clique \textsf{MPDS}s for both $\mathcal{G}_{TD}$ and $\mathcal{G}_{ASD}$ in Figures \ref{fig:case_brain} and \ref{fig:case_graph}.
The \textsf{MPDS} in $\mathcal{G}_{ASD}$ lies entirely in the occipital lobe, in contrast to that in $\mathcal{G}_{TD}$, which also contains one node in the temporal lobe and one in the cerebellum. Besides, the \textsf{MPDS} in $\mathcal{G}_{ASD}$ is more symmetrical than that in $\mathcal{G}_{TD}$, since the former has only one node (MOG.R) without its counterpart in the other hemisphere, while the latter has two more such nodes (CRBL6.L and FFG.R). This is consistent with the results of different works in neuroscience indicating that, in contrast to typically developed brains, those affected by ASD are characterized by under-connectivity between distant brain regions and over-connectivity between closer ones \cite{Neu1, Neu2}, and that the hemispheres of ASD-affected brains are more symmetrical than those of typically developed ones \cite{Neur4}. Our consistent findings underline the importance of finding \textsf{MPDS}s in uncertain brain networks that can differentiate healthy and autistic brains.


\begin{figure}[t!]
	\centering
	\subcaptionbox{\shepherd{\small{Typically developed}} \label{fig:case_brain_1_core}} [0.235\textwidth]{\includegraphics[scale=0.375]{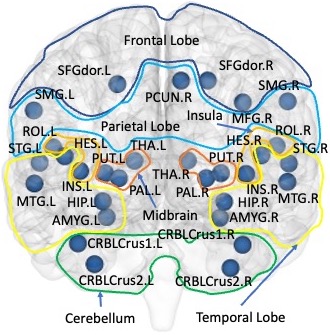}\vspace{-2mm}}
	\subcaptionbox{\shepherd{\small{ASD-affected}} \label{fig:case_brain_2_core}} [0.235\textwidth]{\includegraphics[scale=0.375]{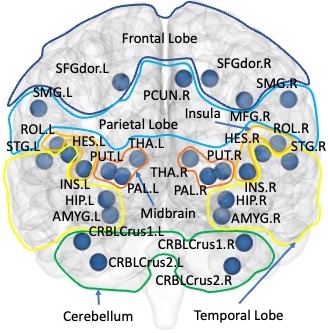}\vspace{-2mm}}
	\vspace{-1mm}
	\caption{\shepherd{\small Node sets of innermost cores in brain networks. The colored boundaries denote various brain regions as shown. Not all nodes are shown due to space constraints in the 2D projection of the 3D brain.}}
	\label{fig:case_brain_core}
	\vspace{-5mm}
\end{figure}

\begin{figure}[t!]
	\centering
	\subcaptionbox{\shepherd{\small{Typically developed}} \label{fig:case_graph_1_core}} [0.235\textwidth]{\includegraphics[scale=0.275]{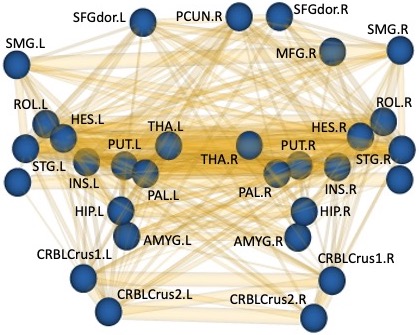}\vspace{-2mm}}
	\subcaptionbox{\shepherd{\small{ASD-affected}} \label{fig:case_graph_2_core}} [0.235\textwidth]{\includegraphics[scale=0.275]{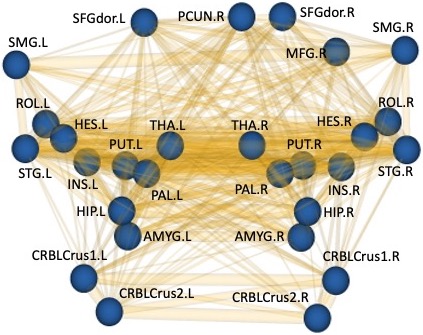}\vspace{-2mm}}
	\vspace{-1mm}
	\caption{\shepherd{\small Innermost cores in brain networks. The thickness of each edge is proportional to its probability. Not all nodes are shown due to space constraints in the 2D projection of the 3D brain.}}
	\label{fig:case_graph_core}
	\vspace{-5mm}
\end{figure}

\begin{figure}[t!]
	\centering
	\subcaptionbox{\shepherd{\small{Typically developed}} \label{fig:case_brain_1_truss}} [0.235\textwidth]{\includegraphics[scale=0.375]{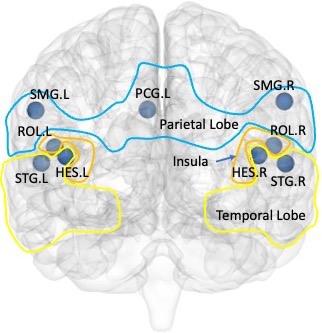}\vspace{-2mm}}
	\subcaptionbox{\shepherd{\small{ASD-affected}} \label{fig:case_brain_2_truss}} [0.235\textwidth]{\includegraphics[scale=0.375]{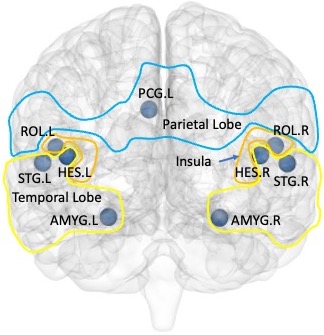}\vspace{-2mm}}
	\vspace{-1mm}
	\caption{\shepherd{\small Node sets of innermost trusses in brain networks. The colored boundaries denote various brain regions as shown.}}
	\label{fig:case_brain_truss}
	\vspace{-5mm}
\end{figure}

\begin{figure}[t!]
	\centering
	\subcaptionbox{\shepherd{\small{Typically developed}} \label{fig:case_graph_1_truss}} [0.235\textwidth]{\includegraphics[scale=0.35]{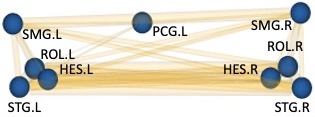}\vspace{-2mm}}
	\subcaptionbox{\shepherd{\small{ASD-affected}} \label{fig:case_graph_2_truss}} [0.235\textwidth]{\includegraphics[scale=0.35]{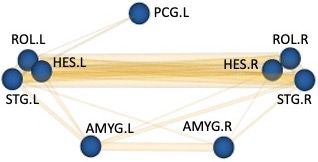}\vspace{-2mm}}
	\vspace{-1mm}
	\caption{\shepherd{\small Innermost trusses in brain networks. The thickness of each edge is proportional to its probability.}}
	\label{fig:case_graph_truss}
	\vspace{-3mm}
\end{figure}

\revise{We now show that the expected (3-clique) densest subgraph (\textsf{EDS}) cannot characterize and distinguish autistic brains, unlike our proposed \textsf{MPDS}. Note that in the existing literature, the \textsf{EDS} notion was defined only for edge density \cite{Zou13}; however, we show that it can be extended to clique and pattern densities (in Appendix \ref{sec:exp_ext}). Figures \ref{fig:case_brain_eds} and \ref{fig:case_graph_eds} present 3-clique \textsf{EDS}s in both $\mathcal{G}_{TD}$ and $\mathcal{G}_{ASD}$. Both of them span as many as 9 brain regions and are similar (w.r.t. symmetry) since both only have the same 3 nodes (PCUN.R, MFG.R, CRBLCrus2.L) without their counterparts in the other hemispheres. 
These contradict the characteristics of autistic brains in the biological literature \cite{Neu1, Neu2, Neur4} and fail to distinguish them from normal ones. This is consistent with our observation in Example \ref{exam:expected} that \textsf{EDS}s tend to be very large and span several unimportant nodes, and hence can be less meaningful in real-world applications. Our \textsf{MPDS} is more powerful than the existing expectation-based notion \textsf{EDS} in analyzing autistic brains.} \shepherd{Also, as shown in Figures \ref{fig:case_brain_core}-\ref{fig:case_graph_core} (resp. \ref{fig:case_brain_truss}-\ref{fig:case_graph_truss}), the innermost cores (resp. trusses) for both brains span multiple brain regions and are similar (w.r.t. symmetry) since they have the same nodes PCUN.R, MFG.R (resp. PCG.L) without their counterparts in the other hemispheres. Thus, even the innermost cores and trusses cannot well-characterize and distinguish autistic brains.}

\begin{figure*}
    \centering
    \vspace{1mm}
    \subcaptionbox{\small{Edge and Clique-\textsf{MPDS}} \label{fig:mpds_time_clique}}[0.235\textwidth] {
\begin{tikzpicture}[thick,xscale=0.575,yscale=0.45, every node/.style={transform shape}]

\definecolor{darkgray176}{RGB}{176,176,176}
\definecolor{lightgray204}{RGB}{204,204,204}
\definecolor{yellow}{RGB}{255,255,0}

\begin{axis}[
legend cell align={left},
legend columns=2,
legend style={
  fill opacity=0.8,
  draw opacity=1,
  text opacity=1,
  at={(0.03,0.97)},
  anchor=north west,
  draw=lightgray204
},
log basis y={10},
tick align=outside,
tick pos=left,
x grid style={darkgray176},
xlabel={Dataset},
xmin=-0.24, xmax=2.84,
xtick style={color=black},
xtick={0.3,1.3,2.3},
xticklabels={KarateClub,IntelLab,LastFM},
y grid style={darkgray176},
ylabel={Running time (seconds)},
ymin=0.01, ymax=1000,
ymode=log,
ytick style={color=black}
]
\draw[draw=black,fill=black] (axis cs:-0.1,0.01) rectangle (axis cs:0.1,0.0324132);
\addlegendimage{ybar,area legend,draw=black,fill=black}
\addlegendentry{edge}

\draw[draw=black,fill=black] (axis cs:0.9,0.01) rectangle (axis cs:1.1,0.847842);
\draw[draw=black,fill=black] (axis cs:1.9,0.01) rectangle (axis cs:2.1,26.7116);
\draw[draw=black,fill=yellow] (axis cs:0.1,0.01) rectangle (axis cs:0.3,0.0374249);
\addlegendimage{ybar,area legend,draw=black,fill=yellow}
\addlegendentry{3-clique}

\draw[draw=black,fill=yellow] (axis cs:1.1,0.01) rectangle (axis cs:1.3,1.17446);
\draw[draw=black,fill=yellow] (axis cs:2.1,0.01) rectangle (axis cs:2.3,31.2467);
\draw[draw=black,fill=white,postaction={pattern=north west lines}] (axis cs:0.3,0.01) rectangle (axis cs:0.5,0.041801);
\addlegendimage{ybar,area legend,draw=black,fill=white,postaction={pattern=north west lines}}
\addlegendentry{4-clique}

\draw[draw=black,fill=white,postaction={pattern=north west lines}] (axis cs:1.3,0.01) rectangle (axis cs:1.5,3.86074);
\draw[draw=black,fill=white,postaction={pattern=north west lines}] (axis cs:2.3,0.01) rectangle (axis cs:2.5,29.8532);
\draw[draw=black,fill=white,postaction={pattern=crosshatch}] (axis cs:0.5,0.01) rectangle (axis cs:0.7,0.0339983);
\addlegendimage{ybar,area legend,draw=black,fill=white,postaction={pattern=crosshatch}}
\addlegendentry{5-clique}

\draw[draw=black,fill=white,postaction={pattern=crosshatch}] (axis cs:1.5,0.01) rectangle (axis cs:1.7,4.5865);
\draw[draw=black,fill=white,postaction={pattern=crosshatch}] (axis cs:2.5,0.01) rectangle (axis cs:2.7,28.9051);
\end{axis}

\end{tikzpicture}}\vspace{-2mm}}
    \subcaptionbox{\small{Pattern-\textsf{MPDS}} \label{fig:mpds_time_pattern}}[0.235\textwidth] {
\begin{tikzpicture}[thick,xscale=0.575,yscale=0.45, every node/.style={transform shape}]

\definecolor{darkgray176}{RGB}{176,176,176}
\definecolor{lightgray204}{RGB}{204,204,204}
\definecolor{yellow}{RGB}{255,255,0}

\begin{axis}[
legend cell align={left},
legend columns=2,
legend style={
  fill opacity=0.8,
  draw opacity=1,
  text opacity=1,
  at={(0.03,0.97)},
  anchor=north west,
  draw=lightgray204
},
log basis y={10},
tick align=outside,
tick pos=left,
x grid style={darkgray176},
xlabel={Dataset},
xmin=-0.24, xmax=2.84,
xtick style={color=black},
xtick={0.3,1.3,2.3},
xticklabels={KarateClub,IntelLab,LastFM},
y grid style={darkgray176},
ylabel={Running time (seconds)},
ymin=0.01, ymax=1000,
ymode=log,
ytick style={color=black}
]
\draw[draw=black,fill=black] (axis cs:-0.1,0.01) rectangle (axis cs:0.1,0.0838423);
\addlegendimage{ybar,area legend,draw=black,fill=black}
\addlegendentry{2-star}

\draw[draw=black,fill=black] (axis cs:0.9,0.01) rectangle (axis cs:1.1,4.99751);
\draw[draw=black,fill=black] (axis cs:1.9,0.01) rectangle (axis cs:2.1,34.6813);
\draw[draw=black,fill=yellow] (axis cs:0.1,0.01) rectangle (axis cs:0.3,0.0482538);
\addlegendimage{ybar,area legend,draw=black,fill=yellow}
\addlegendentry{3-star}

\draw[draw=black,fill=yellow] (axis cs:1.1,0.01) rectangle (axis cs:1.3,6.7791);
\draw[draw=black,fill=yellow] (axis cs:2.1,0.01) rectangle (axis cs:2.3,31.3884);
\draw[draw=black,fill=white,postaction={pattern=north west lines}] (axis cs:0.3,0.01) rectangle (axis cs:0.5,0.03546925);
\addlegendimage{ybar,area legend,draw=black,fill=white,postaction={pattern=north west lines}}
\addlegendentry{c3-star}

\draw[draw=black,fill=white,postaction={pattern=north west lines}] (axis cs:1.3,0.01) rectangle (axis cs:1.5,5.247);
\draw[draw=black,fill=white,postaction={pattern=north west lines}] (axis cs:2.3,0.01) rectangle (axis cs:2.5,30.8374);
\draw[draw=black,fill=white,postaction={pattern=crosshatch}] (axis cs:0.5,0.01) rectangle (axis cs:0.7,0.0337282);
\addlegendimage{ybar,area legend,draw=black,fill=white,postaction={pattern=crosshatch}}
\addlegendentry{diamond}

\draw[draw=black,fill=white,postaction={pattern=crosshatch}] (axis cs:1.5,0.01) rectangle (axis cs:1.7,8.90483);
\draw[draw=black,fill=white,postaction={pattern=crosshatch}] (axis cs:2.5,0.01) rectangle (axis cs:2.7,32.5197);
\end{axis}

\end{tikzpicture}}\vspace{-2mm}}
    \subcaptionbox{\small{Edge and Clique-\textsf{NDS}} \label{fig:nds_time_clique}}[0.235\textwidth] {
\begin{tikzpicture}[thick,xscale=0.575,yscale=0.45, every node/.style={transform shape}]

\definecolor{darkgray176}{RGB}{176,176,176}
\definecolor{lightgray204}{RGB}{204,204,204}
\definecolor{yellow}{RGB}{255,255,0}

\begin{axis}[
legend cell align={left},
legend columns=2,
legend style={
  fill opacity=0.8,
  draw opacity=1,
  text opacity=1,
  at={(0.03,0.97)},
  anchor=north west,
  draw=lightgray204
},
log basis y={10},
tick align=outside,
tick pos=left,
x grid style={darkgray176},
xlabel={Dataset},
xmin=-0.24, xmax=3.84,
xtick style={color=black},
xtick={0.3,1.3,2.3,3.3},
xticklabels={HS,Biomine,Twitter,Friendster},
y grid style={darkgray176},
ylabel={Running time (seconds)},
ymin=1000, ymax=1000000,
ymode=log,
ytick style={color=black}
]
\draw[draw=black,fill=black] (axis cs:-0.1,1000) rectangle (axis cs:0.1,2573.721282);
\addlegendimage{ybar,area legend,draw=black,fill=black}
\addlegendentry{edge}

\draw[draw=black,fill=black] (axis cs:0.9,1000) rectangle (axis cs:1.1,3360.81395507);
\draw[draw=black,fill=black] (axis cs:1.9,1000) rectangle (axis cs:2.1,15366.79786586);
\draw[draw=black,fill=black] (axis cs:2.9,1000) rectangle (axis cs:3.1,18890.72);
\draw[draw=black,fill=yellow] (axis cs:0.1,1000) rectangle (axis cs:0.3,45831.4862004224);
\addlegendimage{ybar,area legend,draw=black,fill=yellow}
\addlegendentry{3-clique}

\draw[draw=black,fill=yellow] (axis cs:1.1,1000) rectangle (axis cs:1.3,20018.8822351);
\draw[draw=black,fill=yellow] (axis cs:2.1,1000) rectangle (axis cs:2.3,128537.807066);
\draw[draw=black,fill=yellow] (axis cs:3.1,1000) rectangle (axis cs:3.3,32176.64);
\draw[draw=black,fill=white,postaction={pattern=north west lines}] (axis cs:0.3,1000) rectangle (axis cs:0.5,70584.4584456415);
\addlegendimage{ybar,area legend,draw=black,fill=white,postaction={pattern=north west lines}}
\addlegendentry{4-clique}

\draw[draw=black,fill=white,postaction={pattern=north west lines}] (axis cs:1.3,1000) rectangle (axis cs:1.5,8370.7240541);
\draw[draw=black,fill=white,postaction={pattern=north west lines}] (axis cs:2.3,1000) rectangle (axis cs:2.5,79569.606196);
\draw[draw=black,fill=white,postaction={pattern=north west lines}] (axis cs:3.3,1000) rectangle (axis cs:3.5,32401.28);
\draw[draw=black,fill=white,postaction={pattern=crosshatch}] (axis cs:0.5,1000) rectangle (axis cs:0.7,86475.9407575804);
\addlegendimage{ybar,area legend,draw=black,fill=white,postaction={pattern=crosshatch}}
\addlegendentry{5-clique}

\draw[draw=black,fill=white,postaction={pattern=crosshatch}] (axis cs:1.5,1000) rectangle (axis cs:1.7,3438.36012006);
\draw[draw=black,fill=white,postaction={pattern=crosshatch}] (axis cs:2.5,1000) rectangle (axis cs:2.7,59923.02928928);
\draw[draw=black,fill=white,postaction={pattern=crosshatch}] (axis cs:3.5,1000) rectangle (axis cs:3.7,32293.44);
\end{axis}

\end{tikzpicture}}\vspace{-2mm}}
    \subcaptionbox{\small{Heuristic Pattern-\textsf{NDS}} \label{fig:nds_time_pattern}}[0.235\textwidth] {
\begin{tikzpicture}[thick,xscale=0.575,yscale=0.45, every node/.style={transform shape}]

\definecolor{darkgray176}{RGB}{176,176,176}
\definecolor{lightgray204}{RGB}{204,204,204}
\definecolor{yellow}{RGB}{255,255,0}

\begin{axis}[
legend cell align={left},
legend columns=2,
legend style={
  fill opacity=0.8,
  draw opacity=1,
  text opacity=1,
  at={(0.03,0.97)},
  anchor=north west,
  draw=lightgray204
},
log basis y={10},
tick align=outside,
tick pos=left,
x grid style={darkgray176},
xlabel={Dataset},
xmin=-0.24, xmax=3.84,
xtick style={color=black},
xtick={0.3,1.3,2.3,3.3},
xticklabels={HS,Biomine,Twitter,Friendster},
y grid style={darkgray176},
ylabel={Running time (seconds)},
ymin=100, ymax=100000,
ymode=log,
ytick style={color=black}
]
\draw[draw=black,fill=black] (axis cs:-0.1,100) rectangle (axis cs:0.1,955.019785882);
\addlegendimage{ybar,area legend,draw=black,fill=black}
\addlegendentry{2-star}

\draw[draw=black,fill=black] (axis cs:0.9,100) rectangle (axis cs:1.1,2681.84);
\draw[draw=black,fill=black] (axis cs:1.9,100) rectangle (axis cs:2.1,3347.235383988);
\draw[draw=black,fill=black] (axis cs:2.9,100) rectangle (axis cs:3.1,34808.06117801541);
\draw[draw=black,fill=yellow] (axis cs:0.1,100) rectangle (axis cs:0.3,850.289022446);
\addlegendimage{ybar,area legend,draw=black,fill=yellow}
\addlegendentry{3-star}

\draw[draw=black,fill=yellow] (axis cs:1.1,100) rectangle (axis cs:1.3,2710.67);
\draw[draw=black,fill=yellow] (axis cs:2.1,100) rectangle (axis cs:2.3,3370.14299202);
\draw[draw=black,fill=yellow] (axis cs:3.1,100) rectangle (axis cs:3.3,35046.56);
\draw[draw=black,fill=white,postaction={pattern=north west lines}] (axis cs:0.3,100) rectangle (axis cs:0.5,11615.90522194);
\addlegendimage{ybar,area legend,draw=black,fill=white,postaction={pattern=north west lines}}
\addlegendentry{c3-star}

\draw[draw=black,fill=white,postaction={pattern=north west lines}] (axis cs:1.3,100) rectangle (axis cs:1.5,3209.21);
\draw[draw=black,fill=white,postaction={pattern=north west lines}] (axis cs:2.3,100) rectangle (axis cs:2.5,4811.86632444);
\draw[draw=black,fill=white,postaction={pattern=north west lines}] (axis cs:3.3,100) rectangle (axis cs:3.5,50056.82715358301);
\draw[draw=black,fill=white,postaction={pattern=crosshatch}] (axis cs:0.5,100) rectangle (axis cs:0.7,1282.567218304);
\addlegendimage{ybar,area legend,draw=black,fill=white,postaction={pattern=crosshatch}}
\addlegendentry{diamond}

\draw[draw=black,fill=white,postaction={pattern=crosshatch}] (axis cs:1.5,100) rectangle (axis cs:1.7,9273.12);
\draw[draw=black,fill=white,postaction={pattern=crosshatch}] (axis cs:2.5,100) rectangle (axis cs:2.7,9136.1439638);
\draw[draw=black,fill=white,postaction={pattern=crosshatch}] (axis cs:3.5,100) rectangle (axis cs:3.7,95078.3379757092);
\end{axis}

\end{tikzpicture}}\vspace{-2mm}}
    \caption{\small Running times of our proposed methods; \textsf{MPDS} for the smaller datasets and \textsf{NDS} for the larger ones; HS denotes \textsf{HomoSapiens}}
    \vspace{-3mm}
    \label{fig:runtime}
\end{figure*}
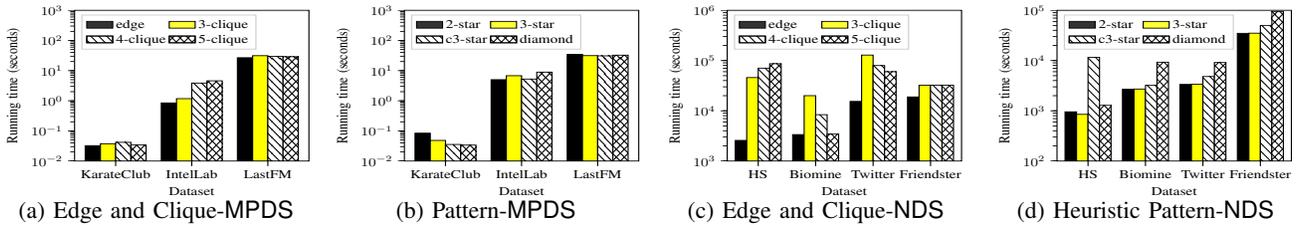

\vspace{-1mm}
\subsection{Efficiency}
\label{sec:large_graph_results}
\vspace{-1mm}
We report the running times of our methods in Figure~\ref{fig:runtime}. As shown in Figures \ref{fig:mpds_time_clique} and \ref{fig:nds_time_clique}, the running times for edge density are smaller than those for clique density. This is because the flow networks involved in computing edge densest subgraphs are much smaller; they only contain nodes for each node in the sampled possible worlds (\S~\ref{sec:top_k_mpds_algorithm}), in contrast to those for $h$-clique densest subgraphs which also contain nodes for each $(h-1)$-clique contained in $h$-cliques in the sampled possible worlds (\S~\ref{sec:clique}). However, there is no clear winner among 3-clique, 4-clique, and 5-clique. This is because, even if larger cliques take a longer time to enumerate, smaller cliques can be more in number, thereby increasing the size of the flow network and hence the running time. Similar arguments can be made for the four patterns (Figures \ref{fig:mpds_time_pattern} and \ref{fig:nds_time_pattern}).

\begin{table}[t]
	\footnotesize
	\centering
	\begin{center}
		\caption{\small Densest subgraph containment probabilities and running times of approximate and heuristic \textsf{Pattern-NDS}; \textsf{Karate Club}}
		\vspace{-1mm}
		\begin{tabular}{c||c|c|c|c}
	\multirow{2}{*}{\textbf{Pattern}} & \multicolumn{2}{|c|}{\textbf{Containment Probability}} & \multicolumn{2}{|c}{\textbf{Running Time (seconds)}} \\ \cline{2-5}
			& \textbf{Approx.} & \textbf{Heuristic} & \textbf{Approx.} & \textbf{Heuristic} \\ \hline \hline
			\textbf{2-star} & {\bf 0.625} & 0.6 & 0.0561 & {\bf 0.0129} \\ \hline
			\textbf{3-star} & {\bf 0.55} & 0.525 & 0.0242 & {\bf 0.0101} \\ \hline
			\textbf{c3-star} & {\bf 0.3313} & 0.262 & 0.0244 & {\bf 0.0109} \\ \hline
			\textbf{diamond} & {\bf 0.8} & 0.7687 & 0.0212 & {\bf 0.0093} 
		\end{tabular}
 		\vspace{-4mm}
		\label{tab:heuristic_lf}
	\end{center}
\end{table}

\begin{table}[t!]
	\centering
	\begin{center}
		\caption{\revise{\small Densest subgraph containment probabilities and running times of approximate and heuristic \textsf{Edge-NDS}; \textsf{Friendster}}}
        \revise{
	    \footnotesize
		\begin{tabular}{c||c|c}
		    \textbf{Method} & \textbf{Containment Probability} & \textbf{Running time (hours)} \\ \hline \hline
		    \textbf{Approximate} & {\bf 0.025} & 21.216 \\ \hline
		    \textbf{Heuristic} & 0.021 & {\bf 4.97} 
		\end{tabular}}
 		\vspace{-4mm}
		\label{tab:heuristic_friendster}
	\end{center}
\end{table}

\begin{table}[t]
	\footnotesize
	\centering
	\begin{center}
		\caption{\small Sampling methods' comparison; {\sf MPDS; Intel Lab}}
		\vspace{-1mm}
		\begin{tabular}{c||c|c|c}
			\textbf{Method} & \textbf{$\theta$}  & \textbf{Running Time (seconds)} & \textbf{Memory (MB)}  \\ \hline \hline
			\textbf{MC} & 160  & 2.233 &  {\bf 2.016} \\ \hline
			\textbf{LP} & 160  & 2.164 &  2.656 \\ \hline
			\textbf{RSS} & {\bf 120}  & {\bf 2.111} &  3.281 
		\end{tabular}
		\vspace{-4mm}
		\label{tab:sampling_methods_intel}
	\end{center}
\end{table}

\begin{table}[t]
	\footnotesize
	\centering
	\begin{center}
		\caption{\small Sampling methods' comparison; {\sf NDS; Biomine}}
		\vspace{-1mm}
		\begin{tabular}{c||c|c|c}
			\textbf{Method} & \textbf{$\theta$}  & \textbf{Running Time (seconds)} & \textbf{Memory (MB)}  \\ \hline \hline
			\textbf{MC} & 640  & 2248 &  {\bf 781} \\ \hline
			\textbf{LP} & 640  & 2178 &  1029 \\ \hline
			\textbf{RSS} & {\bf 600}  & {\bf 2027} &  1516 
		\end{tabular}
		\vspace{-4mm}
		\label{tab:sampling_methods_biomine}
	\end{center}
\end{table}

For patterns on the larger graphs (Figure \ref{fig:nds_time_pattern}), we use our heuristic \textsf{Pattern-NDS} method in place of our approximate one (\S~\ref{sec:nucleus}). In fact, the approximate \textsf{Pattern-NDS} method takes more than 3 days to run on these datasets. To show the quality of the results returned by our heuristic method, Table \ref{tab:heuristic_lf} compares the running times and the (approximate) densest subgraph containment probabilities (as in \S~\ref{sec:comp_others}) of the solutions returned by both methods on the smaller \textsf{Karate Club} dataset. Clearly, for all patterns, the heuristic method returns solutions of comparable quality to those returned by the approximate one, while having a lower running time.

\revise{For our largest dataset \textsf{Friendster}, our approximate methods require higher running times. Thus, we adopt a similar heuristic as in \S~\ref{sec:pattern} to edge and clique densities (which returns, in each sampled world, all subgraphs denser than the innermost core), and run these heuristic methods on \textsf{Friendster}. As shown in Table \ref{tab:heuristic_friendster}, this heuristic method yields reasonably good results, while significantly reducing the running time.}

\vspace{-1mm}
\spara{Varying sampling strategies.} For sampling possible worlds, we compare our employed {\em Monte Carlo (MC)} with {\em  Lazy Propagation (LP)} and {\em Recursive Stratified Sampling (RSS)} (\S~\ref{sec:top_k_mpds_algorithm}).
Tables \ref{tab:sampling_methods_intel} and \ref{tab:sampling_methods_biomine} show that all strategies result in {\em similar sample sizes} $\theta$ at convergence (\S~\ref{sec:parameter}) and have {\em comparable running times}, while MC consumes much less memory.
The other datasets exhibit similar trends and are omitted due to space constraints. Our \textsf{MPDS} and \textsf{NDS} solutions require sampling {\em all} edges in the graph. When applying LP, the visit frequencies of {\em all} edges need to be stored and updated. This reduces the speedup
while increasing the memory usage. RSS has a lower variance than MC; the difference depends on the variances of the estimates in various strata \cite{RSS16}. Unlike reliability queries, our solutions can hardly benefit from the BFS-based edge selection strategy (starting from high-degree nodes), since we need to consider the states of {\em all} edges. Ergo, the estimation variance is not reduced much, leading to a similar (but slightly smaller) sample size $\theta$ \cite{KeKQ19}.
Thus, there is limited speedup on top of the memory overhead due to recursion.
Hence, we adopt MC as the default strategy.


\begin{table}[t!]
	\footnotesize
	\centering
	\begin{center}
		\caption{\small Running times (seconds) of the exact and our approx. \textsf{MPDS} methods on synthetic graphs with number of edges $m$}
		\vspace{-1mm}
		\begin{tabular}{c||c|c|c|c|c|c|c}
			\multirow{2}{*}{\textbf{Graph}} & \multirow{2}{*}{\textbf{$m$}} & \multicolumn{2}{|c|}{\textbf{Edge}} & \multicolumn{2}{|c|}{\textbf{3-Clique}} & \multicolumn{2}{|c}{\textbf{Diamond}} \\ \cline{3-8}
			& & \textbf{Exact} & \textbf{Ours} & \textbf{Exact} & \textbf{Ours} & \textbf{Exact} & \textbf{Ours} \\ \hline \hline
			\textbf{BA\_7} & 13 & 0.172 & {\bf 0.02} & 0.225 & {\bf 0.025} & 0.349 & {\bf 0.025} \\ \hline
			\textbf{BA\_9} & 21 & 58.08 & {\bf 0.04} & 77.264 & {\bf 0.042} & 93.095 & {\bf 0.045} \\ \hline
			\textbf{ER\_7} & 20 & 71.39 & {\bf 0.033} & 78.919 & {\bf 0.036} & 140.361 & {\bf 0.04} \\ \hline
			\textbf{ER\_9} & 30 & 97413 & {\bf 0.048} & 123253 & {\bf 0.054} & 273557 & {\bf 0.064} 
		\end{tabular}
		\vspace{-3mm}
		\label{tab:exact_time}
	\end{center}
\end{table}

\vspace{-1mm}
\subsection{Comparison with the Exact Algorithms}
\label{sec:comp_exact}
%
We compare the effectiveness and efficiency of our proposed approximate top-$k$ \textsf{MPDS} algorithms with those of the corresponding exact algorithms. \revise{Since the computation is $\sharpP$-hard (\ref{sec:hardness})},
the exact top-$k$ \textsf{MPDS} methods need to compute the densest subgraphs in all $2^m$ possible worlds, which
is infeasible for the datasets in Table \ref{tab:dataset}. Thus, we generate four small synthetic graphs according to the Erd{\" o}s-R{\'e}nyi (ER) \cite{erdHos1960evolution} and Barab{\'a}si-Albert (BA) \cite{barabasi1999emergence} models, and assign edge probabilities uniformly at random. We denote by ER\_$n$ (resp. BA\_$n$) the ER (resp. BA) graph with $n$ nodes. The number $m$ of edges of the generated datasets, along with the running times of the exact and our proposed algorithms for edge, 3-clique, and diamond \textsf{MPDS}, are shown in Table \ref{tab:exact_time}. Clearly, for each notion of density on each dataset, the exact method takes some orders of magnitude longer time than our method.

We next show the effectiveness. For $k = 1$, in all cases, our method returns the same result as the exact one. For $k \in \{5, 10\}$, we compute, for each rank, the F1-score of the node set returned by our method with respect to the exact one as the ground truth. Figure \ref{fig:exact_prec} shows these scores averaged across all ranks from $1$ to $k$. Clearly, the scores are reasonably high in all cases, which implies a high accuracy for each method.

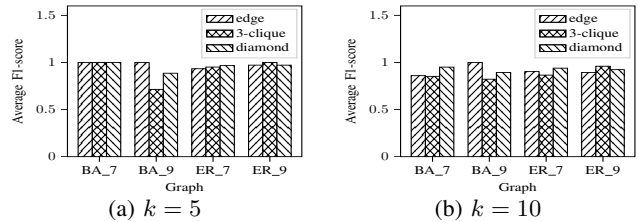
\begin{figure}[t]
	\centering
	\subcaptionbox{\small{$k = 5$} \label{fig:exact_prec_5}} [0.238\textwidth]{
\begin{tikzpicture}[thick,xscale=0.575,yscale=0.45, every node/.style={transform shape}]

\definecolor{darkgray176}{RGB}{176,176,176}
\definecolor{lightgray204}{RGB}{204,204,204}

\begin{axis}[
legend cell align={left},
legend style={fill opacity=0.8, draw opacity=1, text opacity=1, draw=lightgray204},
tick align=outside,
tick pos=left,
x grid style={darkgray176},
xlabel={Graph},
xmin=-0.3125, xmax=3.8125,
xtick style={color=black},
xtick={0.25,1.25,2.25,3.25},
xticklabels={BA\_7,BA\_9,ER\_7,ER\_9},
y grid style={darkgray176},
ylabel={Average F1-score},
ymin=0, ymax=1.6,
ytick style={color=black}
]
\draw[draw=black,fill=white,postaction={pattern=north east lines}] (axis cs:-0.125,0) rectangle (axis cs:0.125,1);
\addlegendimage{ybar,area legend,draw=black,fill=white,postaction={pattern=north east lines}}
\addlegendentry{edge}

\draw[draw=black,fill=white,postaction={pattern=north east lines}] (axis cs:0.875,0) rectangle (axis cs:1.125,1);
\draw[draw=black,fill=white,postaction={pattern=north east lines}] (axis cs:1.875,0) rectangle (axis cs:2.125,0.933333333333);
\draw[draw=black,fill=white,postaction={pattern=north east lines}] (axis cs:2.875,0) rectangle (axis cs:3.125,0.971428571429);
\draw[draw=black,fill=white,postaction={pattern=crosshatch}] (axis cs:0.125,0) rectangle (axis cs:0.375,1);
\addlegendimage{ybar,area legend,draw=black,fill=white,postaction={pattern=crosshatch}}
\addlegendentry{3-clique}

\draw[draw=black,fill=white,postaction={pattern=crosshatch}] (axis cs:1.125,0) rectangle (axis cs:1.375,0.714285714286);
\draw[draw=black,fill=white,postaction={pattern=crosshatch}] (axis cs:2.125,0) rectangle (axis cs:2.375,0.951282051282);
\draw[draw=black,fill=white,postaction={pattern=crosshatch}] (axis cs:3.125,0) rectangle (axis cs:3.375,1);
\draw[draw=black,fill=white,postaction={pattern=north west lines}] (axis cs:0.375,0) rectangle (axis cs:0.625,1);
\addlegendimage{ybar,area legend,draw=black,fill=white,postaction={pattern=north west lines}}
\addlegendentry{diamond}

\draw[draw=black,fill=white,postaction={pattern=north west lines}] (axis cs:1.375,0) rectangle (axis cs:1.625,0.885714285714);
\draw[draw=black,fill=white,postaction={pattern=north west lines}] (axis cs:2.375,0) rectangle (axis cs:2.625,0.966666666667);
\draw[draw=black,fill=white,postaction={pattern=north west lines}] (axis cs:3.375,0) rectangle (axis cs:3.625,0.971428571429);
\end{axis}

\end{tikzpicture}}\vspace{-2mm}}
	\subcaptionbox{\small{$k = 10$} \label{fig:exact_prec_10}} [0.238\textwidth]{
\begin{tikzpicture}[thick,xscale=0.575,yscale=0.45, every node/.style={transform shape}]

\definecolor{darkgray176}{RGB}{176,176,176}
\definecolor{lightgray204}{RGB}{204,204,204}

\begin{axis}[
legend cell align={left},
legend style={fill opacity=0.8, draw opacity=1, text opacity=1, draw=lightgray204},
tick align=outside,
tick pos=left,
x grid style={darkgray176},
xlabel={Graph},
xmin=-0.3125, xmax=3.8125,
xtick style={color=black},
xtick={0.25,1.25,2.25,3.25},
xticklabels={BA\_7,BA\_9,ER\_7,ER\_9},
y grid style={darkgray176},
ylabel={Average F1-score},
ymin=0, ymax=1.6,
ytick style={color=black}
]
\draw[draw=black,fill=white,postaction={pattern=north east lines}] (axis cs:-0.125,0) rectangle (axis cs:0.125,0.86);
\addlegendimage{ybar,area legend,draw=black,fill=white,postaction={pattern=north east lines}}
\addlegendentry{edge}

\draw[draw=black,fill=white,postaction={pattern=north east lines}] (axis cs:0.875,0) rectangle (axis cs:1.125,1);
\draw[draw=black,fill=white,postaction={pattern=north east lines}] (axis cs:1.875,0) rectangle (axis cs:2.125,0.90303030303);
\draw[draw=black,fill=white,postaction={pattern=north east lines}] (axis cs:2.875,0) rectangle (axis cs:3.125,0.893214285714);
\draw[draw=black,fill=white,postaction={pattern=crosshatch}] (axis cs:0.125,0) rectangle (axis cs:0.375,0.852380952381);
\addlegendimage{ybar,area legend,draw=black,fill=white,postaction={pattern=crosshatch}}
\addlegendentry{3-clique}

\draw[draw=black,fill=white,postaction={pattern=crosshatch}] (axis cs:1.125,0) rectangle (axis cs:1.375,0.822177822178);
\draw[draw=black,fill=white,postaction={pattern=crosshatch}] (axis cs:2.125,0) rectangle (axis cs:2.375,0.86554001554);
\draw[draw=black,fill=white,postaction={pattern=crosshatch}] (axis cs:3.125,0) rectangle (axis cs:3.375,0.96);
\draw[draw=black,fill=white,postaction={pattern=north west lines}] (axis cs:0.375,0) rectangle (axis cs:0.625,0.950505050505);
\addlegendimage{ybar,area legend,draw=black,fill=white,postaction={pattern=north west lines}}
\addlegendentry{diamond}

\draw[draw=black,fill=white,postaction={pattern=north west lines}] (axis cs:1.375,0) rectangle (axis cs:1.625,0.892857142857);
\draw[draw=black,fill=white,postaction={pattern=north west lines}] (axis cs:2.375,0) rectangle (axis cs:2.625,0.939393939394);
\draw[draw=black,fill=white,postaction={pattern=north west lines}] (axis cs:3.375,0) rectangle (axis cs:3.625,0.923736263736);
\end{axis}

\end{tikzpicture}}\vspace{-2mm}}
	\vspace{-2mm}
	\caption{\small F1-score (averaged across all ranks from 1 to $k$) of the top-$k$ node sets returned by our \textsf{MPDS} methods w.r.t. the exact ones}
	\label{fig:exact_prec}
	\vspace{-3mm}
\end{figure}

\vspace{-0.7mm}
\spara{\revise{Varying edge probabilities.}}
\revise{To show the impact of the edge probabilities on the effectiveness of our method, we generate 3 synthetic graphs by assigning normally distributed edge probabilities with means 0.2, 0.5 and 0.8 to ER\_7. For each graph, we run both the exact and our approximate \textsf{MPDS} methods. Figure \ref{fig:prob} above shows that our method returns reasonably good results for all edge probability distributions. The running time of our method is longer for larger values of the mean, since the edge probabilities and hence the possible worlds sampled in each round are larger.}

\begin{figure}[t]
	\centering
	\subcaptionbox{\revise{\small{Running Time}} \label{fig:time}} [0.235\textwidth]{
	\small
	\revise{
	\begin{tabular}{c||c}
	    Mean & Running Time (sec.) \\ \hline\hline
	    0.2 & 0.0899594 \\ \hline
	    0.5 & 0.102649 \\ \hline
	    0.8 & 0.121513 
	\end{tabular}}}
	\subcaptionbox{\revise{\small{Average F1-score}} \label{fig:f1}} [0.235\textwidth]{
\begin{tikzpicture}[thick,xscale=0.575,yscale=0.45, every node/.style={transform shape}]

\definecolor{darkgray176}{RGB}{176,176,176}
\definecolor{lightgray204}{RGB}{204,204,204}

\begin{axis}[
legend cell align={left},
legend style={fill opacity=0.8, draw opacity=1, text opacity=1, draw=lightgray204},
tick align=outside,
tick pos=left,
x grid style={darkgray176},
xlabel={Mean edge probability},
xmin=-0.3125, xmax=2.8125,
xtick style={color=black},
xtick={0.25,1.25,2.25},
xticklabels={0.2,0.5,0.8},
y grid style={darkgray176},
ylabel={Average F1-score},
ymin=0, ymax=1.6,
ytick style={color=black}
]
\draw[draw=black,fill=white,postaction={pattern=north east lines}] (axis cs:-0.125,0) rectangle (axis cs:0.125,1);
\addlegendimage{ybar,area legend,draw=black,fill=white,postaction={pattern=north east lines}}
\addlegendentry{$k=1$}

\draw[draw=black,fill=white,postaction={pattern=north east lines}] (axis cs:0.875,0) rectangle (axis cs:1.125,1);
\draw[draw=black,fill=white,postaction={pattern=north east lines}] (axis cs:1.875,0) rectangle (axis cs:2.125,1);
\draw[draw=black,fill=white,postaction={pattern=crosshatch}] (axis cs:0.125,0) rectangle (axis cs:0.375,0.981818181818);
\addlegendimage{ybar,area legend,draw=black,fill=white,postaction={pattern=crosshatch}}
\addlegendentry{$k=5$}

\draw[draw=black,fill=white,postaction={pattern=crosshatch}] (axis cs:1.125,0) rectangle (axis cs:1.375,1);
\draw[draw=black,fill=white,postaction={pattern=crosshatch}] (axis cs:2.125,0) rectangle (axis cs:2.375,1);
\draw[draw=black,fill=white,postaction={pattern=north west lines}] (axis cs:0.375,0) rectangle (axis cs:0.625,0.941818181818);
\addlegendimage{ybar,area legend,draw=black,fill=white,postaction={pattern=north west lines}}
\addlegendentry{$k=10$}

\draw[draw=black,fill=white,postaction={pattern=north west lines}] (axis cs:1.375,0) rectangle (axis cs:1.625,0.96);
\draw[draw=black,fill=white,postaction={pattern=north west lines}] (axis cs:2.375,0) rectangle (axis cs:2.625,0.90303030303);
\end{axis}

\end{tikzpicture}}\vspace{-2mm}}
	\vspace{-2mm}
	\caption{\revise{\small Running time of our \textsf{MPDS} method and F1-score (averaged across all ranks from 1 to $k$) of the top-$k$ node sets returned by our \textsf{MPDS} method w.r.t. the exact ones for synthetic edge probabilities}}
	\label{fig:prob}
	\vspace{-5mm}
\end{figure}
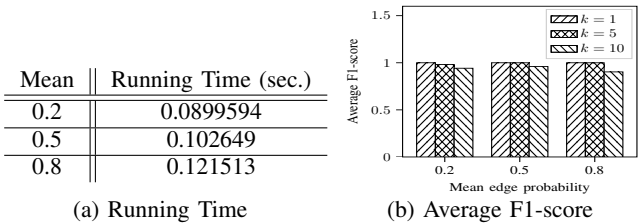

\vspace{-1mm}
\subsection{Parameter Sensitivity}
\label{sec:parameter}
\vspace{-1mm}


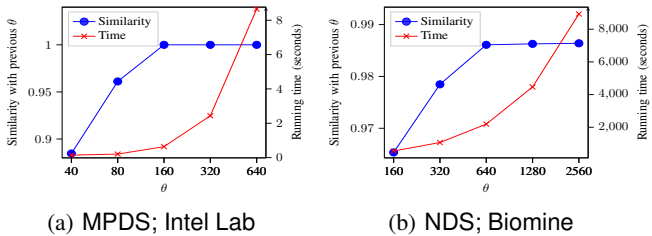
\begin{figure}[t!]
	\centering
	\subcaptionbox{\small{\textsf{MPDS; Intel Lab}} \label{fig:theta_mpds}} [0.235\textwidth]{
\begin{tikzpicture}[thick,xscale=0.5,yscale=0.45, every node/.style={transform shape}]

\definecolor{darkgray176}{RGB}{176,176,176}
\definecolor{lightgray204}{RGB}{204,204,204}

\begin{axis}[
legend cell align={left},
legend style={
  fill opacity=0.8,
  draw opacity=1,
  text opacity=1,
  at={(0.03,0.97)},
  anchor=north west,
  draw=lightgray204
},
tick align=outside,
tick pos=left,
x grid style={darkgray176},
xlabel={\(\displaystyle \theta\)},
xmin=-0.2, xmax=4.2,
xtick style={color=black},
xtick={0,1,2,3,4},
xticklabels={40,80,160,320, 640},
y grid style={darkgray176},
ylabel={Similarity with previous $\theta$},
ymin=0.88, ymax=1.04,
ytick style={color=black}
]
\addplot [semithick, blue, mark=*, mark size=3, mark options={solid}]
table {%
0 0.8846153846153846
1 0.9611320754716981
2 1
3 1
4 1
};
\label{mpds_similarity}
\end{axis}

\begin{axis}[
axis y line=right,
legend cell align={left},
legend style={
  fill opacity=0.8,
  draw opacity=1,
  text opacity=1,
  at={(0.03,0.97)},
  anchor=north west,
  draw=lightgray204
},
tick align=outside,
x grid style={darkgray176},
xmin=-0.2, xmax=4.2,
xtick pos=left,
xtick style={color=black},
xtick={0,1,2,3,4},
xticklabels={40,80,160,320,640},
y grid style={darkgray176},
ylabel={Running time (seconds)},
ymin=-0.01582355000000001, ymax=8.76257255,
ytick pos=right,
ytick style={color=black},
yticklabel style={anchor=west}
]
\addlegendimage{/pgfplots/refstyle=mpds_similarity}\addlegendentry{Similarity}
\addplot [semithick, red, mark=x, mark size=3, mark options={solid}]
table {%
0 0.143197
1 0.203725
2 0.638251
3 2.43623
4 8.65375
};
\label{mpds_time}
\addlegendentry{Time}
\end{axis}

\end{tikzpicture}}}
	\subcaptionbox{\small{\textsf{NDS; Biomine}} \label{fig:theta_nds}} [0.235\textwidth]{
\begin{tikzpicture}[thick,xscale=0.5,yscale=0.45, every node/.style={transform shape}]

\definecolor{darkgray176}{RGB}{176,176,176}
\definecolor{lightgray204}{RGB}{204,204,204}

\begin{axis}[
legend cell align={left},
legend style={
  fill opacity=0.8,
  draw opacity=1,
  text opacity=1,
  at={(0.03,0.97)},
  anchor=north west,
  draw=lightgray204
},
tick align=outside,
tick pos=left,
x grid style={darkgray176},
xlabel={\(\displaystyle \theta\)},
xmin=-0.2, xmax=4.2,
xtick style={color=black},
xtick={0,1,2,3,4},
xticklabels={160,320,640,1280,2560},
y grid style={darkgray176},
ylabel={Similarity with previous $\theta$},
ymin=0.96429147651045, ymax=0.9933648151655,
ytick style={color=black}
]
\addplot [semithick, blue, mark=*, mark size=3, mark options={solid}]
table {%
0 0.965338976738
1 0.978472262113
2 0.986100051489
3 0.986281134433
4 0.986388981289
};
\label{nds_similarity}
\end{axis}

\begin{axis}[
axis y line=right,
legend cell align={left},
legend style={
  fill opacity=0.8,
  draw opacity=1,
  text opacity=1,
  at={(0.03,0.97)},
  anchor=north west,
  draw=lightgray204
},
tick align=outside,
x grid style={darkgray176},
xmin=-0.2, xmax=4.2,
xtick pos=left,
xtick style={color=black},
xtick={0,1,2,3,4},
xticklabels={160,320,640,1280,2560},
y grid style={darkgray176},
ylabel={Running time (seconds)},
ymin=127.8710238453, ymax=9350.3656705007,
ytick pos=right,
ytick style={color=black},
yticklabel style={anchor=west}
]
\addlegendimage{/pgfplots/refstyle=nds_similarity}\addlegendentry{Similarity}
\addplot [semithick, red, mark=x, mark size=3, mark options={solid}]
table {%
0 547.075325966
1 1067.48044801
2 2192.31315804
3 4465.58068419
4 8931.16136838
};
\label{nds_time}
\addlegendentry{Time}
\end{axis}

\end{tikzpicture}}}
	\vspace{-2mm}
	\caption{\small Variation, w.r.t. $\theta$, of the running time and the similarity of the returned node sets to those for the previous value of $\theta$}
	\label{fig:theta}
	\vspace{-5mm}
\end{figure}


\begin{figure}[t!]
	\centering
	\subcaptionbox{\revise{\small Varying $k$} \label{fig:vary_k}} [0.235\textwidth]{
\begin{tikzpicture}[thick,xscale=0.575,yscale=0.45, every node/.style={transform shape}]

\definecolor{darkgray176}{RGB}{176,176,176}
\definecolor{darkorange25512714}{RGB}{255,127,14}
\definecolor{forestgreen4416044}{RGB}{44,160,44}
\definecolor{lightgray204}{RGB}{204,204,204}
\definecolor{steelblue31119180}{RGB}{31,119,180}

\begin{axis}[
legend cell align={left},
legend style={fill opacity=0.8, draw opacity=1, text opacity=1, draw=lightgray204, at={(0.01, 0.01)}, anchor=south west},
tick align=outside,
tick pos=left,
x grid style={darkgray176},
xlabel={\(\displaystyle k\)},
xmin=-0.2, xmax=4.2,
xtick style={color=black},
xtick={0, 1, 2, 3, 4},
xticklabels={1, 5, 10, 50, 100},
y grid style={darkgray176},
ylabel={Avg. Est. Containment Prob.},
ymin=0.965, ymax=1.003,
ytick style={color=black}
]
\addplot [semithick, steelblue31119180, mark=*, mark size=3, mark options={solid}]
table {%
0 1
1 0.999218830313
2 0.998671803281
3 0.991765672078
4 0.979281246234
};
\addlegendentry{HomoSapiens}
\addplot [semithick, darkorange25512714, mark=triangle*, mark size=3, mark options={solid,rotate=180}]
table {%
0 1
1 0.9984376125
2 0.9976564
3 0.9958439775
4 0.995062759375
};
\addlegendentry{Biomine}
\addplot [semithick, forestgreen4416044, mark=x, mark size=3, mark options={solid}]
table {%
0 1
1 0.98750016
2 0.98125009875
3 0.97149992625
4 0.968999871375
};
\addlegendentry{Twitter}
\end{axis}

\end{tikzpicture}}}
	\subcaptionbox{\revise{\small Varying $l_m$; \textsf{Homo Sapiens}} \label{fig:vary_lm_hs}} [0.235\textwidth]{
\begin{tikzpicture}[thick,xscale=0.575,yscale=0.45, every node/.style={transform shape}]

\definecolor{darkgray176}{RGB}{176,176,176}
\definecolor{darkorange25512714}{RGB}{255,127,14}
\definecolor{forestgreen4416044}{RGB}{44,160,44}
\definecolor{lightgray204}{RGB}{204,204,204}
\definecolor{steelblue31119180}{RGB}{31,119,180}

\begin{axis}[
legend cell align={left},
legend style={fill opacity=0.8, draw opacity=1, text opacity=1, draw=lightgray204, at={(0.01, 0.01)}, anchor=south west},
tick align=outside,
tick pos=left,
x grid style={darkgray176},
xlabel={\(\displaystyle l_m\)},
xmin=-0.2, xmax=4.2,
xtick style={color=black},
xtick={0, 1, 2, 3, 4},
xticklabels={200, 210, 220, 230, 240},
y grid style={darkgray176},
ylabel={Avg. Est. Containment Prob.},
ymin=0, ymax=1,
ytick style={color=black}
]
\addplot [semithick, steelblue31119180, mark=*, mark size=3, mark options={solid}]
table {%
0 0.998437420313
1 0.998437420313
2 0.830546858203
3 0.0886718808281
4 0
};
\end{axis}

\end{tikzpicture}}}
	\vspace{-2mm}
	\caption{\revise{\small Variation of the average estimated densest subgraph containment probability with $k$ and $l_m$ for \textsf{NDS} queries}}
	\label{fig:vary_nds}
	\vspace{-5mm}
\end{figure}
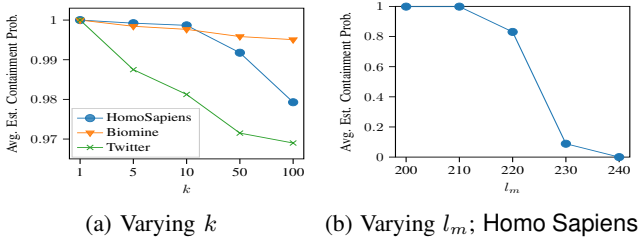

\vspace{-1mm}
\spara{Varying $\theta$.} We study the effects of the variation of $\theta$ in Figure \ref{fig:theta}.
As in \S~\ref{sec:large_graph_results}, although we show the results for only two datasets, the other datasets exhibit similar trends.
For \textsf{MPDS} on the \textsf{Intel Lab} dataset (Figure \ref{fig:theta_mpds}), increasing $\theta$ steadily increases the similarity of the returned node sets to those for the previous value of $\theta$ till a certain point ($\theta = 160$), after which it converges,
while the running time keeps increasing. Similar effects can be observed at $\theta = 640$ for \textsf{NDS} on the \textsf{Biomine} dataset (Figure \ref{fig:theta_nds}). We choose such values of $\theta$ as default for the respective datasets in our experiments.

\vspace{-1mm}
\spara{\revise{Varying $k$ for top-$k$ \textsf{NDS}s.}}
\revise{Figure \ref{fig:vary_k} shows that increasing the value of $k$ results in the reduction of the average estimated densest subgraph containment probability. This shows that increasing the value of $k$ too much results in the returned node sets being of lower quality.}

\spara{\revise{Varying $l_m$.}}
\revise{Figure \ref{fig:vary_lm_hs} shows the variation of the average estimated densest subgraph containment probability with $l_m$. Initially, till a certain value of $l_m$, the probability remains constant; since the returned node sets should be closed, our algorithm avoids reporting too small node sets even for smaller values of $l_m$. After that, the probability keeps decreasing with $l_m$ till a certain value beyond which it remains $0$ as there is no larger closed node set. This helps to choose a feasible upper bound on $l_m$. Note that Figure \ref{fig:vary_lm_hs} shows the results for the \textsf{Homo Sapiens} dataset; the other datasets exhibit similar trends but over different ranges of values of $l_m$, and are omitted due to space constraints.}



\section{Conclusions}
\label{sec:conclusions}
We studied the novel problem of finding the Most Probable Densest Subgraph (\textsf{MPDS}) in an uncertain graph, according to edge, clique and pattern densities. \revise{We proved that computing the densest subgraph probability for any given node set is $\sharpP$-hard.} We proposed a solution which returns the most frequent densest subgraphs from some sampled possible worlds, with theoretical accuracy guarantees. As building blocks, we designed novel algorithms to compute all clique- and pattern-densest subgraphs in a deterministic graph. We then extended our algorithm to compute the Nucleus Densest Subgraphs (\textsf{NDS}) via reduction to \revise{closed} frequent itemset mining. Our experiments on large real-world graphs showed that our methods are efficient. We depicted that our methods are reasonably accurate compared to the exact ones, while being orders of magnitude faster, using some small synthetic graphs. Moreover, we showed that our \textsf{MPDS} is significantly different from existing notions of dense subgraphs in uncertain graphs. Our case studies showcased the usefulness of the \textsf{MPDS} in differentiating autistic brains from healthy ones and in detecting useful communities in social networks.


\appendices

\section{Proof of Correctness of Algorithm \ref{alg:clique_densest}}
\label{sec:clique_proof}

\begin{table}[t]
	\footnotesize
	\centering
	\begin{center}
		\caption{\small Table of notations in appendix}
		\vspace{-1mm}
		\begin{tabular}{c||c}
		    \hline
		    {\bf Notation} & {\bf Description} \\ \hline \hline
			$\Lambda$ & Set of all $(h-1)$-cliques contained in $h$-cliques in $G$ \\ \hline
			$\mu_h(G)$ & Number of $h$-cliques in $G$ \\ \hline
			$\rho_h^*$ & Maximum $h$-clique density of any subgraph of $G$ \\ \hline
			$\mathcal{H}$ & The critical flow network (Algorithm \ref{alg:flow}) \\ \hline
			$c(\mathcal{S}, \mathcal{T})$ & Capacity of $s$-$t$ cut $(\mathcal{S}, \mathcal{T})$ in $\mathcal{H}$ \\ \hline
			$f^*$ &  A maximum flow in $\mathcal{H}$ \\ \hline
			$\mathcal{H}_{f^*}$ & Residual graph of $\mathcal{H}$ under $f^*$ \\ \hline
			$\mathcal{H}^C$ & Graph of strongly connected components of $\mathcal{H}_{f^*}$ \\ \hline
		\end{tabular}
		\vspace{-5mm}
		\label{tab:notations}
	\end{center}
\end{table}

We prove that Algorithm \ref{alg:clique_densest} correctly computes all $h$-clique densest subgraphs of $G$ as below:

{\bf (1) } Lines \ref{line:lb}-\ref{line:core} are justified by Lemma \ref{lem:core} and the fact that $\Tilde{\rho}$ is a lower bound on $\rho_h^*$, the optimal $h$-clique density of a subgraph in $G$ \cite{FYCLX19}.

{\bf (2)} Line \ref{line:max_flow} of Algorithm \ref{alg:clique_densest} computes a maximum flow $f^*$ in $\mathcal{H}$. We show that the capacity of such flow can capture the number of $h$-cliques in $G$ (Lemma~\ref{lem:cut} and Corollary~\ref{cor:mincut}). Thus, it facilitates the densest subgraph finding (Lemma~\ref{lem:densest_cut}).

{\bf (3)} We show the properties of $\mathcal{H}_{f^*}$ regarding the source node $s$ (Lemma \ref{lem:sv}) and the sink node $t$ (Lemmas \ref{lem:tv}, \ref{lem:tc}), along with their SCCs (Lemma \ref{lem:prop}). These illustrate why we do not consider the SCCs of $s$ and $t$ in Line \ref{line:scc} of Algorithm \ref{alg:clique_densest}.

{\bf (4)} We prove that all densest subgraphs can be determined exactly once via enumerating all independent sets (Corollary~\ref{cor:ic_densest}), along with several definitions (Definitions~\ref{def:nt}-\ref{def:dcc}) and auxiliary lemmas (Lemmas~\ref{lem:ic_dcc}-\ref{lem:dcc_densest}).

\begin{lem}
[\cite{FYCLX19}]
\label{lem:core}
Given a graph $G = (V, E)$ and an integer $h > 0$, the $h$-clique densest subgraph of $G$ is contained in its $(\lceil \rho_h^* \rceil, h)$-core, where $\rho_h^* = \max_{S \subseteq V} \rho_h(S)$.
\end{lem}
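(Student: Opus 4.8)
The plan is to show that the subgraph induced on the node set of \emph{any} $h$-clique densest subgraph already satisfies the per-node degree condition in Definition~\ref{def:core} with $k = \lceil \rho_h^* \rceil$, and then invoke the maximality of the core.

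First I would fix a node set $S^*$ with $\rho_h(G[S^*]) = \rho_h^*$, and write $\mu^* = \mu_h(G[S^*])$ so that $\rho_h^* = \mu^*/|S^*|$. The key claim is that $deg_{G[S^*]}(v, h) \ge \rho_h^*$ for every $v \in S^*$. Suppose not, i.e.\ some $v \in S^*$ has $deg_{G[S^*]}(v, h) = d < \rho_h^*$. Deleting $v$ from $G[S^*]$ removes exactly the $d$ many $h$-cliques through $v$ together with one node, so the resulting subgraph has $h$-clique density $(\mu^* - d)/(|S^*| - 1)$. A one-line computation shows that $(\mu^* - d)/(|S^*| - 1) > \mu^*/|S^*|$ is equivalent to $d < \mu^*/|S^*| = \rho_h^*$, which holds by assumption; hence $G[S^* \setminus \{v\}]$ has $h$-clique density strictly larger than $\rho_h^*$, contradicting the fact that $\rho_h^* = \max_{S \subseteq V} \rho_h(S)$. (In the degenerate case $|S^*| \le 1$, since $h \ge 2$ we have $\mu^* = 0$, hence $\rho_h^* = 0$; then the $(\lceil 0 \rceil, h)$-core is all of $G$ and the statement is trivial, so we may assume $|S^*| \ge 2$.)

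Next, since $h$-clique degrees are non-negative integers, the claim upgrades to $deg_{G[S^*]}(v, h) \ge \lceil \rho_h^* \rceil$ for all $v \in S^*$. Thus $G[S^*]$ is a subgraph of $G$ in which every node has $h$-clique degree at least $\lceil \rho_h^* \rceil$ within it. By Definition~\ref{def:core}, the $(\lceil \rho_h^* \rceil, h)$-core is the (unique) largest such subgraph, so $S^*$ is contained in its node set, which is exactly what we want.

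I expect no serious obstacle here; the only points requiring a little care are the boundary case $|S^*| \le 1$ and the implicit well-definedness of ``the largest subgraph with the per-node degree property'' — this follows because the union of two subgraphs, each satisfying the degree condition for the same $k$, again satisfies it (a node's $h$-clique degree only grows when we take a larger subgraph), so the core is the union over all such subgraphs and hence contains every one of them, including $G[S^*]$.
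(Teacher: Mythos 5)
Your proof is correct. The paper itself states this lemma as an imported result from \cite{FYCLX19} and gives no proof, and your argument---showing every node of an optimal set $S^*$ has $h$-clique degree at least $\rho_h^*$ via the deletion/density-increase contradiction, rounding up to $\lceil \rho_h^* \rceil$ by integrality, and then invoking the maximality (union-closure) of the $(k,h)$-core---is exactly the standard argument from that reference, with the boundary case and the well-definedness of the core handled appropriately.
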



Line \ref{line:begin_flow} of Algorithm \ref{alg:clique_densest} constructs a flow network by the method in \cite{MPPTX15}. For the sake of completeness, the pseudocode is shown in Algorithm \ref{alg:flow}. Specifically, $\mathcal{H}$ has a source node $s$, a sink node $t$, and one node for each $v \in V$ and each $\lambda \in \Lambda$. Each edge $e$ in $\mathcal{H}$ has a non-negative capacity $c(e)$. The edges can be classified into four categories, as shown in Line 3, Algorithm \ref{alg:flow}. A flow in $\mathcal{H}$ is a function $f : E_\mathcal{H} \rightarrow \mathbbm{R}$ that satisfies the following properties:
\begin{itemize}
    \item \textbf{Capacity:} $f(e) \leq c(e) \; \forall e \in E_\mathcal{H}$
    \item \textbf{Antisymmetry:} $f(v, u) = - f(u, v) \; \forall (u, v) \in E_\mathcal{H}$
    \item \textbf{Conservation:} For each $u \in V_\mathcal{H} \setminus \{s, t\}$, $\sum_{(u, v) \in E_\mathcal{H}} f(u, v) = \sum_{(v, u) \in E_\mathcal{H}} f(v, u) = 0$
\end{itemize}
The value of a flow $f$ is defined as the total flow on the edges leaving $s$, i.e., $\sum_{v \in V_\mathcal{H} \, : \, (s, v) \in E_\mathcal{H}} f(s, v)$.

\begin{algorithm}[t]
	\renewcommand{\algorithmicrequire}{\textbf{Input:}}
	\renewcommand{\algorithmicensure}{\textbf{Output:}}
	\scriptsize
	\caption{\small Construct the flow network for clique density \cite{MPPTX15}}
    \label{alg:flow}
	\begin{algorithmic}[1]
		\REQUIRE Graph $G = (V,E)$, set of $(h-1)$-cliques $\Lambda$, parameter $\alpha$
		\ENSURE Flow network $\mathcal{H} = (V_\mathcal{H}, E_\mathcal{H}, c)$
		\STATE $V_\mathcal{H} \gets V \cup \Lambda \cup \{s, t\}$, $E_\mathcal{H} \gets \emptyset$
		\FORALL{$v \in V$}
		\STATE $E_{\mathcal{H}} \gets E_{\mathcal{H}} \cup \{(s, v), (v, t), (v, s), (t, v)\}$
		\STATE Set $c(s, v) = deg_G(v, h)$, $c(v, t) = h \alpha$, $c(v, s) = c(t, v) = 0$
		\ENDFOR
		\FORALL{$\lambda \in \Lambda$} \label{line:begin_flow}
		\FORALL{$v \in \lambda$}
		\STATE $E_{\mathcal{H}} \gets E_{\mathcal{H}} \cup \{(\lambda, v), (v, \lambda)\}$
		\STATE Set $c(\lambda, v) = +\infty$, $c(v, \lambda) = 0$
		\ENDFOR
		\FORALL{$v \in V$}
		\IF{$\lambda$ and $v$ form an $h$-clique}
		\STATE $E_{\mathcal{H}} \gets E_{\mathcal{H}} \cup \{(v, \lambda), (\lambda, v)\}$
		\STATE Set $c(v, \lambda) = 1$, $c(\lambda, v) = 0$
		\ENDIF
		\ENDFOR
		\ENDFOR \label{line:end_flow}
		\RETURN $(V_\mathcal{H}, E_\mathcal{H}, c)$
	\end{algorithmic}
\end{algorithm}

Line \ref{line:max_flow} of Algorithm \ref{alg:clique_densest} computes a maximum flow $f^*$ in $\mathcal{H}$. We show in Lemma \ref{lem:densest_cut} that this facilitates the $h$-clique densest subgraph detection. To this end, we first compute the value of a maximum flow (and hence the capacity of a minimum $s$-$t$ cut) in $\mathcal{H}$. We prove this in Corollary \ref{cor:mincut} using Lemma \ref{lem:cut} as a building block. Lemma \ref{lem:cut} computes, for a given node set $V_1 \subseteq V$, the minimum capacity of an $s$-$t$ cut in $\mathcal{H}$ with all nodes in $V_1$ (resp. $V \setminus V_1$) on the same side as $s$ (resp. $t$).

\begin{lem}
\label{lem:cut}
For a set $V_1 \subseteq V$, the minimum value of an $s$-$t$ cut $(\mathcal{S}, V_\mathcal{H} \setminus \mathcal{S})$ of $\mathcal{H}$ having $\mathcal{S} \cap V = V_1$ is
\begin{small}
\begin{equation*}
    \min_{\mathcal{S} : s \in \mathcal{S}, t \notin \mathcal{S}, \mathcal{S} \cap V = V_1} c \left( \mathcal{S}, V_\mathcal{H} \setminus \mathcal{S} \right) = h \left[ \mu_h(G) + \left[ \rho_h^* - \rho_h(V_1) \right] |V_1| \right]
\end{equation*}
\end{small}
Moreover, one value of $\mathcal{S}$ achieving the minimum above has $\Lambda_1 = \mathcal{S} \cap \Lambda$ as the set of all $(h-1)$-cliques in $\Lambda$ which consist only of nodes in $V_1$.
\end{lem}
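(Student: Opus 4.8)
The plan is to fix the set $V_1$, parametrize all admissible cuts by the choice of which $(h-1)$-clique nodes sit on the source side, and then minimize over that choice. Since $V_\mathcal{H} = V \cup \Lambda \cup \{s,t\}$ and the constraints force $s \in \mathcal{S}$, $t \notin \mathcal{S}$, $\mathcal{S} \cap V = V_1$, every admissible cut has the form $\mathcal{S} = \{s\} \cup V_1 \cup \Lambda_1$ with $\mathcal{T} := V_\mathcal{H} \setminus \mathcal{S}$, for some $\Lambda_1 \subseteq \Lambda$; hence the minimization is over $\Lambda_1$ only. The first thing I would observe is a finiteness constraint: the edges $(\lambda, v)$ with $v \in \lambda$ have capacity $+\infty$ in Algorithm~\ref{alg:flow}, so if some $\lambda \in \Lambda_1$ has a node outside $V_1$, that edge crosses from $\mathcal{S}$ to $\mathcal{T}$ and the cut is infinite. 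Therefore any finite (in particular, minimal) cut must satisfy $\Lambda_1 \subseteq \Lambda(V_1)$, where $\Lambda(V_1) := \{\lambda \in \Lambda : \lambda \subseteq V_1\}$ is exactly the set named in the ``moreover'' claim.

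Next I would compute $c(\mathcal{S}, \mathcal{T})$ for an arbitrary $\Lambda_1 \subseteq \Lambda(V_1)$ by going through the edge types of Algorithm~\ref{alg:flow} one by one. Only three families of edges cross from $\mathcal{S}$ to $\mathcal{T}$ with positive capacity: the source edges $(s,v)$ for $v \in V \setminus V_1$, contributing $\sum_{v \in V \setminus V_1} deg_G(v, h)$; the sink edges $(v,t)$ for $v \in V_1$, contributing $h \rho_h^* |V_1|$ (recall $\mathcal{H}$ is built with $\alpha = \rho_h^*$); and the unit-capacity edges $(v, \lambda)$ with $v \in V_1$, $\lambda \in \Lambda \setminus \Lambda_1$, and $\{v\} \cup \lambda$ an $h$-clique. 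All other edges either have capacity $0$ or are oriented from $\mathcal{T}$ to $\mathcal{S}$, so they do not count toward $c(\mathcal{S}, \mathcal{T})$. Only the third family depends on $\Lambda_1$, and its contribution is non-increasing as $\Lambda_1$ grows; hence the minimum is attained at $\Lambda_1 = \Lambda(V_1)$, which establishes the ``moreover'' part.

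Then I would simplify the value at $\Lambda_1 = \Lambda(V_1)$ by a double-counting identity on incident pairs $(v,\lambda)$ with $\{v\} \cup \lambda$ an $h$-clique: there are exactly $h\mu_h(G)$ such pairs (each $h$-clique decomposes as a node plus an $(h-1)$-clique in $h$ ways), and $deg_G(v,h)$ counts precisely the pairs with first coordinate $v$. The pairs counted by the first family (first coordinate in $V \setminus V_1$) and those counted by the third family (first coordinate in $V_1$, $(h-1)$-clique not contained in $V_1$) are disjoint, and their union is exactly the set of incident pairs whose $h$-clique $\{v\}\cup\lambda$ is \emph{not} contained in $V_1$ — here one uses that any $(h-1)$-subclique of an $h$-clique of $G$ lies in $\Lambda$, so $\Lambda(V_1)$ really captures every $(h-1)$-clique inside $V_1$. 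The complementary pairs, those with $\{v\}\cup\lambda \subseteq V_1$, number $h\mu_h(G[V_1])$, so the first plus third families contribute $h\mu_h(G) - h\mu_h(G[V_1])$. Adding the sink term gives $h\mu_h(G) - h\mu_h(G[V_1]) + h\rho_h^*|V_1|$, and substituting $\mu_h(G[V_1]) = \rho_h(V_1)|V_1|$ from Definition~\ref{def:clique_density} yields the claimed $h[\mu_h(G) + (\rho_h^* - \rho_h(V_1))|V_1|]$.

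The step I expect to require the most care is the bookkeeping in the double-counting argument: verifying that the first and third families of cut edges are disjoint and that their union is exactly the incident pairs whose $h$-clique escapes $V_1$, together with the small point that $\Lambda(V_1)$ contains every relevant $(h-1)$-clique. The remaining work is a routine edge-by-edge reading of the capacities in Algorithm~\ref{alg:flow}.
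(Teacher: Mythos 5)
Your proof is correct. The paper itself does not spell out an argument for Lemma~\ref{lem:cut} --- it defers entirely to Theorem~2 of \cite{MPPTX15} --- and the argument that reference uses is exactly the one you give: classify the finitely many edge families of Algorithm~\ref{alg:flow} crossing the cut, observe that the infinite-capacity edges force $\Lambda_1 \subseteq \Lambda(V_1)$ and that enlarging $\Lambda_1$ only helps, and then collapse the source-side and unit-capacity terms via the double count $\sum_v deg_G(v,h) = h\mu_h(G)$ and $\mu_h(G[V_1]) = \rho_h(V_1)|V_1|$. So you have simply supplied, correctly and in full, the details the paper leaves to the citation.
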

\begin{proof}
Similar to that of Theorem 2 in \cite{MPPTX15}.
\end{proof}

\begin{corry}
\label{cor:mincut}
The capacity of a minimum $s$-$t$ cut $(\mathcal{S}^*, V_\mathcal{H} \setminus \mathcal{S}^*)$ of $\mathcal{H}$ is $h \mu_h(G)$.
\end{corry}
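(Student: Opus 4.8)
The plan is to obtain the corollary as an immediate consequence of Lemma~\ref{lem:cut} by minimizing the stated cut capacity over all possible choices of the $V$-side. Every $s$-$t$ cut $(\mathcal{S}, V_\mathcal{H} \setminus \mathcal{S})$ of $\mathcal{H}$ is specified on the node side by the set $V_1 = \mathcal{S} \cap V$, so the capacity of a minimum $s$-$t$ cut equals $\min_{V_1 \subseteq V} c_{\min}(V_1)$, where $c_{\min}(V_1) = h\left[ \mu_h(G) + (\rho_h^* - \rho_h(V_1)) |V_1| \right]$ is the quantity provided by Lemma~\ref{lem:cut}. Thus it suffices to minimize the scalar function $c_{\min}(\cdot)$.

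First I would establish the lower bound. By the definition $\rho_h^* = \max_{S \subseteq V} \rho_h(S)$, we have $\rho_h^* - \rho_h(V_1) \ge 0$ for every nonempty $V_1 \subseteq V$, and $|V_1| \ge 0$ always, so the extra summand $(\rho_h^* - \rho_h(V_1))|V_1|$ is non-negative. Hence $c_{\min}(V_1) \ge h\,\mu_h(G)$ for all $V_1 \subseteq V$, which shows that no $s$-$t$ cut of $\mathcal{H}$ has capacity smaller than $h\,\mu_h(G)$.

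Next I would exhibit a cut attaining this value. Taking $V_1 = \emptyset$ makes the extra summand vanish; concretely, by the second part of Lemma~\ref{lem:cut}, the corresponding choice is $\mathcal{S}^* = \{s\}$ (since the only $(h-1)$-cliques consisting entirely of nodes in $\emptyset$ is none), which is a valid $s$-$t$ cut as $t \notin \{s\}$. Its capacity is $\sum_{v \in V} c(s,v) = \sum_{v \in V} deg_G(v,h) = h\,\mu_h(G)$, the last equality holding because each $h$-clique of $G$ is counted once for each of its $h$ vertices. (Alternatively, any $V_1$ equal to the node set of an $h$-clique densest subgraph also attains the minimum, since then $\rho_h^* - \rho_h(V_1) = 0$.) Combining the lower bound with this attaining cut yields $c(\mathcal{S}^*, V_\mathcal{H} \setminus \mathcal{S}^*) = h\,\mu_h(G)$.

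There is essentially no real obstacle here once Lemma~\ref{lem:cut} is available; the only points needing care are the bookkeeping convention at $V_1 = \emptyset$ (and the verification that $\mathcal{S}^* = \{s\}$ is a legitimate $s$-$t$ cut of $\mathcal{H}$) and the double-counting identity $\sum_{v \in V} deg_G(v,h) = h\,\mu_h(G)$, both of which are routine.
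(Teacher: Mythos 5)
Your proposal is correct and follows essentially the same route as the paper: both minimize the cut-capacity expression from Lemma~\ref{lem:cut} over all $V_1 \subseteq V$, using $\rho_h(V_1) \leq \rho_h^*$ to see that the surplus term is non-negative and vanishes at the minimum. Your explicit verification that the minimum is attained (via $V_1 = \emptyset$ or a densest subgraph) is slightly more careful than the paper's one-line computation, but it is the same argument.
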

\begin{proof}
From Lemma \ref{lem:cut}, and noticing that $\rho_h(V_1) \leq \rho_h^*$ for any $V_1 \subseteq V$, the capacity of a minimum $s$-$t$ cut $(\mathcal{S}^*, V_\mathcal{H} \setminus \mathcal{S}^*)$ of $\mathcal{H}$ is
\begin{small}
\begin{align*}
    c(\mathcal{S}^*, V_\mathcal{H} \setminus \mathcal{S}^*) &= \min_{V_1 \subseteq V} \min_{\mathcal{S} : s \in \mathcal{S}, t \notin \mathcal{S}, \mathcal{S} \cap V = V_1} c \left( \mathcal{S}, V_\mathcal{H} \setminus \mathcal{S} \right) \\
    &= h \left[ \mu_h(G) + \min_{V_1 \subseteq V} \left[ \rho_h^* - \rho_h(V_1) \right] |V_1| \right] \\
    &= h \mu_h(G)
\end{align*}
\end{small}
\end{proof}

We now show the relationship between a maximum flow in $\mathcal{H}$ (Line \ref{line:max_flow} of Algorithm \ref{alg:clique_densest}) and the densest subgraphs in $G$.

\begin{lem}
\label{lem:densest_cut}
Let $V_1$ be a non-empty subset of $V$. Denote by $\Lambda_1$ the set of all $(h-1)$-cliques in $\Lambda$ consisting only of nodes in $V_1$, and let $\mathcal{S} = \{s\} \cup V_1 \cup \Lambda_1$. The following statements are equivalent.
\begin{enumerate}
    \item $V_1$ induces a densest subgraph in $G$.
    \item $\left( \mathcal{S}, V_\mathcal{H} \setminus \mathcal{S} \right)$ is a minimum $s$-$t$ cut in $\mathcal{H}$.
    \item There is no edge from $\mathcal{S}$ to $V_\mathcal{H} \setminus \mathcal{S}$ in $\mathcal{H}_{f^*}$.
\end{enumerate}
\end{lem}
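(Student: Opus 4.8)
The plan is to establish the cycle of implications $(1) \Rightarrow (2) \Rightarrow (3) \Rightarrow (1)$, relying on Lemma~\ref{lem:cut} and Corollary~\ref{cor:mincut} for the flow-theoretic bookkeeping, and on the standard max-flow/min-cut correspondence between minimum cuts and saturated edges in the residual graph. First, for $(1) \Rightarrow (2)$: if $V_1$ induces a densest subgraph, then $\rho_h(V_1) = \rho_h^*$, so by Lemma~\ref{lem:cut} the cut $(\mathcal{S}, V_\mathcal{H} \setminus \mathcal{S})$ with $\mathcal{S} \cap V = V_1$ and $\mathcal{S} \cap \Lambda = \Lambda_1$ (the $(h-1)$-cliques entirely inside $V_1$) has capacity $h[\mu_h(G) + (\rho_h^* - \rho_h(V_1))|V_1|] = h\mu_h(G)$, which by Corollary~\ref{cor:mincut} is exactly the min-cut value. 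Hence this particular cut is a minimum $s$-$t$ cut. I should check that the $\mathcal{S}$ in the lemma statement coincides with the one minimizing the capacity in Lemma~\ref{lem:cut} — the remark at the end of Lemma~\ref{lem:cut} says one optimal choice takes $\Lambda_1$ to be exactly the cliques inside $V_1$, so the specific $\mathcal{S} = \{s\} \cup V_1 \cup \Lambda_1$ in the statement is a valid minimizer.

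For $(2) \Leftrightarrow (3)$: this is essentially the definition of a minimum cut via a maximum flow $f^*$. If $(\mathcal{S}, V_\mathcal{H}\setminus\mathcal{S})$ is a minimum cut, then every edge crossing from $\mathcal{S}$ to $V_\mathcal{H}\setminus\mathcal{S}$ must be saturated by $f^*$ (otherwise one could push more flow, contradicting maximality via the fact that cut capacity equals flow value), so no such edge survives in $\mathcal{H}_{f^*}$. Conversely, if no edge leaves $\mathcal{S}$ in $\mathcal{H}_{f^*}$, then $f^*$ saturates the cut $(\mathcal{S}, V_\mathcal{H}\setminus\mathcal{S})$, so the value of $f^*$ equals the capacity $c(\mathcal{S}, V_\mathcal{H}\setminus\mathcal{S})$; since $f^*$ is a maximum flow, this equals the min-cut value, hence $(\mathcal{S}, V_\mathcal{H}\setminus\mathcal{S})$ is a minimum cut. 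One subtlety to address is that infinite-capacity edges (from a $\lambda$-node to its constituent vertices) are never in a finite-capacity cut; I must verify that the chosen $\mathcal{S} = \{s\}\cup V_1 \cup \Lambda_1$ never places a $\lambda \in \Lambda_1$ on the source side with one of its vertices on the sink side — but by construction every vertex of $\lambda \in \Lambda_1$ lies in $V_1 \subseteq \mathcal{S}$, so this is fine.

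For the closing implication $(2) \Rightarrow (1)$ (or $(3) \Rightarrow (1)$): if $(\mathcal{S}, V_\mathcal{H}\setminus\mathcal{S})$ is a minimum cut with $\mathcal{S}\cap V = V_1$, then its capacity is $h\mu_h(G)$; but by Lemma~\ref{lem:cut} the minimum capacity over all cuts with $\mathcal{S}\cap V = V_1$ is $h[\mu_h(G) + (\rho_h^* - \rho_h(V_1))|V_1|]$, and since this specific cut achieves a capacity of $h\mu_h(G)$ which is a lower bound on that minimum, we get $(\rho_h^* - \rho_h(V_1))|V_1| \le 0$; as $V_1$ is non-empty and $\rho_h(V_1) \le \rho_h^*$ always, this forces $\rho_h(V_1) = \rho_h^*$, i.e., $V_1$ induces a densest subgraph. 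I expect the main obstacle to be purely bookkeeping rather than conceptual: carefully matching the cut $\mathcal{S}$ named in the statement against the optimal cut identified in Lemma~\ref{lem:cut}, and ruling out that a minimum cut with $\mathcal{S}\cap V = V_1$ could use a different, smaller $\Lambda$-side — this requires observing that moving a $\lambda$-node entirely inside $V_1$ to the sink side only loses the unit-capacity edge $(v,\lambda)$ contributions and can never decrease capacity below $h\mu_h(G)$, so the canonical choice of $\Lambda_1$ is without loss of generality. The rest follows the template of \cite{CQ20}, adapted to the two-layer (vertex/clique) flow network.
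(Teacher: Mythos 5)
Your proof is correct and follows essentially the argument the paper outsources to \cite{CQ20}: combine Lemma~\ref{lem:cut} and Corollary~\ref{cor:mincut} with the standard max-flow/min-cut characterization (a cut is minimum iff every crossing edge is saturated under a maximum flow, iff no residual edge crosses it). One wording slip in the closing implication: the inequality $\left(\rho_h^*-\rho_h(V_1)\right)|V_1|\le 0$ follows because the given cut is one of the cuts over which the constrained minimum in Lemma~\ref{lem:cut} is taken, so that minimum is at most $h\mu_h(G)$ --- not because $h\mu_h(G)$ is a lower bound on it --- but the conclusion is unaffected.
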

\begin{proof}
Similar to that of Lemma 3 in \cite{CQ20}.
\end{proof}

To enumerate the densest subgraphs in $G$ following Lemma \ref{lem:densest_cut}, we decompose the directed graph $\mathcal{H}_{f^*}$ into strongly connected components (SCCs). By contracting each SCC of $\mathcal{H}_{f^*}$ into a super-node, we obtain $\mathcal{H}^C$, the SCC graph of $\mathcal{H}_{f^*}$, which is a directed acyclic graph. However, as mentioned in Line \ref{line:scc} of Algorithm \ref{alg:clique_densest}, we do not consider the SCCs of $s$ and $t$. In the following, we justify this based on some properties of $\mathcal{H}_{f^*}$ regarding the source node $s$ (Lemma \ref{lem:sv}) and the sink node $t$ (Lemmas \ref{lem:tv} and \ref{lem:tc}), along with their SCCs (Lemma \ref{lem:prop}).

\begin{lem}
\label{lem:sv}
In the residual graph $\mathcal{H}_{f^*}$, there is no edge from the source node $s$ to any $u \in V_\mathcal{H} \setminus \{s\}$, and there is an edge from every $v \in V$ to $s$.
\end{lem}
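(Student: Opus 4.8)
The plan is to exploit two facts: the total capacity of the edges leaving $s$ in $\mathcal{H}$ equals $h\mu_h(G)$, and by Corollary~\ref{cor:mincut} this is also the value of a maximum flow. Matching these forces every edge out of $s$ to be saturated by $f^*$, which simultaneously removes all out-edges of $s$ from the residual graph and creates positive residual capacity on every (originally zero-capacity) reverse edge into $s$.

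Concretely, first I would note that by Algorithm~\ref{alg:flow} the only edges of $\mathcal{H}$ incident to $s$ are the forward edges $(s,v)$ with $c(s,v) = deg_G(v,h)$ and the reverse edges $(v,s)$ with $c(v,s) = 0$, for $v \in V$; in particular $\mathcal{H}$ contains no edge between $s$ and any $(h-1)$-clique node $\lambda \in \Lambda$, and none between $s$ and $t$. Hence the value of any flow is at most $\sum_{v \in V} deg_G(v,h)$, and since each $h$-clique of $G$ has exactly $h$ vertices it is counted $h$ times in this sum, so the bound is $h\mu_h(G)$. By max-flow--min-cut and Corollary~\ref{cor:mincut}, the value of $f^*$ equals $h\mu_h(G)$; since $f^*(s,v) \le c(s,v)$ for each $v$ and the values sum to this upper bound, we get $f^*(s,v) = deg_G(v,h)$ for every $v \in V$. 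Thus every $(s,v)$ has zero residual capacity and is removed in forming $\mathcal{H}_{f^*}$, and because there is no other edge (or reverse edge) of $\mathcal{H}$ leaving $s$, the residual graph has no out-edge from $s$ --- the first claim.

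For the second claim I would use antisymmetry of the flow: the reverse edge $(v,s)$, present in $\mathcal{H}$ with $c(v,s)=0$, has residual capacity $c(v,s) - f^*(v,s) = f^*(s,v) = deg_G(v,h)$. After Line~\ref{line:core} of Algorithm~\ref{alg:clique_densest} the graph $G$ underlying $\mathcal{H}$ is the clique-core computed there, in which every node has $h$-clique degree at least one, so $deg_G(v,h) > 0$ for all $v \in V$ and hence $(v,s)$ survives in $\mathcal{H}_{f^*}$. I do not expect a genuine obstacle here; the argument reduces to the counting identity $\sum_v deg_G(v,h) = h\mu_h(G)$ together with the min-cut value from Corollary~\ref{cor:mincut}. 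The only points needing a line of care are the degenerate case $\mu_h(G)=0$ (no $h$-clique, so the core is empty and both assertions hold vacuously) and correctly accounting for the node-versus-$(h-1)$-clique bipartition when enumerating the edges incident to $s$.
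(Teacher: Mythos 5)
Your proof is correct and is essentially the argument the paper has in mind: the paper's own proof of this lemma is just a pointer to Lemma 4 of \cite{CQ20}, whose standard saturation argument --- total source capacity $\sum_{v \in V} deg_G(v,h) = h\mu_h(G)$ equals the max-flow value given by Corollary~\ref{cor:mincut}, forcing every edge $(s,v)$ to be saturated (hence absent from $\mathcal{H}_{f^*}$) and every reverse edge $(v,s)$ to acquire residual capacity $deg_G(v,h) > 0$ --- is exactly what you carry out, adapted to the clique flow network of Algorithm~\ref{alg:flow}. The only nit is your aside on the degenerate case: when $\mu_h(G)=0$ the $(\lceil \Tilde{\rho} \rceil, h)$-core with $\Tilde{\rho}=0$ is the whole graph rather than empty, so the second claim would genuinely fail there, but in that regime the lemma and the algorithm are moot anyway and the paper ignores it too.
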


\begin{proof}
Similar to that of Lemma 4 in \cite{CQ20}.
%
\end{proof}

\begin{lem}
\label{lem:tv}
In the residual graph $\mathcal{H}_{f^*}$, there is a path from the sink node $t$ to every $v \in V$.
\end{lem}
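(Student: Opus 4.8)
The plan is to argue by contradiction, exploiting the interaction between reachability from $t$ in $\mathcal{H}_{f^*}$ and the fact that every edge leaving the source is saturated. First I would record the driving numerical fact: by Corollary~\ref{cor:mincut} the value of $f^*$ equals $h\mu_h(G)$ (here $G$ denotes the reduced graph $G_c$ on which $\mathcal{H}$ is built), and since each $h$-clique of $G$ is counted in the $h$-clique degree of each of its $h$ vertices, we also have $\sum_{v\in V} c(s,v)=\sum_{v\in V}\deg_G(v,h)=h\mu_h(G)$. Hence \emph{every} edge $(s,v)$ is saturated by $f^*$, i.e.\ $f^*(s,v)=\deg_G(v,h)\ge\lceil\widetilde{\rho}\rceil\ge 1$ for all $v\in V$ (the degenerate case $\mu_h(G)=0$, where the densest subgraph is empty, is excluded). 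In particular the value of $f^*$ is positive.

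Next I would set $T^*$ to be the set of nodes of $\mathcal{H}$ reachable from $t$ in $\mathcal{H}_{f^*}$, so $t\in T^*$, and put $A=V_\mathcal{H}\setminus T^*$. The only structural property I need is: if $u\in T^*$, $w\in A$ and $(u,w)\in E_\mathcal{H}$, then the residual capacity of $(u,w)$ is $0$ (otherwise $w$ would be reachable from $t$), so $f^*(u,w)=c(u,w)\ge 0$ and, by antisymmetry, $f^*(w,u)\le 0$. Now suppose for contradiction that some $v_0\in V$ lies in $A$, and split on whether $s\in A$. If $s\in A$, then $A$ contains the source but not the sink, so the net flow leaving $A$ equals the (positive) value of $f^*$; but that net flow is $\sum_{w\in A,\,u\in T^*} f^*(w,u)\le 0$ by the property above, a contradiction. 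If $s\in T^*$, then $A$ contains neither $s$ nor $t$, so by conservation the net flow into $A$ is $0$; but that net flow equals $\sum_{u\in T^*,\,w\in A} f^*(u,w)=\sum_{u\in T^*,\,w\in A} c(u,w)\ge 0$, which forces $c(u,w)=0$ for \emph{every} edge from $T^*$ to $A$. This is false: $(s,v_0)$ is such an edge (since $s\in T^*$) with capacity $\deg_G(v_0,h)\ge 1>0$. Either way we contradict, so $V\subseteq T^*$, which is exactly the claim. (A one-line remark that $s\in T^*$ always holds --- pick $v^*$ with $f^*(v^*,t)>0$, so $(t,v^*)$ and then $(v^*,s)$ are residual edges --- clarifies that the second case is the generic one, but is not logically needed.)

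I expect the main obstacle to be purely a matter of sign conventions: correctly turning ``no residual edge leaves $T^*$'' into $f^*(u,w)=c(u,w)$ versus $f^*(w,u)\le 0$ for the two cut orientations, and invoking the net-flow-across-a-set identity with the correct sign in both the $s\in A$ and $s\notin A$ branches. The genuinely conceptual content --- all source edges saturated forces every vertex of $V$ to be on the ``$t$ side'', i.e.\ reachable from $t$ --- is short, and the $\infty$- and unit-capacity $\Lambda$-edges cause no difficulty here because the contradiction only ever uses the finite, strictly positive edge $(s,v_0)$; this mirrors the treatment of Lemma~\ref{lem:sv}.
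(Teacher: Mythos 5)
Your argument is correct. One point of comparison is moot, though: the paper does not actually spell out a proof of Lemma~\ref{lem:tv} --- it only states that the argument is ``similar to that of Lemma~5 in \cite{CQ20}.'' The route implicit in that reference (and the more common one for statements of this shape) starts from the same key fact you establish, namely that $\sum_{v\in V}c(s,v)=h\mu_h(G)$ equals the max-flow value by Corollary~\ref{cor:mincut}, so every source edge is saturated; it then applies flow decomposition: each $v\in V$ carries positive flow on some $s$--$t$ path, and reversing that path (every edge of which has positive flow, hence positive reverse residual capacity) yields a residual path from $t$ back to $v$. You instead run a cut-and-conservation contradiction on the set $T^*$ of nodes reachable from $t$, using that no residual edge can leave $T^*$, that the net flow across the cut is either the (positive) flow value or zero depending on which side $s$ lies, and that the saturated, strictly positive edge $(s,v_0)$ then cannot cross from $T^*$ to its complement. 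Both arguments are sound; yours trades the flow-decomposition theorem for careful sign bookkeeping across the cut, and you correctly handle the side issues (the degenerate case $\mu_h(G)=0$, the fact that $\deg_{G_c}(v,h)\ge\lceil\widetilde{\rho}\rceil\ge 1$ in the retained core, and the harmlessness of the $\infty$-capacity $\Lambda$-edges, which in fact can never cross out of $T^*$ since a finite flow cannot saturate them). Your parenthetical that $s\in T^*$ always holds, via Lemma~\ref{lem:sv}, is also correct and shows the first branch of your case split is vacuous in practice.
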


\begin{proof}
Similar to that of Lemma 5 in \cite{CQ20}.
\end{proof}

\begin{lem}
\label{lem:tc}
In the residual graph $\mathcal{H}_{f^*}$, there is a path from the sink node $t$ to every $\lambda \in \Lambda$.
\end{lem}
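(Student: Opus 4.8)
The plan is to reduce the claim to Lemma~\ref{lem:tv}, which already guarantees a path from $t$ to every $v\in V$ in $\mathcal{H}_{f^*}$. It therefore suffices to show that, for each $\lambda\in\Lambda$, the residual graph $\mathcal{H}_{f^*}$ contains an edge $v\to\lambda$ for some $v\in V$; appending such an edge to a path from $t$ to $v$ then yields the desired path from $t$ to $\lambda$.

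First I would identify the possible in-neighbours of a clique-node $\lambda$ in $\mathcal{H}$. By the construction in Algorithm~\ref{alg:flow}, the only edges incident to $\lambda$ are: (i) an edge $(v,\lambda)$ of capacity $1$ (with reverse capacity $0$) for each vertex $v$ that forms an $h$-clique with $\lambda$ --- call this set $A$; and (ii) an edge $(\lambda,v)$ of capacity $+\infty$ (with reverse capacity $0$) for each of the $h-1$ vertices $v\in\lambda$ --- call this set $B$. Since $A\cap B=\emptyset$ and $\mathcal{H}$ contains no edge joining $\lambda$ to $s$, to $t$, or to another clique-node, every predecessor of $\lambda$ in $\mathcal{H}_{f^*}$ must lie in $A\cup B$.

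Next I would argue by contradiction: suppose $\mathcal{H}_{f^*}$ has no edge into $\lambda$. For $v\in A$, the forward edge $(v,\lambda)$ has capacity $1$, and its absence from $\mathcal{H}_{f^*}$ forces it to be saturated, so $f^*(v,\lambda)=1$. For $v\in B$, the edge $(v,\lambda)$ has capacity $0$, hence $f^*(v,\lambda)\le 0$; its absence from $\mathcal{H}_{f^*}$ forces its residual capacity $-f^*(v,\lambda)$ to vanish, so $f^*(v,\lambda)=0$. Flow conservation at $\lambda$, summed over its incoming edges, now reads $\sum_{v\in A}f^*(v,\lambda)+\sum_{v\in B}f^*(v,\lambda)=0$, i.e.\ $|A|=0$. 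But $\lambda\in\Lambda$ means, by definition, that $\lambda$ is an $(h-1)$-clique contained in some $h$-clique of $G$, so at least one vertex forms an $h$-clique with $\lambda$ and $A\neq\emptyset$ --- a contradiction. Hence $\mathcal{H}_{f^*}$ contains an edge $v\to\lambda$ for some $v\in A\cup B\subseteq V$, and the reduction above completes the proof.

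The only care required is the residual-graph bookkeeping: distinguishing, for $v\in A$ versus $v\in B$, which of $(v,\lambda)$ and $(\lambda,v)$ is the forward edge of $\mathcal{H}$, applying antisymmetry $f^*(v,\lambda)=-f^*(\lambda,v)$, and noting that the infinite capacities never bind. I do not anticipate a genuine obstacle here. Combined with Lemma~\ref{lem:tv}, this lemma shows that $t$ reaches every node of $\mathcal{H}_{f^*}$ other than $s$, which is exactly why the SCC of $t$ may be discarded in Line~\ref{line:scc} of Algorithm~\ref{alg:clique_densest}.
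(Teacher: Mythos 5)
Your proof is correct and follows essentially the same route as the paper's: assume no residual edge enters $\lambda$, observe that the only candidate in-edges come from nodes $v\in\lambda$ (capacity $0$, hence flow $0$) or from nodes forming an $h$-clique with $\lambda$ (capacity $1$, hence saturated with flow $1$), note that at least one node of the latter kind exists because $\lambda$ is contained in an $h$-clique, and derive a violation of flow conservation at $\lambda$; the final appeal to Lemma~\ref{lem:tv} to extend a path from $t$ is also how the paper closes the argument. Your bookkeeping of the two in-neighbour classes $A$ and $B$ is a slightly more explicit rendering of the same steps.
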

\begin{proof}
Consider any arbitrary $\lambda \in \Lambda$. Suppose, by way of contradiction, that there is no path from $t$ to $\lambda$ in $\mathcal{H}_{f^*}$. In that case, there is no edge in $\mathcal{H}_{f^*}$ entering $\lambda$, i.e., the corresponding edge in $\mathcal{H}$ is saturated, as such edges can only be from nodes $v \in V$, and from Lemma \ref{lem:tv}, there is a path from $t$ to every node $v \in V$ in $\mathcal{H}_{f^*}$. Now, for every $v \in \lambda$, $c(v, \lambda) = 0$, implying that $f^*(v, \lambda) = 0$. Also, for every $v \in V$ that forms an $h$-clique with $\lambda$, $c(v, \lambda) = 1$, implying that $f^*(v, \lambda) = 1$. (Note that there must exist such $v \in V$, since $\Lambda$ contains only those $(h-1)$-cliques that are contained in $h$-cliques.) This means the total flow entering $\lambda$ is positive, which violates the conservation property of a flow, leading to a contradiction.
\end{proof}

\begin{lem}
\label{lem:prop}
Let $scc(s)$ and $scc(t)$ denote the SCCs of $\mathcal{H}_{f^*}$ containing $s$ and $t$, respectively. The SCC graph $\mathcal{H}^C$ of $\mathcal{H}_{f^*}$ has the following properties.
\begin{itemize}
    \item $scc(s) = \{s\}$
    \item $scc(s)$ has no outgoing edge
    \item $scc(t)$ has no incoming edge
\end{itemize}
\end{lem}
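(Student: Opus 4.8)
The plan is to verify the three bullets one at a time, drawing directly on Lemmas \ref{lem:sv}, \ref{lem:tv}, and \ref{lem:tc}, which together pin down the edge structure of $\mathcal{H}_{f^*}$ incident to $s$ and $t$. The overall strategy: an SCC is a singleton exactly when no directed cycle passes through the node, and an SCC has no outgoing (resp.\ incoming) edge in $\mathcal{H}^C$ exactly when the corresponding node has no edge to (resp.\ from) a node outside its SCC in $\mathcal{H}_{f^*}$.

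\textbf{First bullet ($scc(s) = \{s\}$).} By Lemma \ref{lem:sv}, in $\mathcal{H}_{f^*}$ there is no edge \emph{from} $s$ to any other node. Hence $s$ cannot lie on any directed cycle, since any cycle through $s$ would have to begin with an outgoing edge of $s$. Therefore the strongly connected component containing $s$ is the singleton $\{s\}$.

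\textbf{Second bullet ($scc(s)$ has no outgoing edge in $\mathcal{H}^C$).} Since $scc(s) = \{s\}$, an outgoing edge of $scc(s)$ in $\mathcal{H}^C$ would correspond to an edge in $\mathcal{H}_{f^*}$ from $s$ to some node in a different SCC. But Lemma \ref{lem:sv} says $s$ has no outgoing edge at all in $\mathcal{H}_{f^*}$, so $scc(s)$ has no outgoing edge in $\mathcal{H}^C$.

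\textbf{Third bullet ($scc(t)$ has no incoming edge in $\mathcal{H}^C$).} By Lemmas \ref{lem:tv} and \ref{lem:tc}, in $\mathcal{H}_{f^*}$ there is a path from $t$ to every $v \in V$ and to every $\lambda \in \Lambda$; together with the trivial path from $t$ to itself, $t$ can reach every node of $\mathcal{H}_{f^*}$ except possibly $s$. Suppose, for contradiction, that $scc(t)$ has an incoming edge in $\mathcal{H}^C$ from some SCC $C \neq scc(t)$. Pick any node $w \in C$; then there is an edge in $\mathcal{H}_{f^*}$ from some node of $C$ into some node of $scc(t)$, and since within an SCC every node reaches every other, there is a path from $w$ to $t$ in $\mathcal{H}_{f^*}$. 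If $w \neq s$, then $t$ also reaches $w$ (by the reachability just established), so $w$ and $t$ would be in the same SCC, contradicting $C \neq scc(t)$. The only remaining possibility is $w = s$, i.e.\ the incoming edge originates from $scc(s) = \{s\}$; but by the second bullet $scc(s)$ has no outgoing edge in $\mathcal{H}^C$, so this too is impossible. Hence $scc(t)$ has no incoming edge in $\mathcal{H}^C$.

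\textbf{Main obstacle.} The only subtlety is the third bullet: one must be careful that the ``$t$ reaches everything'' conclusion from Lemmas \ref{lem:tv}--\ref{lem:tc} covers $\Lambda$ as well as $V$ (which is precisely why Lemma \ref{lem:tc} was stated separately), and that the single node not guaranteed to be reachable from $t$, namely $s$, is handled by invoking the second bullet to rule out an incoming edge from $scc(s)$. Everything else is a routine application of the definition of the condensation graph $\mathcal{H}^C$.
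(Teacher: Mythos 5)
Your proof is correct and follows essentially the same route as the paper: the first two bullets are immediate from Lemma \ref{lem:sv}, and the third is the same contradiction argument using Lemmas \ref{lem:tv} and \ref{lem:tc} to show that any node with a path to $t$ (other than $s$, which is excluded because it has no outgoing edges) must already lie in $scc(t)$. Your write-up is merely a more explicit version of the paper's proof.
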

\begin{proof}
The first two properties can be proved directly from Lemma \ref{lem:sv}. We prove the third by contradiction. Suppose $scc(t)$ has an incoming edge in $\mathcal{H}^C$ from another SCC $C$. Consider any node $u \in C$. Then $u \neq t$ and there is a path from $u$ to $t$ in $\mathcal{H}_{f^*}$. Clearly, $u \neq s$, which means $u \in V \cup \Lambda$. From Lemmas \ref{lem:tv} and \ref{lem:tc}, there is a path from $t$ to $u$ in $\mathcal{H}_{f^*}$. In that case, $u$ and $t$ are in the same SCC, which is a contradiction.
\end{proof}

A densest subgraph of $G$ cannot have any node in the SCC of $s$ (Lemma \ref{lem:prop}) or $t$ (Lemma \ref{lem:densest_cut}). Moreover, removing the SCCs of $s$ and $t$ from $\mathcal{H}^C$ does not affect the connections between the other SCCs (Lemma \ref{lem:prop}), which can contain a densest subgraph. Therefore, we can focus entirely on the other SCCs, as in line \ref{line:scc} of Algorithm \ref{alg:clique_densest}. We denote them as non-trivial components (Definition \ref{def:nt}).

\begin{defn}
[Non-trivial Component]
\label{def:nt}
A non-trivial component is a strongly connected component of $\mathcal{H}_{f^*}$ which does not include the source node $s$ or the target node $t$.
\end{defn}

To enumerate all densest subgraphs of $G$, we define the descendants and ancestors of non-trivial components (Definition \ref{def:desc_anc}) and introduce the independence among such components (Definition \ref{def:ic}).

\begin{defn}
[Descendants and Ancestors]
\label{def:desc_anc}
For a non-trivial component $C$, the set of descendants (resp. ancestors) of $C$, denoted by $des(C)$ (resp. $anc(C)$), is defined as the set of those non-trivial components $C'$ such that there exists a directed path from $C$ to $C'$ (resp. from $C'$ to $C$) in $\mathcal{H}^C$. For a set $\mathcal{C}$ of non-trivial components, the sets of descendants and ancestors of $\mathcal{C}$, denoted by $des(\mathcal{C})$ and $anc(\mathcal{C})$, respectively, are defined as $des(\mathcal{C}) = \cup_{C \in \mathcal{C}} des(C)$ and $anc(\mathcal{C}) = \cup_{C \in \mathcal{C}} anc(C)$.
\end{defn}

\begin{defn}
[Independent Component Set]
\label{def:ic}
An independent component set is a set $\mathcal{C}$ of non-trivial components such that:
\begin{itemize}
    \item For all $C \in \mathcal{C}$, $C \cap V \neq \emptyset$.
    \item For all $C_1, C_2 \in \mathcal{C}$, $C_1 \notin des(C_2)$ and $C_2 \notin des(C_1)$.
\end{itemize}
\end{defn}

It turns out that enumerating all independent component sets is equivalent to enumerating all densest subgraphs in $G$ (Corollary \ref{cor:ic_densest}). To prove this, we first define the concept of d-closed component sets (Definition \ref{def:dcc}).

\begin{defn}
[d-Closed Component Set]
\label{def:dcc}
A d-closed component set is a set $\mathcal{C}$ of non-trivial components such that:
\begin{itemize}
    \item For all $C \in \mathcal{C}$ where $C \cap V = \emptyset$, there is at least one incoming edge in $\mathcal{H}^C$ to $C$ from a component $C' \in \mathcal{C}$.
    \item $des(\mathcal{C}) \subseteq \mathcal{C}$.
\end{itemize}
\end{defn}


We now show the existence of a bijection between independent sets and d-closed component sets (Lemma \ref{lem:ic_dcc}), as well as one between d-closed component sets and densest subgraphs in $G$ (Lemma \ref{lem:dcc_densest}), thereby proving Corollary \ref{cor:ic_densest}.

\begin{lem}
\label{lem:ic_dcc}
For an independent component set $\mathcal{C}$, define $g(\mathcal{C}) = \mathcal{C} \cup des(\mathcal{C})$. Then $g$ is a bijection between the set of independent component sets and the set of d-closed component sets of $\mathcal{H}^C$.
\end{lem}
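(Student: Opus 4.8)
The plan is to exhibit an explicit inverse of $g$. For a d-closed component set $\hat{\mathcal C}$, consider
\[
\phi(\hat{\mathcal C}) = \bigl\{\, C \in \hat{\mathcal C} : C \cap V \neq \emptyset \text{ and } C \notin des(C') \text{ for every } C' \in \hat{\mathcal C} \text{ with } C' \cap V \neq \emptyset \,\bigr\},
\]
the set of components of $\hat{\mathcal C}$ that meet $V$ and are maximal (in the ancestor order of $\mathcal H^C$) among such components. I would establish four facts: (i)~$g$ is well defined, i.e.\ $g(\mathcal C)$ is d-closed whenever $\mathcal C$ is an independent component set; (ii)~$\phi(\hat{\mathcal C})$ is an independent component set whenever $\hat{\mathcal C}$ is d-closed; (iii)~$\phi(g(\mathcal C)) = \mathcal C$ for every independent component set $\mathcal C$; and (iv)~$g(\phi(\hat{\mathcal C})) = \hat{\mathcal C}$ for every d-closed component set $\hat{\mathcal C}$. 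Items (i)--(iv) say precisely that $g$ and $\phi$ are mutually inverse bijections between the two families.

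For (i): since a descendant of a descendant is a descendant in the DAG $\mathcal H^C$, one has $des(g(\mathcal C)) = des(\mathcal C) \subseteq g(\mathcal C)$, which is the second d-closedness condition. For the first, if $C \in g(\mathcal C)$ with $C \cap V = \emptyset$ then $C \notin \mathcal C$ (every member of $\mathcal C$ meets $V$ by Definition~\ref{def:ic}), so $C \in des(C_0)$ for some $C_0 \in \mathcal C$; a directed $C_0$-to-$C$ path in $\mathcal H^C$ avoids $scc(s)$ and $scc(t)$ by Lemma~\ref{lem:prop} (the former has no outgoing edge, the latter no incoming edge), hence consists of non-trivial components, and its penultimate vertex is $C_0$ or a descendant of $C_0$, so it lies in $g(\mathcal C)$ and supplies the required in-edge to $C$. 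For (ii): members of $\phi(\hat{\mathcal C})$ meet $V$ by construction, and if $C_1, C_2 \in \phi(\hat{\mathcal C})$ with $C_1 \in des(C_2)$ then $C_2 \in \hat{\mathcal C}$ meets $V$, contradicting the maximality defining $\phi$; so $\phi(\hat{\mathcal C})$ is an antichain of $V$-meeting non-trivial components, i.e.\ an independent component set.

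For (iii): let $\mathcal C$ be an independent component set and $C \in \mathcal C$. If some $C' \in g(\mathcal C) = \mathcal C \cup des(\mathcal C)$ meets $V$ and satisfies $C \in des(C')$, then (by transitivity of reachability, and because $des(\mathcal C)$ consists of descendants of members of $\mathcal C$) we get $C \in des(C_0)$ for some $C_0 \in \mathcal C$, contradicting that $\mathcal C$ is an antichain; hence $C \in \phi(g(\mathcal C))$. Conversely, a member $C$ of $\phi(g(\mathcal C))$ lies in $\mathcal C \cup des(\mathcal C)$ and meets $V$; were $C \in des(\mathcal C)$, say $C \in des(C_0)$ with $C_0 \in \mathcal C \subseteq g(\mathcal C)$ meeting $V$, this would contradict the maximality of $C$ in $g(\mathcal C)$, so $C \in \mathcal C$. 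Thus $\phi(g(\mathcal C)) = \mathcal C$.

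The crux is (iv). The inclusion $g(\phi(\hat{\mathcal C})) \subseteq \hat{\mathcal C}$ is easy: $\phi(\hat{\mathcal C}) \subseteq \hat{\mathcal C}$, and d-closedness gives $des(\phi(\hat{\mathcal C})) \subseteq des(\hat{\mathcal C}) \subseteq \hat{\mathcal C}$. For the reverse inclusion I plan a backward-reachability argument: given $C \in \hat{\mathcal C}$, repeatedly move along an incoming edge lying inside $\hat{\mathcal C}$ — which exists, by the first d-closedness condition, whenever the current component does not meet $V$. Since $\mathcal H^C$ is a finite DAG this walk terminates, necessarily at some $D \in \hat{\mathcal C}$ with $D \cap V \neq \emptyset$ and $C \in des(D)$ or $C = D$; picking a maximal $V$-meeting ancestor of $D$ within $\hat{\mathcal C}$ yields an element of $\phi(\hat{\mathcal C})$ which is an ancestor of $C$ or equals $C$, so $C \in g(\phi(\hat{\mathcal C}))$. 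The main obstacle is making this walk rigorous — one must verify that it stays among non-trivial components and cannot revisit a component — but both facts follow from Lemma~\ref{lem:prop} and the acyclicity of $\mathcal H^C$, exactly as in the proof of (i).
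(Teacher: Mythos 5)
Your proof is correct, and strategically it matches the paper's: both arguments establish the bijection by exhibiting an explicit inverse of $g$ and verifying the relevant identities. The difference is in how that inverse is described. The paper takes, inside a d-closed set $\mathcal{C}_1$, the components with \emph{no incoming edge} in $\mathcal{H}^C$ from any component of $\mathcal{C}_1$ (the sources of the induced sub-DAG), and must then argue separately that these all meet $V$ and form an antichain; you take the components that meet $V$ and are \emph{maximal} among the $V$-meeting components of $\hat{\mathcal C}$, which makes independence immediate by construction but pushes the work into the backward-reachability argument for $\hat{\mathcal C} \subseteq g(\phi(\hat{\mathcal C}))$. The two descriptions coincide on d-closed sets (your step (iv) is essentially the proof of that coincidence), so the mathematical content is the same. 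One place where your write-up is more complete than the paper's: the paper asserts without argument that $g(\mathcal C)$ is d-closed (``the function $g$ is well-defined''), whereas your item (i) supplies the missing verification, including the observation via Lemma~\ref{lem:prop} that a $C_0$-to-$C$ path in $\mathcal H^C$ cannot pass through $scc(s)$ or $scc(t)$, so its penultimate component really is a non-trivial descendant of $C_0$. The only gloss worth tightening is in (iv): the ``maximal $V$-meeting ancestor of $D$'' must be maximal among \emph{all} $V$-meeting components of $\hat{\mathcal C}$, not merely among the ancestors of $D$; this does follow by transitivity of reachability in the DAG, but it deserves an explicit sentence.
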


\begin{proof}
Notice that the function $g$ is well-defined, since $g(\mathcal{C})$ is a d-closed component set uniquely determined by the independent component set $\mathcal{C}$. It suffices to show that, for each d-closed component set $\mathcal{C}_1$, there is a unique independent component set $\mathcal{C}$ such that $\mathcal{C}_1 = g(\mathcal{C}) = \mathcal{C} \cup des(\mathcal{C})$.

Let $\mathcal{C}_2$ be the set of components in $\mathcal{C}_1$ with no incoming edges in $\mathcal{H}^C$ from any component in $\mathcal{C}_1$. We shall show that $\mathcal{C}_2$ is our required set by proving that {\bf (1)} $g(\mathcal{C}_2) = \mathcal{C}_1$; {\bf (2)} $\mathcal{C}_2$ is independent; and {\bf (3)} $\mathcal{C}_2$ is the unique set satisfying the above.

To see why $g(\mathcal{C}_2) = \mathcal{C}_2 \cup des(\mathcal{C}_2) = \mathcal{C}_1$, note that $\mathcal{C}_2 \subseteq \mathcal{C}_1$ and $\mathcal{C}_2 \cap des(\mathcal{C}_2) = \emptyset$ by construction. Thus, it is enough to show that $des(\mathcal{C}_2) = \mathcal{C}_1 \setminus \mathcal{C}_2$. If $C \in des(\mathcal{C}_2)$, then $C \in \mathcal{C}_1 \setminus \mathcal{C}_2$ since $\mathcal{C}_1$ is d-closed; thus $des(\mathcal{C}_2) \subseteq \mathcal{C}_1 \setminus \mathcal{C}_2$. Also, if $C \in \mathcal{C}_1 \setminus \mathcal{C}_2$, then $C$ has an incoming edge from a component $C' \in \mathcal{C}_1$. If $C' \in \mathcal{C}_2$, then clearly $C \in des(\mathcal{C}_2)$; otherwise $C'$ has an incoming edge from a component $C'' \in \mathcal{C}_1$. Continuing in this way, since $\mathcal{H}^C$ is acyclic, $C$ has a path from a component in $\mathcal{C}_2$, which means $C \in des(\mathcal{C}_2)$, and hence $ \mathcal{C}_1 \setminus \mathcal{C}_2 \subseteq des(\mathcal{C}_2)$.

Now we prove that $\mathcal{C}_2$ is independent. First note that $\mathcal{C}_2$ only consists of non-trivial components $C$ satisfying $C \cap V \neq \emptyset$. To see why, consider $C \in \mathcal{C}_1$ such that $C \cap V = \emptyset$. Since $\mathcal{C}_1$ is d-closed, there is at least one incoming edge in $\mathcal{H}^C$ to $C$ from a component in $\mathcal{C}_1$, which means $C \notin \mathcal{C}_2$ by construction.
Now suppose, by way of contradiction, that there exist $C', C'' \in \mathcal{C}_2 \subseteq \mathcal{C}_1$ where $C' \in des(C_2)$. Then there is a directed path from $C''$ to $C'$ in $\mathcal{H}^C$. Let $C'''$ be the immediate predecessor of $C'$ on that path. Since $\mathcal{C}_1$ is d-closed, $C''' \in \mathcal{C}_1$. In that case, the edge from $C'''$ to $C'$ means that, by construction, $C' \notin \mathcal{C}_2$, which is a contradiction.

It remains to be proved that $\mathcal{C}_2$ is the unique set satisfying the above properties. Let $\mathcal{C}_3$ be another independent component set such that $g(\mathcal{C}_3) = \mathcal{C}_1$. We shall show that $\mathcal{C}_2 = \mathcal{C}_3$.

If $C \in \mathcal{C}_2 \subseteq \mathcal{C}_1$, suppose, by way of contradiction, that $C \notin \mathcal{C}_3$. Since $\mathcal{C}_1 = \mathcal{C}_3 \cup des(\mathcal{C}_3)$, $C \in des(\mathcal{C}_3)$. By definition, there is a path in $\mathcal{H}^C$ to $C$ from a component $C' \in \mathcal{C}_3 \subseteq \mathcal{C}_1$. Let $C''$ be the immediate predecessor of $C$ on that path. Since $\mathcal{C}_3$ is independent, $C'' \notin \mathcal{C}_3$. Since $\mathcal{C}_1$ is d-closed, $C'' \in \mathcal{C}_1$. Thus, there exists an edge in $\mathcal{H}^C$ to $C \in \mathcal{C}_2$ from $C'' \in \mathcal{C}_1$, which contradicts the construction of $\mathcal{C}_2$. Thus $C \in \mathcal{C}_3$, and hence $\mathcal{C}_2 \subseteq \mathcal{C}_3$.

If $C \notin \mathcal{C}_2$, suppose, by way of contradiction, that $C \in \mathcal{C}_3 \subseteq \mathcal{C}_1$. By definition, $C$ has an incoming edge in $\mathcal{H}^C$ from a component $C' \in \mathcal{C}_1$. Clearly, $C'$ is an ancestor of $C$. Since $\mathcal{C}_1 = \mathcal{C}_3 \cup des(\mathcal{C}_3)$, $C' \in \mathcal{C}_3$, which contradicts the assumption that $\mathcal{C}_3$ is independent. Thus $C \notin \mathcal{C}_3$, and hence $\mathcal{C}_3 \subseteq \mathcal{C}_2$.
\end{proof}

\begin{lem}
\label{lem:dcc_densest}
For a d-closed component set $\mathcal{C}$
, define $g(\mathcal{C}) = \bigcup_{C \in \mathcal{C}} C \cap V$. Then $g$ is a bijection between the set of d-closed component sets of $\mathcal{H}^C$ and the set of densest subgraphs of $G$.
\end{lem}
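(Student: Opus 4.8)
The plan is to factor $g$ through the ``closed'' vertex sets of the residual graph $\mathcal{H}_{f^*}$, so that the desired bijection appears as a composition: d-closed component sets $\longleftrightarrow$ closed sets of $\mathcal{H}_{f^*}$ of the shape appearing in Lemma~\ref{lem:densest_cut} $\longleftrightarrow$ densest subgraphs of $G$. Call $\mathcal{S} \subseteq V_\mathcal{H}$ \emph{closed} if $s \in \mathcal{S}$, $t \notin \mathcal{S}$, and no edge of $\mathcal{H}_{f^*}$ leaves $\mathcal{S}$; by the standard max-flow/min-cut correspondence these are exactly the source sides of minimum $s$-$t$ cuts of $\mathcal{H}$, each of capacity $h\mu_h(G)$ by Corollary~\ref{cor:mincut}. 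For $V_1 \subseteq V$ write $\Lambda_{V_1}$ for the set of $(h-1)$-cliques in $\Lambda$ all of whose nodes lie in $V_1$.

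\textbf{From d-closed sets to densest subgraphs.} Given a d-closed component set $\mathcal{C}$, set $\mathcal{S}_\mathcal{C} = \{s\} \cup \bigcup_{C \in \mathcal{C}} C$. Using Lemma~\ref{lem:sv} ($s$ has no outgoing residual edge), Lemma~\ref{lem:prop} ($scc(t)$ has no incoming residual edge), and the second d-closed condition $des(\mathcal{C}) \subseteq \mathcal{C}$, every residual edge leaving a node of $\mathcal{S}_\mathcal{C}$ lands again in $\mathcal{S}_\mathcal{C}$, so $\mathcal{S}_\mathcal{C}$ is closed and $(\mathcal{S}_\mathcal{C}, V_\mathcal{H}\setminus\mathcal{S}_\mathcal{C})$ is a minimum cut of capacity $h\mu_h(G)$. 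Put $V_1 = g(\mathcal{C}) = \mathcal{S}_\mathcal{C} \cap V$. If $\mathcal{C} \neq \emptyset$ then $V_1 \neq \emptyset$: otherwise every $C \in \mathcal{C}$ would be a $\Lambda$-only component, and iterating the first d-closed condition (each such $C$ has an incoming edge from another component of $\mathcal{C}$) would give an arbitrarily long path in the acyclic graph $\mathcal{H}^C$, a contradiction. Comparing $c(\mathcal{S}_\mathcal{C}, \cdot) = h\mu_h(G)$ with the lower bound $h\,[\mu_h(G) + (\rho_h^* - \rho_h(V_1))\,|V_1|]$ of Lemma~\ref{lem:cut} then forces $\rho_h(V_1) = \rho_h^*$, so $V_1$ induces a densest subgraph and $g$ is well defined into the densest subgraphs of $G$.

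\textbf{Surjectivity and injectivity.} For surjectivity, take a densest subgraph $V_1$. By Lemma~\ref{lem:densest_cut}, $\{s\} \cup V_1 \cup \Lambda_{V_1}$ is closed, hence a union of whole SCCs of $\mathcal{H}_{f^*}$; let $\mathcal{C}_0$ be the set of non-trivial components it contains, and form $\mathcal{C}^*$ from $\mathcal{C}_0$ by repeatedly deleting any $\Lambda$-only component that currently has no incoming edge from a surviving component. One checks that such a deletion preserves both down-closedness and the property of being the component set of a closed set (we only ever remove a component with no surviving in-neighbour), so the process terminates at a d-closed $\mathcal{C}^*$; since only $\Lambda$-only components are deleted, $g(\mathcal{C}^*) = V_1$. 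For injectivity, let $\mathcal{C}$ be d-closed with $g(\mathcal{C}) = V_1$. The infinite-capacity edges $\lambda \to v$ ($v \in \lambda$) give $\lambda \subseteq V_1$ for every $\lambda \in \mathcal{S}_\mathcal{C}$; closedness of $\mathcal{S}_\mathcal{C}$ forces $\lambda \in \mathcal{S}_\mathcal{C}$ whenever $\lambda$ receives a residual edge from a node of $V_1$; and the first d-closed condition, combined with flow conservation at each $\lambda$ (which shows any $\lambda$ carrying nonzero flow already receives such an edge), forbids $\mathcal{C}$ from containing a $\Lambda$-only component all of whose $(h-1)$-cliques fail to receive a residual edge from $V_1$. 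Together these pin down $\mathcal{S}_\mathcal{C}$, hence $\mathcal{C}$, uniquely from $V_1$. (The empty d-closed set maps to $\emptyset$, not counted among the densest subgraphs, matching the non-emptiness convention of Lemma~\ref{lem:densest_cut}.)

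\textbf{Main obstacle.} The genuinely new difficulty, absent from the edge-density argument of \cite{CQ20}, is the bookkeeping around the auxiliary $(h-1)$-clique nodes $\Lambda$: distinct closed sets of $\mathcal{H}_{f^*}$ can share the same vertex part $V_1$, differing only in ``useless'' $\lambda$'s that form no $h$-clique inside $V_1$, and one must prove the first d-closed condition singles out exactly one of them. Making this precise relies on Lemma~\ref{lem:tc} (every $\lambda$ has an incoming residual edge) and on conservation bookkeeping at each $\lambda$ to decide, in terms of the flow, whether $\lambda$ is forced into $\mathcal{S}_\mathcal{C}$; once these are in place, the rest follows the template of \cite{CQ20} using the condensation-graph facts of Lemmas~\ref{lem:sv}--\ref{lem:prop}.
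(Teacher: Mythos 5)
Your proof is correct and follows essentially the same route as the paper's: both directions hinge on identifying $\{s\}\cup\bigcup_{C\in\mathcal{C}}C$ with the source side of a minimum $s$-$t$ cut (via Lemma~\ref{lem:densest_cut}, resp.\ Lemma~\ref{lem:cut} and Corollary~\ref{cor:mincut}), both construct the preimage of a densest $V_1$ from the SCCs of $\{s\}\cup V_1\cup\Lambda_1$ with the unreferenced $\Lambda$-only components pruned, and both use the first d-closedness condition to pin down the $\Lambda$-part uniquely. The only presentational difference is that you phrase uniqueness as an explicit characterization of $\mathcal{S}_\mathcal{C}\cap\Lambda$ (the $(h-1)$-cliques receiving a residual edge from a node of $V_1$), whereas the paper compares two hypothetical preimages componentwise; both arguments are sound.
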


\begin{proof}
For any node $u \in V_\mathcal{H}$, let $scc(u)$ denote the unique strongly connected component in $\mathcal{H}^C$ containing $u$.

We first prove that the function $g$ is well defined. Clearly, $g(\mathcal{C})$ is uniquely determined by $\mathcal{C}$. Since $\mathcal{C}$ is d-closed, there is no edge in $\mathcal{H}^C$ from a non-trivial component in $\mathcal{C}$ to any non-trivial component not in $\mathcal{C}$. Also, from Lemma \ref{lem:prop}, $scc(t)$ has no incoming edge and $scc(s)$ has no outgoing edge. In other words, there is no edge in $\mathcal{H}_{f^*}$ from any node in $\{s\} \cup \left( \bigcup_{C \in \mathcal{C}} C \right)$ to any other node. Thus, from Lemma \ref{lem:densest_cut}, $g(\mathcal{C})$ induces a densest subgraph in $G$.

Next we show that, for any $V_1 \subseteq V$ inducing a densest subgraph in $G$, there is a unique d-closed component set $\mathcal{C}$ such that $V_1 = g(\mathcal{C}) = \bigcup_{C \in \mathcal{C}} C \cap V$. Denote by $\Lambda_1$ the set of all $(h-1)$-cliques in $\Lambda$ consisting only of nodes in $V_1$. Define $\mathcal{C}_1 = \bigcup_{u \in V_1 \cup \Lambda_1} scc(u)$, and $\mathcal{C}_2 \subseteq \mathcal{C}_1$ as the set $\mathcal{C}_1$ excluding those components $C \in \mathcal{C}_1$ with $C \cap V = \emptyset$ and no edge in $\mathcal{H}^C$ to $C$ from any component in $\mathcal{C}_1$. We shall show that $\mathcal{C}_2$ is our required component set by proving that {\bf (1)} $g(\mathcal{C}_2) = V_1$; {\bf (2)} $\mathcal{C}_2$ is d-closed; and {\bf (3)} $\mathcal{C}_2$ is the unique set satisfying the above properties.

We first prove that $g(\mathcal{C}_2) = V_1$. Notice that every $v \in V_1$ is contained in some component in $\mathcal{C}_1$. Also, no node $v \in V \setminus V_1$ is contained in any component in $\mathcal{C}_1$ (see Claim \ref{clm:all_v}). Since $\mathcal{C}_2 \subseteq \mathcal{C}_1$ by construction, the same holds for $\mathcal{C}_2$ as well. Now it is easy to see that $\bigcup_{C \in \mathcal{C}_2} C \cap V = V_1$.

Second, we prove that $\mathcal{C}_2$ is d-closed. Note that by construction, for all $C \in \mathcal{C}_2$ where $C \cap V = \emptyset$, there is at least one incoming edge in $\mathcal{H}^C$ to $C$ from a component $C' \in \mathcal{C}_2$. Now, for showing that $des(\mathcal{C}_2) \subseteq \mathcal{C}_2$, it suffices to show that there is no edge (and hence no path) in $\mathcal{H}^C$ from a component in $C \in \mathcal{C}_2$ to a non-trivial component $C' \notin \mathcal{C}_2$. All components in $\mathcal{C}_1$ (and hence $\mathcal{C}_2$, since $\mathcal{C}_2 \subseteq \mathcal{C}_1$) consist only of nodes in $V_1 \cup \Lambda_1$, and from Lemma \ref{lem:densest_cut}, there is no edge from such nodes to those contained in components not in $\mathcal{C}_1$. Also, by construction, no component in $\mathcal{C}_1$ (and hence $\mathcal{C}_2$) has any outgoing edge to any component in $\mathcal{C}_1 \setminus \mathcal{C}_2$.

Finally, we prove that $\mathcal{C}_2$ is the unique set satisfying the above two properties. Let $\mathcal{C}_3$ be another set such that $g(\mathcal{C}_3) = V_1$ and $\mathcal{C}_3$ is d-closed; we shall show that $\mathcal{C}_2 = \mathcal{C}_3$. For $i \in \{2, 3\}$, define $\mathcal{C}_i^{(V)} = \{C \in \mathcal{C}_i : C \cap V \neq \emptyset\}$ and $\mathcal{C}_i^{(\Lambda)} = \mathcal{C}_i \setminus \mathcal{C}_i^{(V)} = \{C \in \mathcal{C}_i : C \cap V = \emptyset\}$. Clearly, it suffices to prove that $\mathcal{C}_2^{(V)} = \mathcal{C}_3^{(V)}$ and $\mathcal{C}_2^{(\Lambda)} = \mathcal{C}_3^{(\Lambda)}$. The first one is trivial, since $\bigcup_{C \in \mathcal{C}_2} C \cap V = \bigcup_{C \in \mathcal{C}_3} C \cap V = V_1$. For the second, we prove that $\mathcal{C}_2^{(\Lambda)} \subseteq \mathcal{C}_3^{(\Lambda)}$ and $\mathcal{C}_3^{(\Lambda)} \subseteq \mathcal{C}_2^{(\Lambda)}$.

If $C \in \mathcal{C}_2^{(\Lambda)}$, then $C$ must have an incoming edge from a component $C' \in \mathcal{C}_2$ (since $\mathcal{C}_2$ is d-closed) with $C' \cap V \neq \emptyset$ (by the construction of the flow network $\mathcal{H}$). In that case, $C' \in \mathcal{C}_2^{(V)} = \mathcal{C}_3^{(V)} \subseteq \mathcal{C}_3$. Since $\mathcal{C}_3$ is d-closed, $C \in \mathcal{C}_3$; and as $C \notin \mathcal{C}_2^{(V)} = \mathcal{C}_3^{(V)}$, $C \in \mathcal{C}_3^{(\Lambda)}$. Hence $\mathcal{C}_2^{(\Lambda)} \subseteq \mathcal{C}_3^{(\Lambda)}$.

If $C \notin \mathcal{C}_2^{(\Lambda)}$, suppose, by way of contradiction, that $C \in \mathcal{C}_3^{(\Lambda)}$. By definition, $C \subseteq \mathcal{C}_3$ and $C \subseteq \Lambda$. Since $\mathcal{C}_3$ is d-closed, $C$ must have an incoming edge from a component $C' \in \mathcal{C}_3$. Also, by the construction of the flow network $\mathcal{H}$, $C' \cap V \neq \emptyset$, which means $C' \in \mathcal{C}_3^{(V)} = \mathcal{C}_2^{(V)} \subseteq \mathcal{C}_2$. Since $\mathcal{C}_2$ is d-closed, $C \in \mathcal{C}_2$. As $C \cap V = \emptyset$, $C \in \mathcal{C}_2^{(\Lambda)}$, which is a contradiction. Thus $C \notin \mathcal{C}_3^{(\Lambda)}$, and hence $\mathcal{C}_3^{(\Lambda)} \subseteq \mathcal{C}_2^{(\Lambda)}$.
\end{proof}

\begin{clm}
\label{clm:all_v}
For $V_1 \subseteq V$ inducing a densest subgraph in $G$, denote by $\Lambda_1$ the set of all $(h-1)$-cliques in $\Lambda$ consisting only of nodes in $V_1$. For any non-trivial component $C$, either $C \subseteq V_1 \cup \Lambda_1$ or $C \cap (V_1 \cup \Lambda_1) = \emptyset$.
\end{clm}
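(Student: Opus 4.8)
The plan is to read this off directly from the cut characterisation in Lemma~\ref{lem:densest_cut}. Since $V_1$ induces a densest subgraph in $G$, statement~(3) of that lemma, applied to $\mathcal{S} = \{s\} \cup V_1 \cup \Lambda_1$, says that in the residual graph $\mathcal{H}_{f^*}$ there is no edge from any node of $\mathcal{S}$ to any node of $V_\mathcal{H} \setminus \mathcal{S}$. In other words, $\mathcal{S}$ is closed under following residual edges forward: any directed path in $\mathcal{H}_{f^*}$ that starts inside $\mathcal{S}$ stays inside $\mathcal{S}$.

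First I would dispose of the trivial case: if $C \cap (V_1 \cup \Lambda_1) = \emptyset$, the second alternative holds and there is nothing to prove. So assume $C$ is a non-trivial component with $C \cap (V_1 \cup \Lambda_1) \neq \emptyset$, and fix $u \in C \cap (V_1 \cup \Lambda_1) \subseteq \mathcal{S}$. Next, take an arbitrary $w \in C$; strong connectivity of $C$ in $\mathcal{H}_{f^*}$ yields a directed path $u = x_0 \to x_1 \to \cdots \to x_\ell = w$ in $\mathcal{H}_{f^*}$. A straightforward induction on $i$ shows $x_i \in \mathcal{S}$ for all $i$: the base case is $u \in \mathcal{S}$, and if $x_i \in \mathcal{S}$ then the residual edge $x_i \to x_{i+1}$ cannot leave $\mathcal{S}$ by the no-outgoing-edge property, so $x_{i+1} \in \mathcal{S}$. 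Hence $w \in \mathcal{S} = \{s\} \cup V_1 \cup \Lambda_1$. Since $C$ is non-trivial it contains neither $s$ nor $t$, so $w \neq s$, which forces $w \in V_1 \cup \Lambda_1$. As $w \in C$ was arbitrary, $C \subseteq V_1 \cup \Lambda_1$, giving the first alternative.

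I do not expect a genuine obstacle here. The only point that needs a little care is noting that the forward-reachable closure of $\mathcal{S}$ in $\mathcal{H}_{f^*}$ is exactly $\mathcal{S}$ itself, which is precisely what statement~(3) of Lemma~\ref{lem:densest_cut} guarantees, and that excluding $s$ (and $t$) from a non-trivial component is what lets us rule out the spurious possibility $w = s$ in the last step. Everything else is routine bookkeeping.
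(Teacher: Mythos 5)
Your proof is correct and follows essentially the same route as the paper's: both arguments combine the strong connectivity of $C$ with statement~(3) of Lemma~\ref{lem:densest_cut} (no residual edge leaves $\mathcal{S} = \{s\} \cup V_1 \cup \Lambda_1$), the only difference being that you argue directly via forward closure while the paper phrases it as a contradiction. Your explicit handling of the exclusion of $s$ is a small but welcome extra care that the paper leaves implicit.
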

\begin{proof}
Suppose, by way of contradiction, that $C$ has nodes $u_1 \in V_1 \cup \Lambda_1$ and $u_2 \in V_\mathcal{H} \setminus (V_1 \cup \Lambda_1 \cup \{s, t\})$. Since $C$ is strongly connected, there must exist a path from $u_1$ to $u_2$ in $\mathcal{H}_{f^*}$. Then there is an edge from a node in $V_1 \cup \Lambda_1$ to one in $V_\mathcal{H} \setminus (V_1 \cup \Lambda_1 \cup \{s, t\})$, which is not possible (Lemma \ref{lem:densest_cut}).
\end{proof}

\begin{corry}
\label{cor:ic_densest}
For an independent component set $\mathcal{C}$, define $g(\mathcal{C}) = \bigcup_{C \in \mathcal{C} \cup des(\mathcal{C})} C \cap V$. Then $g$ is a bijection between the set of independent component sets and the set of densest subgraphs of $G$.
\end{corry}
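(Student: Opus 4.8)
The plan is to obtain $g$ as the composition of the two bijections established immediately above, so that essentially no new work is required. Write $g_1$ for the map $\mathcal{C} \mapsto \mathcal{C} \cup des(\mathcal{C})$ of Lemma~\ref{lem:ic_dcc}, which is a bijection from the set of independent component sets onto the set of d-closed component sets of $\mathcal{H}^C$, and write $g_2$ for the map $\mathcal{D} \mapsto \bigcup_{C \in \mathcal{D}} C \cap V$ of Lemma~\ref{lem:dcc_densest}, which is a bijection from the set of d-closed component sets onto the set of densest subgraphs of $G$. The first step will be to verify the bookkeeping identity $g = g_2 \circ g_1$: for any independent component set $\mathcal{C}$ we have $g_2(g_1(\mathcal{C})) = g_2(\mathcal{C} \cup des(\mathcal{C})) = \bigcup_{C \in \mathcal{C} \cup des(\mathcal{C})} C \cap V$, which is exactly the definition of $g(\mathcal{C})$.

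Next I would observe that the composition is legitimate, since the codomain of $g_1$ (the d-closed component sets) is precisely the domain of $g_2$; hence $g_2 \circ g_1$ is a well-defined map from the set of independent component sets to the set of densest subgraphs of $G$. A composition of two bijections is again a bijection, so $g$ is a bijection, which completes the argument.

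Consequently, all the substance of this corollary is already contained in Lemmas~\ref{lem:ic_dcc} and~\ref{lem:dcc_densest}, and the only thing to write out here is the one-line identity above together with the remark that composites of bijections are bijections. If any step warrants care it is not in the corollary but in those prerequisites --- in particular the uniqueness direction of Lemma~\ref{lem:dcc_densest}, which rests on Claim~\ref{clm:all_v} (every non-trivial strongly connected component lies entirely inside or entirely outside $V_1 \cup \Lambda_1$) and on Lemma~\ref{lem:densest_cut}'s description of which edges can leave a densest-subgraph cut in $\mathcal{H}_{f^*}$. So I do not anticipate any genuine obstacle in the corollary itself; the main point is simply to record that $g$ factors through the set of d-closed component sets.
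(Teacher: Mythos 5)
Your proposal is correct and follows exactly the paper's own argument: the paper likewise proves the corollary by composing the bijection $g_1$ of Lemma~\ref{lem:ic_dcc} with the bijection $g_2$ of Lemma~\ref{lem:dcc_densest} and observing that $g = g_2 \circ g_1$. Your additional remarks about where the real work lies (the uniqueness direction of Lemma~\ref{lem:dcc_densest} via Claim~\ref{clm:all_v} and Lemma~\ref{lem:densest_cut}) are accurate but not needed for the corollary itself.
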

\begin{proof}
Let $g_1$ and $g_2$ be the bijections in Lemmas \ref{lem:ic_dcc} and \ref{lem:dcc_densest} respectively. Clearly $g_2(g_1(\mathcal{C})) = g(\mathcal{C})$ is a bijection.
\end{proof}

Following Corollary \ref{cor:ic_densest}, we enumerate all densest subgraphs (Line \ref{line:end_enum_all} of Algorithm \ref{alg:clique_densest}) by exploring all independent component sets, which invokes Algorithm \ref{alg:enum} by taking an empty set $\emptyset$, the set of all non-trivial components in $\mathcal{H}^C$, and $V$ as inputs. Specifically, Algorithm \ref{alg:enum} recursively selects an independent component set $\mathcal{C}_1$ while storing in $\mathcal{C}_2$, the components that are independent with all components in $\mathcal{C}_1$. When a component $C$ is moved from $\mathcal{C}_2$ to $\mathcal{C}_1$, all descendants and ancestors of $C$ along with $C$ itself are removed from $\mathcal{C}_2$ and then a recursion is invoked (line \ref{line:rec}); this guarantees that  $\mathcal{C}_1$ is independent. Note that the removal of $C$ itself (line \ref{line:rem}) ensures that each independent component set, and hence each densest subgraph, is enumerated exactly once.

\section{Proof of Correctness of Algorithm \ref{alg:pattern_densest}}
\label{sec:pattern_proof}

The correctness of Algorithm \ref{alg:pattern_densest} can be proved in a way similar to that of Algorithm \ref{alg:clique_densest} (Appendix \ref{sec:clique_proof}). The main difference lies in the construction of the flow network $\mathcal{H}$, which is done using the method in \cite{FYCLX19}. For reference, its pseudocode is shown in Algorithm \ref{alg:flow_pat}. Note that this algorithm does not construct a node for every $\psi$-instance; rather, it constructs one node for each group $g$ of $\psi$-instances with the same set of nodes in $V$. Hereafter, we denote by $\Lambda'$ the set of all such $\psi$-instance groups. Since this algorithm is different from its analogue in Appendix \ref{sec:clique_proof}, so is the derivation of the value of a maximum flow in the constructed network $\mathcal{H}$. This is shown in Lemma \ref{lem:cut_pat}.

\begin{algorithm}[t]
	\renewcommand{\algorithmicrequire}{\textbf{Input:}}
	\renewcommand{\algorithmicensure}{\textbf{Output:}}
	\scriptsize
	\caption{\small Construct the flow network for pattern density \cite{FYCLX19}}
    \label{alg:flow_pat}
	\begin{algorithmic}[1]
		\REQUIRE Graph $G = (V,E)$, pattern $\psi = (V_\psi, E_\psi)$, set of $\psi$-instances $\Lambda$, parameter $\alpha$
		\ENSURE Flow network $\mathcal{H} = (V_\mathcal{H}, E_\mathcal{H}, c)$
		\STATE $V_\mathcal{H} \gets V \cup \{s, t\}$, $E_\mathcal{H} \gets \emptyset$
		\FORALL{$v \in V$}
		\STATE $E_{\mathcal{H}} \gets E_{\mathcal{H}} \cup \{(s, v), (v, t), (v, s), (t, v)\}$
		\STATE Set $c(s, v) = deg_G(v, \psi)$, $c(v, t) = | V_\psi | \alpha$, $c(v, s) = c(t, v) = 0$
		\ENDFOR
		\STATE $\Lambda' \gets$ Set of groups $g$ of $\psi$-instances in $\Lambda$ with the same node set
		\FORALL{$g' \in \Lambda$} \label{line:begin_flow_pat}
		\STATE $\lambda' \gets$ Node set of instances in $g$
		\STATE $V_\mathcal{H} \gets V_\mathcal{H} \cup \{\lambda'\}$
		\FORALL{$v' \in \lambda'$}
		\STATE $E_{\mathcal{H}} \gets E_{\mathcal{H}} \cup \{(\lambda', v'), (v', \lambda')\}$
		\STATE Set $c(\lambda', v') = |g| \left( | V_\psi | - 1 \right)$, $c(v', \lambda') = |g|$
		\ENDFOR
		\ENDFOR \label{line:end_flow_pat}
		\RETURN $(V_\mathcal{H}, E_\mathcal{H}, c)$
	\end{algorithmic}
\end{algorithm}

\begin{lem}
\label{lem:cut_pat}
For a set $V_1 \subseteq V$, the minimum value of an $s$-$t$ cut $(\mathcal{S}, V_\mathcal{H} \setminus \mathcal{S})$ of $\mathcal{H}$ having $\mathcal{S} \cap V = V_1$ is
\begin{small}
\begin{equation*}
    \min_{\mathcal{S} : s \in \mathcal{S}, t \notin \mathcal{S}, \mathcal{S} \cap V = V_1}\!\!\!\!\!\! c \left( \mathcal{S}, V_\mathcal{H} \setminus \mathcal{S} \right) = \left| V_\psi \right| \left[ \mu_\psi(G) + \left[ \rho_\psi^* - \rho_\psi(V_1) \right] |V_1| \right]
\end{equation*}
\end{small}
Moreover, one value of $\mathcal{S}$ achieving the minimum above has $\Lambda'_1 = \mathcal{S} \cap \Lambda'$ as the set of all $\psi$-instance groups in $\Lambda'$ which consist only of nodes in $V_1$.
\end{lem}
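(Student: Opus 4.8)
The plan is to adapt the argument behind Lemma~\ref{lem:cut} (itself modelled on Theorem~2 of \cite{MPPTX15}) to the flow network built by Algorithm~\ref{alg:flow_pat}; the only genuinely new feature is that a $\psi$-instance-group node $\lambda'$ is joined to its member nodes $v'\in\lambda'$ by \emph{finite}-capacity edges in both directions (capacities $|g|(|V_\psi|-1)$ and $|g|$), rather than by the one-way infinite-capacity edges used for cliques, so nothing structurally forces a group node and all of its member nodes onto the same side of a finite cut. Fix $V_1\subseteq V$ and an $s$-$t$ cut $(\mathcal{S},V_\mathcal{H}\setminus\mathcal{S})$ with $\mathcal{S}\cap V=V_1$, write $\Lambda'_1=\mathcal{S}\cap\Lambda'$, and for $g\in\Lambda'$ let $\lambda'_g\subseteq V$ be the common node set of the instances in $g$, so $|\lambda'_g|=|V_\psi|$ and $\sum_{g\in\Lambda'}|g|=\mu_\psi(G)$. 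First I would split $c(\mathcal{S},V_\mathcal{H}\setminus\mathcal{S})$ according to the four edge types of Algorithm~\ref{alg:flow_pat}: the edges $(s,v)$ with $v\notin V_1$ contribute $\sum_{v\in V\setminus V_1}deg_G(v,\psi)$; the edges $(v,t)$ with $v\in V_1$ contribute $|V_1|\,|V_\psi|\,\alpha$; the edges $(\lambda',v')$ with $g\in\Lambda'_1$ and $v'\in\lambda'_g\setminus V_1$ contribute $\sum_{g\in\Lambda'_1}|g|(|V_\psi|-1)\,|\lambda'_g\setminus V_1|$; and the edges $(v',\lambda')$ with $g\notin\Lambda'_1$ and $v'\in\lambda'_g\cap V_1$ contribute $\sum_{g\notin\Lambda'_1}|g|\,|\lambda'_g\cap V_1|$. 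Only the last two sums depend on $\Lambda'_1$.

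Next I would minimize over $\Lambda'_1$ group by group, which is legitimate since the two $\Lambda'_1$-dependent sums decompose into independent contributions over $g$. Writing $a:=|\lambda'_g\cap V_1|$ and $p:=|V_\psi|$, putting $g$ into $\Lambda'_1$ costs $|g|(p-1)(p-a)$ while leaving it out costs $|g|\,a$; since $(p-1)(p-a)-a=p(p-1-a)$, leaving $g$ out is no worse whenever $a\le p-1$, that is, whenever $\lambda'_g\not\subseteq V_1$, and is strictly worse when $a=p$, that is, when $\lambda'_g\subseteq V_1$. Hence a minimum cut is obtained by taking $\Lambda'_1=\{\,g\in\Lambda':\lambda'_g\subseteq V_1\,\}$, which is exactly the ``moreover'' claim; for this cut the $(\lambda',v')$-edges contribute $0$ (each $\lambda'_g$ with $g\in\Lambda'_1$ satisfies $\lambda'_g\subseteq V_1$), so the two $\Lambda'_1$-dependent sums collapse to $\sum_{g:\,\lambda'_g\not\subseteq V_1}|g|\,|\lambda'_g\cap V_1|$.

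Finally I would reassemble and simplify. Counting node-instance incidences gives $\sum_{v\in V\setminus V_1}deg_G(v,\psi)=\sum_{g\in\Lambda'}|g|\,|\lambda'_g\setminus V_1|=\sum_{g:\,\lambda'_g\not\subseteq V_1}|g|\,|\lambda'_g\setminus V_1|$; adding this to the collapsed $\Lambda'_1$-dependent sum and using $|\lambda'_g\setminus V_1|+|\lambda'_g\cap V_1|=|V_\psi|$ gives $|V_\psi|\sum_{g:\,\lambda'_g\not\subseteq V_1}|g|$, and adding the $(v,t)$-term $|V_1|\,|V_\psi|\,\alpha$ gives $|V_\psi|\big(\sum_{g:\,\lambda'_g\not\subseteq V_1}|g|+|V_1|\,\alpha\big)$. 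Since $\sum_{g:\,\lambda'_g\subseteq V_1}|g|=\mu_\psi(G[V_1])=\rho_\psi(V_1)\,|V_1|$ and $\sum_{g\in\Lambda'}|g|=\mu_\psi(G)$, and since the flow network invoked in Line~\ref{line:begin_enum_all_pat} of Algorithm~\ref{alg:pattern_densest} uses $\alpha=\rho_\psi^*$, this equals $|V_\psi|\big[\mu_\psi(G)-\rho_\psi(V_1)\,|V_1|+\rho_\psi^*\,|V_1|\big]=|V_\psi|\big[\mu_\psi(G)+(\rho_\psi^*-\rho_\psi(V_1))\,|V_1|\big]$, the claimed value.

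The step I expect to be the main obstacle is exactly this per-group comparison. In the clique case of Lemma~\ref{lem:cut} the capacity $c(\lambda,v)=+\infty$ makes the choice for a group node forced, so the analogous optimization is vacuous; here every assignment of a group node is feasible, and one has to examine the sign of $p(p-1-a)$ to identify the cheaper option and, in particular, to conclude that a group node belongs on the source side \emph{precisely when} its node set is contained in $V_1$. This is the place where the pattern construction genuinely diverges from \cite{CQ20}, and is the ``non-trivial addition'' mentioned in \S~\ref{sec:pattern}; the rest of the bookkeeping runs in parallel with the proof of Lemma~\ref{lem:cut}.
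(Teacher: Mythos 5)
Your proposal is correct: the decomposition of the cut capacity by edge type, the per-group optimization showing $(p-1)(p-a)\ge a$ iff $a\le p-1$ (so a group node goes to the source side exactly when its node set lies in $V_1$), and the final simplification using $\sum_{v\notin V_1}deg_G(v,\psi)=\sum_g|g|\,|\lambda'_g\setminus V_1|$ all check out, and this is precisely the argument the paper delegates to the citation of Lemma~13 in \cite{FYCLX19}. In effect you have written out in full the proof the paper only sketches by reference, with no divergence in approach.
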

\begin{proof}
Similar to that of Lemma 13 in \cite{FYCLX19}.
\end{proof}

\revise{\section{Extension of Expected Densest Subgraphs to Clique and Pattern Densities}
\label{sec:exp_ext}
The notion of expected densest subgraphs in uncertain graphs, which has been studied for edge density \cite{Zou13}, can be extended to clique and pattern densities. The following discussion is for pattern density, and hence also holds for the special case of clique density.
\begin{defn}
[Expected Pattern Density]
\label{def:exp_pattern}
Given an uncertain graph $\mathcal{G} = (V, E, p)$ and a pattern $\psi$, the expected pattern density (w.r.t. $\psi$) of $\mathcal{G}$ is the expectation of the pattern density of all possible worlds of $\mathcal{G}$. Formally,
\begin{small}
\begin{equation}
    \overline{\rho}_\psi(\mathcal{G}) = \mathbbm{E} \left[ \rho_\psi(G) \right] = \sum_{G \sqsubseteq \mathcal{G}} \Pr(G) \times \rho_\psi(G)
\end{equation}
\end{small}
\end{defn}
As shown in Theorem \ref{th:exp_pattern}, the expected pattern density of a subgraph w.r.t. any pattern can be expressed as the weighted pattern density of that subgraph, which is ratio of the sum of the weights of all pattern instances in the deterministic version of the subgraph to the number of nodes in the subgraph. Here the weight of a pattern instance is its existence probability, i.e., the product of its edge probabilities.
\begin{theor}
\label{th:exp_pattern}
Given an uncertain graph $\mathcal{G} = (V, E, p)$ and a pattern $\psi$, the expected pattern density (w.r.t. $\psi$) of a subgraph $\mathcal{G}' = (V', E', p)$ can be computed as
\begin{small}
\begin{equation}
    \overline{\rho}_\psi(\mathcal{G}') = \frac{1}{|V'|} \sum_{\omega \in \Omega} \prod_{e \in E_\omega} p(e)
\end{equation}
\end{small}
where $\Omega$ is the set of all instances $\omega = \left( V_\omega, E_\omega \right)$ of $\psi$ in the deterministic version of $\mathcal{G}'$.
\end{theor}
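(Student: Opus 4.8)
The plan is to apply linearity of expectation to the pattern count and then exploit edge independence to evaluate each resulting term. Write $\mathcal{G}' = (V', E', p)$ and note first that every possible world $G' = (V', E_{G'}) \sqsubseteq \mathcal{G}'$ shares the \emph{same} node set $V'$, since a possible world samples only edges. Hence $\rho_\psi(G') = \mu_\psi(G') / |V'|$ with the denominator constant across all possible worlds, so by Definition~\ref{def:exp_pattern} we may pull $\frac{1}{|V'|}$ out of the expectation; it then suffices to prove $\mathbbm{E}\!\left[ \mu_\psi(G') \right] = \sum_{\omega \in \Omega} \prod_{e \in E_\omega} p(e)$.

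Next I would make the key combinatorial observation: a copy of $\psi$ occurs in a possible world $G'$ if and only if \emph{all} of its edges are present in $G'$, and deleting edges can never create a new occurrence of $\psi$. Consequently, the collection of $\psi$-instances that can appear in any possible world is exactly $\Omega$, the set of $\psi$-instances in the deterministic version of $\mathcal{G}'$. For each $\omega = (V_\omega, E_\omega) \in \Omega$, define the indicator random variable $X_\omega = \mathbbm{1}\!\left[ E_\omega \subseteq E_{G'} \right]$; then, pointwise over possible worlds, $\mu_\psi(G') = \sum_{\omega \in \Omega} X_\omega$.

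Then, by linearity of expectation, $\mathbbm{E}\!\left[ \mu_\psi(G') \right] = \sum_{\omega \in \Omega} \mathbbm{E}[X_\omega] = \sum_{\omega \in \Omega} \Pr\!\left( E_\omega \subseteq E_{G'} \right)$. Because the edges of $\mathcal{G}'$ are sampled independently, the event $\{ E_\omega \subseteq E_{G'} \}$ depends only on the edges in $E_\omega$ and has probability $\prod_{e \in E_\omega} p(e)$ (the marginal of the possible-world distribution on those edges). Substituting this back and dividing by $|V'|$ yields the claimed identity, and the $h$-clique case follows immediately by taking $\psi = K_h$.

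The bookkeeping above is routine; the one point that genuinely needs care — and the step I would state explicitly — is the precise meaning of a $\psi$-instance, namely a (not necessarily induced) subgraph copy of $\psi$, so that its presence in $G'$ is determined solely by $E_\omega$ and its probability is a clean product over $E_\omega$. This is also where the argument would break for \emph{induced} pattern occurrences, which would instead require an inclusion--exclusion over the absent edges; I would flag this so the reader sees why the weighted-density formula takes exactly the stated form.
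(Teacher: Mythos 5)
Your proposal is correct and follows essentially the same route as the paper's proof: both introduce the indicator $X_\omega$ for each $\psi$-instance $\omega$ in the deterministic version of $\mathcal{G}'$, write $\mu_\psi(G') = \sum_{\omega \in \Omega} X_\omega$, and apply linearity of expectation together with edge independence to get $\mathbbm{E}[X_\omega] = \prod_{e \in E_\omega} p(e)$. Your extra remarks on why $\Omega$ exhausts all possible occurrences and on the non-induced reading of a $\psi$-instance are sensible clarifications but do not change the argument.
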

\begin{proof}
Consider a possible world $G' \sqsubseteq \mathcal{G}'$. For each $\omega \in \Omega$, let $X_\omega$ be a binary random variable denoting whether $\omega$ exists in $G'$. Then $\mathbbm{E} \left[ X_\omega \right] = \Pr \left( X_\omega = 1 \right) = \prod_{e \in E_\omega} p(e)$ and the number of $\psi$-instances in $G'$ is $\sum_{\omega \in \Omega} X_\omega$. Thus the pattern density (w.r.t. $\psi$) of $G'$ is
\begin{small}
\begin{equation*}
    \rho_\psi(G') = \frac{1}{|V'|} \sum_{\omega \in \Omega} X_\omega
\end{equation*}
\end{small}
From Definition \ref{def:exp_pattern} and the linearity of expectation, we have
\begin{small}
\begin{equation*}
    \overline{\rho}_\psi(\mathcal{G}') = \mathbbm{E} \left[ \rho_\psi(G') \right] = \frac{1}{|V'|} \sum_{\omega \in \Omega} \mathbbm{E} \left[ X_\omega \right] = \frac{1}{|V'|} \sum_{\omega \in \Omega} \prod_{e \in E_\omega} p(e)
\end{equation*}
\end{small}
\end{proof}
Since the expected pattern density is equivalent to the weighted pattern density of a particular weighted deterministic graph, the expected pattern densest subgraph in an uncertain graph can be computed using existing methods like \cite{SDCS20}.}

\balance
\bibliographystyle{IEEEtran}
\bibliography{ref}

\end{document}